\documentclass[11pt,letterpaper]{article}
\usepackage[margin=1in]{geometry}
\usepackage{setspace}
\usepackage{amsmath,amssymb,amsthm}
\usepackage{hyperref}
\usepackage{cleveref}
\usepackage{thmtools}
\usepackage{mathrsfs}
\usepackage[linesnumbered,ruled,vlined]{algorithm2e}
\usepackage{geometry}
\usepackage{graphicx}
\usepackage{enumerate}
\usepackage[colorinlistoftodos, textsize=scriptsize]{todonotes}

\newcommand{\calA}{\mathcal{A}}

\newcommand{\calB}{\mathcal{B}}
\newcommand{\calD}{\mathcal{D}}

\newcommand{\calP}[1]{\mathcal{P}\left(#1\right)}

\newcommand{\filter}{\mathcal{I}}

\newcommand{\whplty}{L^{whp}}
\newcommand{\explty}{L^{exp}}

\newcommand{\low}{\mathrm{low}_{\beta}}
\newcommand{\high}{\mathrm{high}_{\beta}}
\usepackage{xparse}  

\newcommand{\hcm}[1][1]{\hspace*{#1 cm}}

\newcommand{\istrut}[2][0]{\rule[- #1 mm]{0mm}{#1 mm}\rule{0mm}{#2 mm}}

\newcommand{\ignore}[1]{}

\newcommand{\E}{{\mathbb E\/}}

\newcommand{\floor}[1]{\left\lfloor #1 \right\rfloor}
\newcommand{\bydef}{\stackrel{\operatorname{def}}{=}}
\newcommand{\poly}{\operatorname{poly}}
\newcommand{\polylog}{\operatorname{polylog}}
\newcommand{\success}{\operatorname{succ}}

\newcommand{\ContentionResolution}{\textsf{Contention Resolution}}

\NewDocumentCommand{\Expec}{o m}{
  \mathbb{E}
  \IfValueT{#1}{_{#1}} 
  \left[ #2 \right]
}

\NewDocumentCommand{\Prob}{o m}{
  \operatorname{Pr}
  \IfValueT{#1}{_{#1}} 
  \left[ #2 \right]
}

\NewDocumentCommand{\plow}{o}{
  \IfValueTF{#1}
    {p^{\mathrm{low}(#1)}_{\mathcal D}}
    {p^{\mathrm{low}}_{\mathcal D}}
}

\NewDocumentCommand{\slow}{o}{
  \IfValueTF{#1}
    {s^{\mathrm{low}(#1)}_{\mathcal D}}
    {s^{\mathrm{low}}_{\mathcal D}}
}

\NewDocumentCommand{\lowvar}{m o}{
    #1_{\mathcal D}^{\mathrm{low}
        \IfValueTF{#2}
            {(#2)}
            {}
        }
}

\newcommand{\LocalClock}{\textsf{LocalClock}}
\newcommand{\GlobalClock}{\textsf{GlobalClock}}
\newcommand{\local}{\textsf{loc}}

\newcommand{\Bin}{\texttt{Bin}}
\newcommand{\Code}{\texttt{Code}}

\newtheorem{theorem}{Theorem}[section]
\newtheorem{lemma}[theorem]{Lemma}

\newtheorem{fact}[theorem]{Fact}

\newtheorem{conjecture}[theorem]{Conjecture}
\newtheorem{corollary}[theorem]{Corollary}
\theoremstyle{definition}
\newtheorem{definition}{Definition}[section]
\newtheorem{question}{Question}[section]

\usepackage{thmtools}
\usepackage{thm-restate}
\newtheorem*{remark}{Remark}

\title{Contention Resolution, 
With and Without a Global Clock\thanks{This work was conducted while the first three authors were visiting University of Michigan.  Supported by NSF Grants CCF-2221980 and CCF-2446604.}}

\author{\begin{tabular}{c@{\hcm[.8]}c@{\hcm[.8]}c}
  Zixi Cai & Kuowen Chen & Shengquan Du\\
 {\small IIIS, Tsinghua University}
 &
 {\small IIIS, Tsinghua University}
 &
 {\small IIIS, Tsinghua University}\\
 &\\
 Tsvi Kopelowitz & Seth Pettie & Ben Plosk\\
 {\small Bar-Ilan University}
 &
 {\small University of Michigan}
 &
 {\small Bar-Ilan University}
 \end{tabular}}

\date{}

\begin{document}

\maketitle
\thispagestyle{empty}
\begin{abstract}
In the \ContentionResolution{} problem $n$ parties each wish to have exclusive use of a shared resource for one unit of time.  A canonical example is $n$ devices that each must broadcast a packet of information on a shared channel, but the same principles apply to other distributed systems.
The problem has been studied since the early 1970s, under a variety of assumptions on feedback (collision detection, etc.) given to the parties, how the parties wake up (synchronized, adversarial, random), knowledge of $n$, and so on.  The most consistent assumption is that parties do not have access to a \emph{global clock}, only their \emph{local time} since wake-up.  
This is surprising because 
the assumption of a global clock is both 
technologically realistic and algorithmically interesting.
It enriches the problem, and opens the door to entirely new techniques.

\medskip

In this paper we explore the power of the \emph{\GlobalClock} model and establish several new complexity separations, both between \GlobalClock{} and the usual \LocalClock{} model, and within the \LocalClock{} model.  
Our primary results are:

\begin{description}
    \item[\GlobalClock{} vs.~\LocalClock.] We design a new \ContentionResolution{} protocol that 
    guarantees latency $O\left(\left(n\log\log n\log^{(3)} n\log^{(4)} n\cdots \log^{(\log^* n)} n\right)\cdot 2^{\log^* n}\right) 
    \le n(\log\log n)^{1+o(1)}$ in expectation and with high probability.  This already establishes at least a roughly-$\log n$ complexity gap between randomized protocols in \GlobalClock{} and \LocalClock.

    \item[In-Expectation vs.~With-High-Probability.] Prior analyses of randomized \ContentionResolution{} protocols in \LocalClock{} guaranteed a certain latency \emph{with high probability}, i.e., with probability $1-1/\poly(n)$.  We observe that it is just as natural to measure \emph{expected latency}, and prove a $\log n$-factor complexity gap between the two objectives for memoryless protocols.
    The In-Expectation complexity is $\Theta(n \log n/\log\log n)$ whereas the With-High-Probability latency is $\Theta(n\log^2 n/\log\log n)$.  Three of these four upper and lower bounds are new.

    \item[No Universally Optimal Protocols.] Given the complexity separation above, one would naturally want a \ContentionResolution{} protocol that is optimal under \emph{both} the In-Expectation and With-High-Probability metrics.  This is impossible!  It is even impossible to achieve In-Expectation latency $o(n\log^2 n/(\log\log n)^2)$ and With-High-Probability latency $n\log^{O(1)} n$ simultaneously. 
\end{description}
\end{abstract}

\newpage

\section{Introduction}\label{sect:intro}

In the abstract \emph{\ContentionResolution} problem there are $n$ parties, where $n$ is typically unknown, which are interested in monopolizing some shared resource for one unit of time.  In each time step, the only actions the parties can take are to \emph{idle} or \emph{grab} the shared resource.  If exactly one party attempts to grab the shared resource, it succeeds in monopolizing it for that unit of time, 
but if two or more try to grab it, they all fail.  The premier application of abstract \ContentionResolution{} is to facilitate $n$ devices to access a shared communications channel, who each want to transmit a packet of information. 
This application may allow for explicit coordination between the parties, e.g., if they are allowed to transmit extra information (beyond their packets), which can be received by other parties.  
The umbrella of \ContentionResolution{} captures dozens of distinct algorithm design problems, depending on the modeling assumptions 
and metrics of efficiency.  
Let us highlight the key choices that must be made in fixing a model.

\begin{description}
\item[Wake-Up times.] The parties may be woken up synchronously~\cite{BenderFG06}, 
or at times chosen by an adversary, 
or via a statistical process, typically a Poisson point process~\cite{Gallager78,Capetanakis79,TsybakovM81,MoselyH85}.
In this paper we assume adversarial wake-up times.

\item[Feedback.] It is always assumed that every party that attempts to grab the shared resource immediately perceives whether it is successful or not.  
Protocols for which this is the \emph{only} feedback are called \emph{acknowledgment based}.  
In general, the system may provide 
ternary $\{0,1,2+\}$-feedback (aka \emph{collision detection})~\cite{Gallager78,Capetanakis79,TsybakovM81,MoselyH85,GreenbergFL87,BenderKPY18,ChangJP19} 
to all parties indicating that zero, one, or at least two parties attempted to grab the resource, 
or binary $\{0/2+, 1\}$-feedback~\cite{AntaMM13,BenderKKP20,MarcoKS22} indicating 
only failure or success.\footnote{In the context of a shared communication channel, 
some protocols assume that $O(\log n)$ bits can be transmitted in the case of success (i.e., ``1''); see~\cite{BenderKPY18}.}
We work exclusively with acknowledgment-based protocols.

\item[Deterministic vs.~Randomized.] When parties have unique IDs in the range $[N]$, $n\leq N$, 
it is possible to solve \ContentionResolution{} deterministically, 
in time that depends on $n$ and $N$.
See \cite{KomlosG85,Khasin89,ClementiMS01,ChlebusGKR05,MarcoK15} and the references therein.
In contrast, instantiations of randomized algorithms 
are identical, and break symmetry via locally flipping coins. 
Following most work in the area, this paper assumes identical randomized parties.

\item[Finite vs.~Unbounded.] In the $\{0,1,2+\}$- and $\{0/2+,1\}$-feedback models, there exist protocols~\cite{BenderFGY19,ChangJP19,BenderKKP20} that run infinitely, 
without deadlock.  
In this setting \emph{throughput} is the main measure of efficiency.\footnote{I.e., the 
maximum long-term rate of new party wake-ups that can be sustained in such a way that the \emph{backlog} is bounded,
infinitely often.}  Acknowledgment-based protocols are typically analyzed under the assumption that 
$n$ is finite, and usually unknown, where the main metric is \emph{latency} as a function of $n$. 
This is the gap between a party's wake-up time and the time it monopolizes the shared resource.  Our results only consider the finite setting.

\item[Adversarial Power.]
If the wake-up times of the parties are chosen by an adversary, one can still distinguish 
between an \emph{oblivious adversary}, which chooses $n$ and all wake-up times at the beginning,
and an \emph{adaptive adversary}, which decides how many parties to wake up in each time step
based on the state and history of all parties.
Our results hold under the weakest assumption: an adaptive adversary when proving upper bounds, and an oblivious adversary when proving lower bounds.

\item[Global vs. Local Clocks.] Let $t$ be the current time and $t_u$ be the time when $u$ was woken up.  The vast majority of prior work assumes what we call the \emph{\LocalClock} model, in which $u$ perceives only $t-t_u$.  In the \GlobalClock{} model~\cite{GoldbergMPS00,MarcoK15} $u$ perceives both $t$ and $t-t_u$, and can grab the resource as a function of $(t,t-t_u)$, or with a probability that depends on $(t,t-t_u)$.
\end{description}

This list captures only a subset of models from the literature, and leaves out recent work on \emph{resilient} protocols~\cite{BenderFGY19,ChangJP19,BenderFGKY24},
that work even against an adversary that can ``jam'' the shared channel,
as well as \emph{energy efficient} protocols~\cite{BenderKPY18,BenderFGKY24}, that are awake
(receive feedback $\{0,1,2+\}$ or $\{0/2+,1\}$) for a small number of time slots.

\medskip

\paragraph{Maximally Efficient Protocols.} Perhaps the most important take-away message from prior work is that protocols achieving \emph{constant throughput} in the unbounded setting or \emph{linear latency} in the finite setting are only known to be possible in a few situations:

\begin{itemize}
    \item Constant throughput/linear latency is possible in the \LocalClock{} model under a Poisson or adversarial wake-up schedule, as long as $\{0,1,2+\}$- or $\{0/2+,1\}$-feedback is given~\cite{Capetanakis79,MoselyH85,TsybakovM78,TsybakovM81,BenderFGY19,BenderKPY18,BenderKKP20,ChangJP19,BenderFGKY24,MarcoKS22,GreenbergFL87}.

    \item Constant throughput is possible in the acknowledgment-based \GlobalClock{} model if parties are woken up according to a Poisson point process with rate less than $1/e$~\cite{GoldbergMPS00}.
    Goldberg et al.'s~\cite{GoldbergMPS00} protocol makes no latency guarantees.
    
    \item If the $n$ parties ($n$ unknown) wake up synchronously,
    $O(n)$ latency is possible w.h.p.~via the \emph{sawtooth} protocol of Bender et al.~\cite{BenderFHKL04}, which is acknowledgment-based.
    \item If the $n$ parties agree on an approximation $N=\Theta(n)$, 
    then $O(n)$ latency is possible in the \LocalClock{} model via the acknowledgment-based 
    protocol of De Marco and Stachowiak~\cite{MarcoS17}, even under 
    adversarial wake-up times.
\end{itemize}

Because acknowledgment-based protocols are
limited in their ability to perceive contention, 
they seem to be incapable of achieving constant throughput (unbounded) or linear latency (finite) 
under an \emph{adversarial} wake-up schedule.
De Marco and Stachowiak~\cite{MarcoS17} 
proved this in the finite setting and \LocalClock{} model, specifically that against 
an oblivious adversary, the maximum latency of 
some party is $\Omega(n\log n/\log^2\log n)$ with high probability.
In the unbounded setting and \LocalClock{} model, 
Goldberg and Lapinskas~\cite{GoldbergL25} proved that outside of a tiny class of ``LCED'' protocols,\footnote{LCED: \emph{largely constant with exponential decay}} all memoryless\footnote{AKA \emph{backoff}-type protocols, meaning the behavior of a party depends on its local clock, 
not its past actions.} 
acknowledgment-based protocols 
are \emph{unstable} under Poisson arrivals, 
for any rate $\lambda>0$. I.e., with probability 1 the throughput eventually 
goes to 0 and the backlog of unsuccessful parties goes to $\infty$.
In light of these results, we ask the natural question:

\begin{question}\label{q:big-question}
    What are the \underline{\emph{minimal additional assumptions}} 
    necessary to allow 
    a randomized 
    acknowledgment-based protocol
    to achieve linear latency in the finite setting, or
    constant throughput in the unbounded setting?
\end{question}

It is an open question whether assuming the
\GlobalClock{} model \emph{alone} is sufficient to answer \Cref{q:big-question}, that is, if the wake-up schedule is adversarial, whether there is a randomized \GlobalClock{} protocol that achieves linear latency in the finite setting or constant throughput in the unbounded setting.
We are aware of only two previous protocols that assumed global clocks: Goldberg et al.'s~\cite{GoldbergMPS00} constant throughput 
protocol under (non-adversarial) Poisson wake-ups,
and a deterministic protocol of De Marco and Kowalski~\cite{MarcoK15}, where parties have unique IDs in $[N]$.

\subsection{New Results}

In this paper we work exclusively with randomized, acknowledgment-based protocols,
and consider protocols in both the \LocalClock{} and \GlobalClock{} models.
Recall that \emph{acknowledgment-based} means the parties receive \emph{no feedback} except at the moment of their own success.
 All of our algorithms are \emph{memoryless}, meaning their behavior depends only on the clocks (local or local and global), not on the history of their behavior, and our lower bounds apply to memoryless algorithms.  
 (This is a common assumption in the literature.  Goldberg and Lapinskas call these \emph{backoff-type} algorithms; Bender et al.~\cite{BenderFHKL05} call them \emph{Bernoulli} algorithms.) 
 
\paragraph{The \texorpdfstring{\GlobalClock{}}{GlobalClock} Model.}
We give the first randomized 
\ContentionResolution{} protocol for adversarial wake-ups 
that exploits the \GlobalClock{} model.  
\begin{theorem}\label{thm:global-clock}
    There is an acknowledgment-based \ContentionResolution{} protocol in the \GlobalClock{} model that achieves latency $O(n \zeta(4\log\log n)) = n(\log\log n)^{1+o(1)}$ with high probability, where
    \[
    \zeta(x) = (2x)(2\log x)(2\log^{(2)} x)\cdots(2\log^{(\log^* x)} x).
    \]
\end{theorem}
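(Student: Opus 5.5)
The plan is to use the global clock to impose a shared, hierarchical time structure on all parties, so that a party's behavior depends on where the global time falls in a recursive schedule, not only on its age. The $\log^*$-fold product in $\zeta$ suggests a recursion. At level $0$ the protocol is a standard memoryless backoff in the \LocalClock{} model, with latency $O(n\log n)$-ish. Each higher level uses the global clock to reduce a ``$\log x$'' overhead to a ``$\log\log x$'' overhead, paying a constant factor $2$ per level (hence the $2^{\log^* n}$ term). Concretely, global time is partitioned into epochs of geometrically increasing length. Each epoch is split into sub-slots dedicated to guesses of the current contention, so that every active party eventually participates in a phase whose transmission probability is tuned to within a constant factor of $1/(\text{number of active parties})$. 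Because all parties see the same global time, they synchronize on which phase is active. This mimics the synchronized-wakeup sawtooth protocol of Bender et al.~\cite{BenderFHKL04}, whose $O(n)$ bound we would reuse as a subroutine on each level.

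The key steps, in order, are as follows.

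\textbf{(1) Interleaving.} Define a family of protocols $P_k$ on global time, where $P_k$ interleaves (by global-time residue classes, each getting a constant fraction of slots) a ``guess-size'' layer with an instance of $P_{k-1}$. The guess-size layer only needs to estimate $\log$ of the current backlog.

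\textbf{(2) Potential argument.} Prove by induction on $k$ that if at most $m$ parties are alive during a window of length $L_k(m)=m\cdot(2x)(2\log x)\cdots(2\log^{(k)}x)$, with $x=4\log\log m$, then each of them succeeds within the window w.h.p. The argument is a potential argument against an adaptive adversary. In any window where the sum of transmission probabilities is $\Theta(1)$ in a constant fraction of slots, successes occur at constant rate. The hierarchical schedule guarantees that, for the right level, the sum lies in this range for at least a $1/\zeta$-fraction of slots.

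\textbf{(3) From short windows to all $n$.} Account for parties woken adversarially late, which perturb sums only by bounded amounts per level. Then convert the per-window guarantee into the overall latency bound $O(n\zeta(4\log\log n))$ with high probability, using Chernoff bounds on successes across slots. The base of the recursion, $k=\log^* x$, is where $\log^{(k)}x=O(1)$, so the product is exactly $\zeta$.

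The main obstacle is step (2): controlling the adaptive adversary, who can inject parties exactly when a phase is tuned to a small population and so drive the contention sum far above $1$. I would first try to show that each level can only lose a factor $2$ in the good-slot fraction. Overload at a coarse level would be absorbed by choosing, at each level, the transmission probability as a decreasing function of the party's age relative to the global epoch start. This would make contention sums self-correcting: old parties back off and new ones start low. Tightening this to a factor $2$ per level, rather than a constant larger than $2$ (which would still give $n(\log\log n)^{1+o(1)}$), is where I expect the real calculation to lie.
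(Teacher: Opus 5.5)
You have a genuine gap. Your instinct to use the global clock to synchronize guesses about the contention is right. However, the proposal does not yet contain a concrete protocol or argument, and the scaffolding around it does not hold up.

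First, the recursion is incoherent. At $k=0$ your inductive claim asserts window length $L_0(m)=2xm=O(m\log\log m)$. That is \emph{stronger} than the theorem you are proving. It also contradicts your description of level~$0$ as \LocalClock{} backoff with latency around $n\log n$: no acknowledgment-based \LocalClock{} protocol beats $n\log n/(\log\log n)^2$ w.h.p. Moreover, $L_k$ gains a factor with each $k$, so higher levels lengthen the window rather than reduce overhead. The iterated-logarithm product in $\zeta$ is not the trace of $\log^*$ levels of recursion.

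Second, in the acknowledgment-based model no party can \emph{estimate} the backlog. The sawtooth protocol's $O(n)$ analysis also depends on batch arrivals, which is exactly what the adversary breaks.

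Third, step (2) leaves open the crux: why the adversary cannot keep the contention out of the good range. Even a constant rate of \emph{some} successes does not show that a \emph{fixed} party $u$ succeeds with probability $1-n^{-c}$.

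The paper's argument is flat, not recursive. Each party grabs with probability $\min\{1/2,\,2^{a'(t)}/(t-t_u)\}$. This is BEB's age-based probability times a globally synchronized correction. Here $a(t)$ is the integer whose Elias $\omega$-codeword is a prefix of the reversed binary expansion of $t$, and $a'(t)$ folds $a(t)$ onto $\mathbb{Z}$. Every exponent in $\{0,\pm1,\ldots,\pm k\}$ then occurs in every window of $2^{|\Code(2k+1)|}\le\zeta(2k+1)$ slots. This period is the sole source of $\zeta$ and of the $2^{\log^* n}$ factor. Its argument is about $4\log\log n$ because the BEB contention $\tau(t)=\sum_{u\in A[t]}1/(t-t_u)$ only needs correcting by factors $2^{\pm(2\log\log n+O(1))}$. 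No estimation is needed, since the schedule is oblivious.

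Cut $(t_u,t_u+L]$ into blocks of width $\zeta(2\lambda+1)$ with $\lambda=\lceil 2\log\log n+\log 8c\rceil$, and count four block types:
\begin{itemize}
\item Blocks containing a wake-up number at most $n$. This is all the handling adversarial arrivals need.
\item Heavy blocks ($\tau>8c\log^2 n$) number $O(n\log L/\log^2 n)$, because $\sum_t\tau(t)=O(n\log L)$.
\item Normal blocks each produce some success with constant probability. Since only $n$ successes can occur, there are $O(n)$ of them w.h.p.
\item All remaining blocks are light. In each, the slot with $a'(t)=\lceil 2\log\log n+\log 4c\rceil$ gives $u$ itself success probability at least $2c\log^2 n/L$.
\end{itemize}
With $L=\gamma n\zeta(2\lambda+1)$ there are at least $L/(2\zeta(2\lambda+1))$ light blocks. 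So $u$ fails with probability at most $\exp(-c\log^2 n/\zeta(2\lambda+1))\le n^{-c}$. Your sketch is missing this per-party amplification in light blocks, together with the counting of the other three block types.
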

The peculiar $\zeta$ function is derived from 
Elias's $\omega$-code~\cite{Elias75} 
for the integers.
\Cref{thm:global-clock} and the lower bound of De Marco, Kowalski, and Stachowiak~\cite{demarco2022timeenergyefficientcontention}
establish a complexity separation between 
\GlobalClock{} and \LocalClock.
\begin{corollary}
    \GlobalClock{} is \emph{strictly} more powerful than \LocalClock, in the sense that acknowledgment-based 
    \ContentionResolution{} protocols can have latency $n(\log\log n)^{1+o(1)}$ in the former (\Cref{thm:global-clock}), but must have latency $\Omega(n\log n/(\log\log n)^2)$ in the latter~\cite{MarcoS17,demarco2022timeenergyefficientcontention}, with high probability.
\end{corollary}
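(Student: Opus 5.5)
The corollary combines \Cref{thm:global-clock} with the cited \LocalClock{} lower bound. The plan has three parts: show that \GlobalClock{} is at least as powerful as \LocalClock, show that it is strictly more powerful, and check that the two results use compatible models and success notions. The only thing proved here is the comparison of growth rates. The substantive ingredients, the upper bound of \Cref{thm:global-clock} and the lower bound of~\cite{MarcoS17,demarco2022timeenergyefficientcontention}, are taken as given.

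\textbf{Containment.} Every acknowledgment-based \LocalClock{} protocol is also a \GlobalClock{} protocol. A party in the \GlobalClock{} model perceives the pair $(t,t-t_u)$, so it can simply ignore $t$ and run any \LocalClock{} strategy. Its behavior, and hence its latency distribution, is unchanged under every wake-up schedule. So \GlobalClock{} is at least as powerful.

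\textbf{Strictness.} \Cref{thm:global-clock} gives a \GlobalClock{} protocol with latency $O(n\zeta(4\log\log n))$ w.h.p. First I would record why $\zeta(x)=x^{1+o(1)}$. We have $\zeta(x)=2x\cdot\prod_{i\ge1}2\log^{(i)}x$. The product has $\log^* x$ factors, each at most $2\log x$, so it is $(2\log x)^{O(\log^* x)}=x^{o(1)}$. A sharper estimate shows the product is $2^{O(\log^* x)}\cdot\log x\cdot\log\log x\cdots$, but the crude bound suffices. Substituting $x=4\log\log n$ gives latency $n(\log\log n)^{1+o(1)}$.

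On the other side, every acknowledgment-based \LocalClock{} protocol has latency $\Omega(n\log n/(\log\log n)^2)$ w.h.p.~\cite{MarcoS17,demarco2022timeenergyefficientcontention}. Comparing the two,
\[
\frac{n\log n/(\log\log n)^2}{n(\log\log n)^{1+o(1)}}=\frac{\log n}{(\log\log n)^{3+o(1)}}\to\infty .
\]
So for all sufficiently large $n$, the \GlobalClock{} protocol of \Cref{thm:global-clock} beats every \LocalClock{} protocol, which gives strictness.

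\textbf{Main obstacle: compatibility of the two results.} This is the only delicate step, because the separation is meaningful only if both statements live in the same model.
\begin{itemize}
\item \emph{Adversary.} The upper bound holds against an adaptive adversary, and the lower bound uses only an oblivious one. An oblivious adversary is a special case of an adaptive one, so the \GlobalClock{} protocol's guarantee also holds against the schedule used in the lower bound.
\item \emph{Protocol class.} I would check that the cited lower bound covers all randomized acknowledgment-based \LocalClock{} protocols with unknown $n$, not only memoryless ones. If it were restricted to memoryless protocols, nothing breaks, because the protocol of \Cref{thm:global-clock} is memoryless, as all of the paper's algorithms are. The separation would then be stated within memoryless protocols.
\item \emph{Success notion.} Both bounds are with-high-probability statements about the maximum latency. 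The upper bound says latency is $O(\cdot)$ with probability $1-1/\poly(n)$. The lower bound says some party's latency is $\Omega(\cdot)$ with high probability. These events are incompatible for large $n$ once the growth rates are separated as above.
\end{itemize}
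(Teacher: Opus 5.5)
Your route is the paper's own: the paper justifies the corollary only by juxtaposing \Cref{thm:global-clock} with the cited lower bound. Your containment remark, the estimate $\zeta(x)=x^{1+o(1)}$, and the ratio $\log n/(\log\log n)^{3+o(1)}\to\infty$ are all correct.

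The step that fails is the quantifier in ``for all sufficiently large $n$, the \GlobalClock{} protocol of \Cref{thm:global-clock} beats every \LocalClock{} protocol.'' The same problem affects ``incompatible for large $n$'' in your last bullet.

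No threshold uniform over protocols can exist. For any single value of $n$, you can hard-code $N=n$ into the known-$N$ protocol of De Marco and Stachowiak mentioned in \Cref{sect:intro}. That gives a \LocalClock{} protocol that is linear at that $n$.

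The cited bound also does not give the weaker per-protocol ``eventually'' version. As Appendix~\ref{sect:asymptotic-notation-controvery} stresses, it is an $\Omega_{\mathrm{AHU}}$ statement: it says only that each fixed protocol is slow for infinitely many $n$. The paper explicitly notes that a \LocalClock{} scheme with latency $O(n\log\log\log n)$ whenever $n=2^{2^{2^{2^k}}}$ is not ruled out. On those $n$ such a scheme would beat $n(\log\log n)^{1+o(1)}$.

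The repair is to state strictness per protocol, in the ``negation of $o$'' form. Your ratio computation shows $n\zeta(4\log\log n)=o(n\log n/(\log\log n)^2)$. So any \LocalClock{} protocol $\mathcal{D}$ with $L^{\mathrm{whp}}_{\mathcal{D}}(n)=O(n\zeta(4\log\log n))$ would have $L^{\mathrm{whp}}_{\mathcal{D}}(n)=o(n\log n/(\log\log n)^2)$, contradicting the lower bound.

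Equivalently, every \LocalClock{} protocol is beaten by the protocol of \Cref{thm:global-clock} on infinitely many $n$, where those $n$ depend on the protocol. That is exactly the separation the corollary asserts. To compare at the $n^{-2}$ threshold that defines $L^{\mathrm{whp}}$, take $c\ge 2$ in \Cref{thm:global-clock-latency}.
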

We believe that \Cref{thm:global-clock} just scratches the surface in terms of new techniques, and that \emph{linear latency} may be within reach.  \Cref{conj:global-clock} lays out what we think should be possible in this model.

\begin{conjecture}\label{conj:global-clock}
Consider the class of randomized acknowledgment-based protocols in the \GlobalClock{} model.
\begin{enumerate}
\item In the finite setting, there exists a 
        protocol with latency $O(n)$ with high probability against an adaptive adversary.
\item In the unbounded setting, there exists a 
protocol that utilizes the shared resource at some rate $\lambda$,\footnote{Here we have in mind the definition of utilization implicit in the analysis of~\cite{ChangJP19}.  Utilization $\lambda$ is achieved if (1) there is a potential function $\Phi$ such that waking up $n'$ parties increases $\Phi$ 
by at most $n'/\lambda$,  
(2) if $\Phi = \Omega(1)$ is sufficiently large, executing the protocol for another time step decreases $\Phi$ 
by $1+\epsilon$ in expectation, $\epsilon>0$, 
and (3) if $\Phi=O(1)$ is bounded, 
the number of 
still-unsuccessful parties left in the system is also $O(1)$.
This definition does not constrain the adversary, e.g., 
by limiting the wake-up rate to 
$\lambda$ over certain windows 
of time~\cite{BenderFHKL05}.} 
even against an adaptive adversary.
\end{enumerate}
\end{conjecture}

\paragraph{The \texorpdfstring{\LocalClock{}}{LocalClock} Model.}
Prior work in the traditional \LocalClock{} model proved that a certain latency $L$ is achieved \emph{with high probability}~\cite{BenderFHKL05,MarcoS17,demarco2022timeenergyefficientcontention}.
It is just as natural to seek protocols that guarantee 
a certain latency \emph{in expectation}.
We establish asymptotically sharp bounds on the 
latency of memoryless protocols under both metrics, 
which reveals a complexity separation between these two 
objective functions.  
Three of the four bounds of \Cref{thm:LocalClock-exp-whp} are new.

\begin{theorem}[In-Expectation vs.~With-High-Probability]\label{thm:LocalClock-exp-whp}
Consider the class of memoryless, acknowledgment-based \LocalClock{} protocols. 
    \begin{enumerate}
        \item There exists a protocol with latency $O(n\log^2 n/\log\log n)$ 
        \textbf{with high probability}, even aginst an adaptive adversary. See De Marco, Kowalski, 
        and Stachowiak~\cite{MarcoS17,demarco2022timeenergyefficientcontention}
        and \Cref{sect:upper-bounds}.
        \item There exists a protocol with latency $O(n\log n/\log\log n)$ \textbf{in expectation}, even against an adaptive adversary.   See \Cref{sect:upper-bounds}.
        \item No protocol has latency $o(n\log^2 n/\log\log n)$ \textbf{with high probability}, even against an oblivious adversary.\footnote{Is saying $g$ cannot be $o(f)$ the same as saying $g=\Omega(f)$?  It depends on who you ask.  See Appendix~\ref{sect:asymptotic-notation-controvery}.}
 See \Cref{sect:local-clock-lower-bounds}.
        \item No protocol has latency $o(n\log n/\log\log n)$ \textbf{in expectation}, even against an oblivious adversary.  See \Cref{sect:local-clock-lower-bounds}.
    \end{enumerate}
\end{theorem}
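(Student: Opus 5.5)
The theorem has four parts. Part~1 is essentially the analysis of De Marco, Kowalski and Stachowiak, so I would only adapt it. The new work is the in-expectation upper bound (Part~2) and the two lower bounds (Parts~3 and~4).

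Throughout, a memoryless acknowledgment-based \LocalClock{} protocol is a sequence $(p_t)_{t\ge 1}$: a party whose local time is $t$ grabs with probability $p_t$, independently of everything else. For a configuration of live parties at global time $\tau$, the \emph{contention} is $C(\tau)=\sum_{u} p_{\tau-t_u}$, summed over live parties $u$. A given party $u$ succeeds at time $\tau$ with probability $p_{\tau-t_u}\prod_{v\ne u}(1-p_{\tau-t_v})$, which is at least $p_{\tau-t_u}e^{-O(C(\tau))}$ whenever all $p\le 1/2$.

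\textbf{Upper bounds (Parts 1 and 2).}
\begin{enumerate}
\item I would use a sequence of the form $p_t=\Theta\bigl(\frac{\ln t}{t\ln\ln t}\bigr)$, capped at $1/2$, which is the De Marco--Stachowiak shape. For the with-high-probability version I would multiply $p_t$ by an extra $\Theta(\log t)$ factor.
\item The key accounting uses the fact that each party's cumulative probability $S(T)=\sum_{t\le T}p_t=\Theta(\ln^2 T/\ln\ln T)$ is finite over any horizon. Summing contention over the window $[0,T]$ gives total contention $\sum_{\tau\le T}C(\tau)\le n\,S(T)$.
\item Fix $\theta=\Theta(\ln\ln n)$. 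At most $nS(T)/\theta$ time steps are ``heavy'', i.e.\ have $C(\tau)\ge\theta$.
\item On light steps, each live party $u$ succeeds with probability at least $p_{\tau-t_u}e^{-\theta}$. The exponent and the $\ln\ln$ in $p_t$ are tuned so that, for $T=\Theta(n\log n/\log\log n)$, a party that sees mostly light steps accumulates constant success mass. The expected latency is then controlled by a geometric-tail argument over doubling windows of local time.
\item For Part~2, note that a tagged party experiences at most $nS(T)/\theta$ heavy steps, which is $o(T)$ for the chosen parameters. This conclusion does not depend on how the adversary places wake-ups, so it holds even against an adaptive adversary.
\item For Part~1, use the boosted sequence and require failure probability $n^{-c}$. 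This costs one extra $\log n$ factor, giving $O(n\log^2 n/\log\log n)$.
\end{enumerate}

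\textbf{Lower bounds (Parts 3 and 4).}
\begin{enumerate}
\item Fix an arbitrary sequence $(p_t)$ and a target latency $L=o(n\log n/\log\log n)$, or $L=o(n\log^2 n/\log\log n)$ for Part~3.
\item Split into cases according to the cumulative mass $S(L)$.
\item \emph{Small mass.} If $S(L)$ is small (below roughly $\ln n$ for whp, or below a constant-scale threshold for expectation), then even a solitary party fails within $L$ steps with probability $e^{-S(L)}$. That already breaks the guarantee.
\item \emph{Large mass.} Then there are many scales $2^j\le L$ on which $\sum_{t\in[2^j,2^{j+1})}p_t$ is large. The oblivious adversary wakes batches of about $n/\log n$ parties, staggered across these scales, so that the resulting contention profile keeps $C(\tau)\gtrsim\ln\ln n$ throughout most of $[0,L]$.
\item Under that contention, each party's success mass is multiplied by $e^{-C(\tau)}$. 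I would then show that a constant fraction of parties, or for Part~3 at least one party with probability $\ge 1/\poly(n)$, remains unsuccessful past $L$.
\item A useful extra ingredient is that successes remove parties only one per step. So over $L$ steps at most $L$ parties leave, and contention cannot drop faster than $\sum p$ over the remaining parties.
\end{enumerate}

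\textbf{Main obstacle.} I expect the hardest step to be the large-mass case of the lower bound: building, from an arbitrary $(p_t)$, an oblivious wake-up schedule that keeps contention $\Theta(\log\log n)$ for long enough. The difficulty is that $p_t$ may be irregular, concentrated on sparse times rather than smooth. My first attempt would be a dyadic-scale pigeonhole argument: pick scales where the mass is large, and choose batch sizes inversely proportional to the per-step mass on those scales. I would then bound the surviving contention via a martingale or second-moment estimate on the number of successes. This step is exactly where the $\log\log n$ denominator must appear, as the balance point between $e^{-C}$ suppression and the $n\log n$ budget.
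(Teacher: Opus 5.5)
\textbf{Upper bounds.} Your accounting does not close. With $S(T)=\Theta(\ln^2 T/\ln\ln T)$ and $\theta=\Theta(\ln\ln n)$, the heavy-step budget is $nS(T)/\theta=\Theta(n\ln^2 n/(\ln\ln n)^2)$. That exceeds $T=\Theta(n\ln n/\ln\ln n)$ by a factor $\ln n/\ln\ln n$, so it is not $o(T)$. The problem is structural. If the tagged party only gets $p_{\tau-t_u}e^{-\theta}$ on every non-heavy step, then constant mass per dyadic scale needs $t\,p_t\gtrsim e^{\theta}$. Hence $S(T)\gtrsim e^{\theta}\ln T$, and the heavy budget is $\gtrsim n e^{\theta}\ln T/\theta\ge e\,n\ln T$; a two-regime analysis can at best reproduce BEB's $O(n\log n)$.

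The missing idea is a third regime controlled by a \emph{global success budget}. Split the non-heavy steps into two kinds:
\begin{itemize}
\item steps with $\sigma[t]\le 1$, where $u^*$ succeeds with probability $\ge p(t-t_{u^*})/4$, with no $e^{-\theta}$ loss;
\item steps with $1<\sigma[t]\le\frac14\log_2\log_2 n$, where \emph{some} party succeeds with probability $\ge\sigma 4^{-\sigma}\ge\frac{\log_2\log_2 n}{4\sqrt{\log_2 n}}$. Since at most $n-1$ other parties can ever succeed, there are only $O(n\sqrt{\log n}/\log\log n)$ such steps w.h.p.
\end{itemize}
Only steps with $\sigma[t]>\frac14\log_2\log_2 n$ are charged to the contention budget. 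The paper packages this as a counter game, which is also what makes the adaptive-adversary claim rigorous. It uses $p(j)=\Theta(1/j)$ for Part~2 and $p(j)=\Theta(\log j/j)$ for Part~1. The latter, not $\Theta(\log t/(t\log\log t))$, is the De Marco--Stachowiak shape.

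Your sequences carry an extra $\log t/\log\log t$ factor of mass and cannot achieve the claimed bounds. The paper's restricted-window bound $\tilde T_{\calD}(n)\gtrsim n\,(s^{\mathrm{low}_\beta}(\lfloor n/\ln^2 n\rfloor)-s^{\mathrm{low}_\beta}(\lfloor\sqrt n\rfloor))/\ln\ln n$ would force expected latency $\Omega(n\log^2 n/(\log\log n)^2)$ for your Part~2 protocol. It would force w.h.p.\ latency $\Omega(n\log^3 n/(\log\log n)^2)$ for your Part~1 protocol.

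\textbf{Lower bounds.} Your plan argues at a single $n$, and at a single $n$ the statement is false. The memoryless protocol $p_t\equiv 1/(2n_0)$ keeps contention $\le 1/2$ when $n=n_0$. Every party then succeeds in each step with probability $\ge 1/(4n_0)$, giving expected latency $O(n_0)$ and w.h.p.\ latency $O(n_0\log n_0)$. Correspondingly, ``large $S(L)$ implies many dyadic scales with large mass'' does not follow. Mass located near local time $L$ is useless to the adversary, since parties woken uniformly in $[0,T]$ contribute only about $(n/T)\,s(t)$ at time $t$.

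What you are missing is a self-reduction across instance sizes, which exploits the fact that $n$ is unknown to the protocol. The paper proves two coupled inequalities:
\begin{itemize}
\item (a) the restricted-window bound above;
\item (b) $s^{\mathrm{low}_\beta}(L_{\calD}(n,q))-s^{\mathrm{low}_\beta}(\tilde T_{\calD}(n))\ge\frac14\ln(1/q)$, obtained by letting a tagged party survive to $\tilde T$ and then blocking its few high-probability slots.
\end{itemize}
Let $U(n)=n\ln(1/q)\ln n/\ln\ln n$. Choose $N$ minimal with $\tilde T(N)/U(N)<C_0$, so that $\tilde T(n)\ge C_0U(n)$ on $[\sqrt N,N)$. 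Applying (b) along a chain $N_1>N_2>\cdots$ whose windows $(C_0U(N_i),U(N_i)]$ are disjoint inside $(\sqrt N,N/\ln^2 N]$ yields $\Omega(\ln N/\ln(2/C_0))$ disjoint chunks, each of mass $\Omega(\ln(1/q))$. Feeding this into (a) gives $\tilde T(N)\ge C_0U(N)$, a contradiction.

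Two further tools are absent from your plan:
\begin{itemize}
\item The filter $\mathrm{low}_\beta^{(\ge\sqrt n)}$ serves two purposes. In the Chernoff step its summands are $\le B_\beta(\sqrt n)$, so the failure probability is $e^{-\Omega(\sqrt n/\polylog n)}$. It also makes each success remove at most $B_\beta(\sqrt n)$ of future contention; your ``one departure per step'' observation gives nothing if a departing party carries contention near $1$.
\item High-probability slots are blocked by waking $\lceil 4\ln n/\eta\rceil$ parties one step earlier. This also repairs your small-mass case, where $\prod_t(1-p_t)$ can be $0$ rather than $e^{-S(L)}$.
\end{itemize}
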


Given the complexity separation between 
\textbf{In-Expectation} and \textbf{With-High-Probability} 
metrics, the natural followup question is: can we at least 
run \emph{one} protocol that is optimal under \emph{both} metrics?  In most
models of computation the answer would obviously be \emph{yes}:
just run two optimal protocols in parallel/interleaved.  
There is no such generic interleaving method in the \LocalClock{} model,
and in fact it is \emph{impossible} to simultaneously achieve 
optimality under both metrics.  \Cref{thm: tradeoff between exp-latency and whp-latency without a global clock} shows something even stronger, 
that there is essentially no interesting trade-off possible between 
\textbf{In-Expectation} and \textbf{With-High-Probability} guarantees.

\begin{restatable}{theorem}{TradeoffThm}\label{thm: tradeoff between exp-latency and whp-latency without a global clock}
No acknowledgment-based memoryless \ContentionResolution{} protocol in the \LocalClock{} model can simultaneously 
guarantee both $o(n\log^2 n/\log^2 \log n)$ 
latency in expectation and $n\log^{O(1)} n$ latency 
with high probability.
\end{restatable}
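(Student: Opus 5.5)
A memoryless acknowledgment-based \LocalClock{} protocol is a sequence $p_1,p_2,\ldots$, where $p_i$ is the probability that a party grabs the resource at local time $i$. I will argue by contradiction: suppose a protocol has expected latency $L_{\exp}(n)=o(n\log^2 n/\log^2\log n)$ and w.h.p.\ latency $L_{\mathrm{whp}}(n)\le n\log^{c} n$. The plan is to show that these two guarantees put incompatible demands on how much grabbing probability the sequence $(p_i)$ carries at different scales.

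\emph{First demand: small tail mass at late times.} The expected-latency guarantee must force $(p_i)$ to carry very little grabbing probability late. Consider the oblivious adversary that wakes all $n$ parties at time $0$. In that case, at global time $t$ every party has the same local time $t$. The expected number of successes before time $T$ is at most $\sum_{t<T} n p_t(1-p_t)^{n-1}$. I will quantify this with the ``contention'' $n p_t$:
\begin{itemize}
\item slots with $n p_t\gg\log n$ yield essentially nothing;
\item slots with $n p_t\le 1$ yield at most $n p_t$.
\end{itemize}
\emph{Second demand: large mass at every intermediate scale.} The w.h.p.\ guarantee with latency $n\log^{O(1)}n$ should force the opposite. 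A party woken alone, or with few others, must succeed w.h.p.\ within $n\log^c n$ steps. This suggests an adversary that injects a small batch of $m$ parties, each of which must escape with failure probability $1/\poly(n)$. That forces $\sum_{i\le L_{\mathrm{whp}}(m)} p_i(1-p_i)^{m-1}\ge \Omega(\log n)$ at every scale $m\le n$. In other words, at each of the $\Theta(\log n/\log\log n)$ geometric scales $m=\log^{j} n$ (using granularity $\log^{c+1}$ so the windows $[m, m\log^c m]$ are disjoint), the sequence must carry probability mass $\Omega(\log n)$ in slots where contention $m p_i$ is moderate.

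\emph{Combining the two.} I will build the adversary as the \De Marco--Stachowiak style lower bound from \Cref{thm:LocalClock-exp-whp}(3),(4): staggered batches woken at geometrically spaced times, so each scale's window is jammed by the larger batches. The key accounting goes as follows.
\begin{itemize}
\item Each party's w.h.p.\ requirement forces total mass $\Theta(\log n)$ per scale, hence about $\log^2 n/\log\log n$ mass overall.
\item Because of jamming by the $n$ parties, this mass is mostly wasted: only a $1/\log\log n$ fraction of each window is usable.
\item So a typical party needs time about $n\cdot(\log n/\log\log n)\cdot(\log n/\log\log n)$ before it accumulates enough usable mass. This is an expected, not merely tail, latency of $\Omega(n\log^2 n/\log^2\log n)$.
\end{itemize}
The extra $\log\log n$ loss relative to the w.h.p.\ bound comes from needing the count of scales ($\log n/\log\log n$) rather than full $\log n$ per scale. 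That is exactly the mismatch between $\log^2/\log\log$ and $\log^2/\log^2\log$.

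The main obstacle is the step turning the per-scale $\Omega(\log n)$ mass requirement into \emph{expected} delay for a \emph{constant fraction} of parties, rather than just a tail event. I would first try a potential/averaging argument: show that for a constant fraction of scales, a party's success probability per unit time within that scale's window is $O(\log\log n/(n\log n))$. Integrating this over the scales then bounds the expected latency from below.
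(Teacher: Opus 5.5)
Your skeleton matches the paper's. The w.h.p.\ guarantee forces $\Omega(\log n)$ mass in each polylog-wide window. Polylog spacing gives $\Theta(\log n/\log\log n)$ disjoint windows and hence total mass $\Omega(\log^2 n/\log\log n)$. An adversary sustaining $\Theta(\log\log n)$ contention then turns this into $\Omega(n\log^2 n/(\log\log n)^2)$.

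\textbf{The conversion step is missing.} The step you defer as ``the main obstacle'' is the entire technical content of the paper (\Cref{thm:restricted-window-lower-bound} and \Cref{subsec:proofofkey}), and your sketch for it cannot work. A uniform per-slot success rate $O(\log\log n/(n\log n))$, integrated up to $T=n\log^2 n/(\log\log n)^2$, gives total success probability $\Theta(\log n/\log\log n)\gg 1$. So it cannot yield the target bound. What is needed is a bound proportional to $u^*$'s \emph{own} grab probability. If the filtered dynamic contention is at least $4\beta\ln\ln n$ at every $t\le T$, then $u^*$ succeeds at $t$ with probability at most $p(t)\ln^{-4\beta}n$. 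The sum $s(T)\ln^{-4\beta}n$ is $o(1)$, because the w.h.p.\ hypothesis (through \Cref{lem:restricted-window}(2)) also forces $s(n^2)=o(\log^{\beta/2}n)$. Getting there requires three ingredients absent from your plan:
\begin{enumerate}
\item[(a)] Waking $\Theta(n)$ parties at independent uniform times in $[0,T]$, which smooths $p$ into $s$. Synchronized batches at geometrically spaced times inherit the irregularity of $p$ and give no uniform contention.
\item[(b)] Concentration of the static contention via Chernoff plus a union bound. This is only possible when every summand is at most $B_\beta(\sqrt n)$, i.e.\ after filtering to slots $i\ge\sqrt n$ with $p(i)\le \ln^\beta i/i$.
\item[(c)] An induction over blocks of length $\ln^{2\beta+4}n$ showing that successes are sparse in each block w.h.p. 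Departing parties then lower the filtered dynamic contention by only $o(1)$ (\Cref{lem: memoryless protocol main lemma}).
\end{enumerate}
Point (c), the feedback from successful parties leaving the system, is the central difficulty, and your proposal never addresses it.

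\textbf{The per-scale requirement uses the wrong quantity.} Because of (b), it must be stated for the filtered sum $s^{\low}$, not for $\sum_i p_i(1-p_i)^{m-1}$ or the raw $s$. Mass sitting on a sparse set of large-$p$ slots cannot be spread into uniform contention. The paper proves $s^{\low}(L_{\calD}(m,m^{-2}))-s^{\low}(\tilde T_{\calD}(m))\ge\frac12\ln m$ (\Cref{lem: slow lower bound for memoryless protocols}). It does this by combining the adversary witnessing $\tilde T_{\calD}(m)\ge m/\ln^2 m$ with $O(\log m)$ extra parties that block each high-probability slot of the window. This is affordable because $N^{\high}(k)\le s(k)/B_\beta(k)=o(m/\log m)$, again using that $s$ grows only polylogarithmically.

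\textbf{Smaller issues.} An $m$-party instance only guarantees failure probability $m^{-2}$, so the per-scale bound is $\Omega(\log m)$, not $\Omega(\log n)$. You therefore need the scales to lie in $[\sqrt n,n]$, as the filter demands anyway; this still leaves $\Theta(\log n/\log\log n)$ of them. Your ``first demand'' plays no role. The contradiction is not between two competing constraints on $(p_i)$. It is the direct chain: w.h.p.\ latency $n\log^{O(1)}n$ $\Rightarrow$ large $s^{\low}$ $\Rightarrow$ large $\tilde T_{\calD}(n)$ $\Rightarrow$ large expected latency, since $\tilde T_{\calD}(n)<L_{\calD}(n,1/2)\le 2L^{\mathrm{exp}}_{\calD}(n)$.
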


\subsection{Organization}

In \Cref{sect:technical-overview} we give a brief technical overview of the paper, focussing mainly on the lower bounds of 
\Cref{thm:LocalClock-exp-whp}(3,4) and \Cref{thm: tradeoff between exp-latency and whp-latency without a global clock}.
In \Cref{sect:models-metrics} we formally define the \LocalClock{} and \GlobalClock{} models, 
protocols in those models, the latency metrics,
and useful notation used throughout the paper.
In \Cref{sect:global-clock} we give the first randomized \ContentionResolution{} algorithm in the \GlobalClock{} model.  
\Cref{sect:local-clock-lower-bounds} presents  
the $\Omega(n\log n/\log\log n)$ and $\Omega(n\log^2 n/\log\log n)$ lower bounds of \Cref{thm:LocalClock-exp-whp}(3,4) in the \LocalClock{} model, 
and \Cref{sect:upper-bounds} presents matching upper bounds,
including an alternate proof of the $O(n\log^2 n/\log\log n)$ upper bound from~\cite{MarcoS17,demarco2022timeenergyefficientcontention}.
We conclude with some open problems in
\Cref{sect:conclusion}.

\section{Technical Overview}\label{sect:technical-overview}

At any particular time step $t$, the \emph{contention} $\sigma$ 
is the sum, over all active parties $u$, that $u$ attempts to grab the shared resource.  The probability of \emph{some} party being successful is roughly $\sigma e^{-\sigma}$ so for a protocol to be efficient it must create many constant-contention slots.  Conversely, in a lower bound, the adversary typically
tries to sustain $\omega(1)$ contention over a long period of time.

\paragraph{\texorpdfstring{\GlobalClock{}}{GlobalClock} Protocol.} 
The problem with binary exponential backoff or any similar protocol
is that it is easy for the (oblivious) adversary to maintain contention $\omega(1)$ 
over a long period of time.  
In \Cref{sect:global-clock} we exploit the fact that
parties have access to a common global clock $t$ by interpreting
a suffix of the bits in the binary representation of $t$ as an integer $a(t)$, and using $a(t)$ to \emph{synchronize} a multiplicative modification to the default grab-probability of binary exponential backoff.  At any particular time we do not know the best integer $\ell$, 
so we let $(a(t))$ cycle through
\emph{all} integers according to a certain schedule determined by
Elias's $\omega$-code~\cite{Elias75}, in which integer $\ell$ appears periodically
with period at most 
$(2\ell)(2\log\ell)(2\log\log \ell)(2\log\log\log\ell)\cdots (2\log^{(\log^{*} \ell)}\ell)$.

\paragraph{\texorpdfstring{\LocalClock{}}{LocalClock} Upper Bounds.} 
It is sometimes difficult to prove upper bounds against adaptive adversaries because the \emph{strategy space} of the adversary is so large.
The main innovation in 
\Cref{sect:upper-bounds} is to
reduce the analysis of contention resolution protocols 
to what we call 
a \emph{counter game}, whose optimal 
strategy is obvious.  By fixing the optimal strategy, counter games can be analyzed in a straightforward fashion with standard Chernoff bounds.

\paragraph{\texorpdfstring{\LocalClock{}}{LocalClock} Lower Bounds.}
\Cref{sect:local-clock-lower-bounds} is the most technically involved.  We build on the lower bound technique of De Marco, Kowalski, and Stachowiak~\cite{MarcoS17,demarco2022timeenergyefficientcontention}, in which we progressively build up an adversary in \emph{layers}.  
Layers 0 and 1 are from~\cite{MarcoS17,demarco2022timeenergyefficientcontention}. 
For this overview we consider the \emph{latency with high probability} metric.

\centerline{\includegraphics[scale=.47]{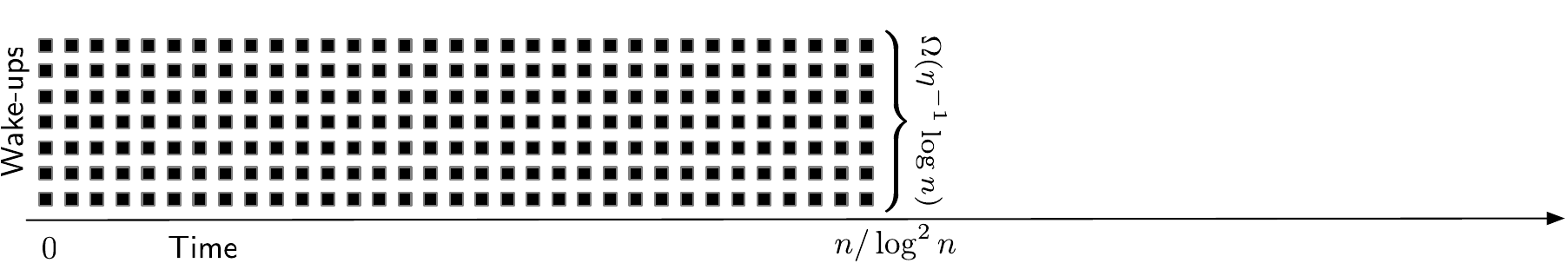}}

\medskip 
\noindent{\emph{Layer 0.}} 
Each party, without loss of generality, attempts to grab the resource at local time 1 with constant probability, say $\eta$.
Thus, by waking up $\Theta(\eta^{-1}\log n)$ parties per time step for $\Theta(n/\log^2 n)$ time steps we
ensure zero successes in this interval, w.h.p., while waking up a negligible fraction of the parties.

\centerline{\includegraphics[scale=.47]{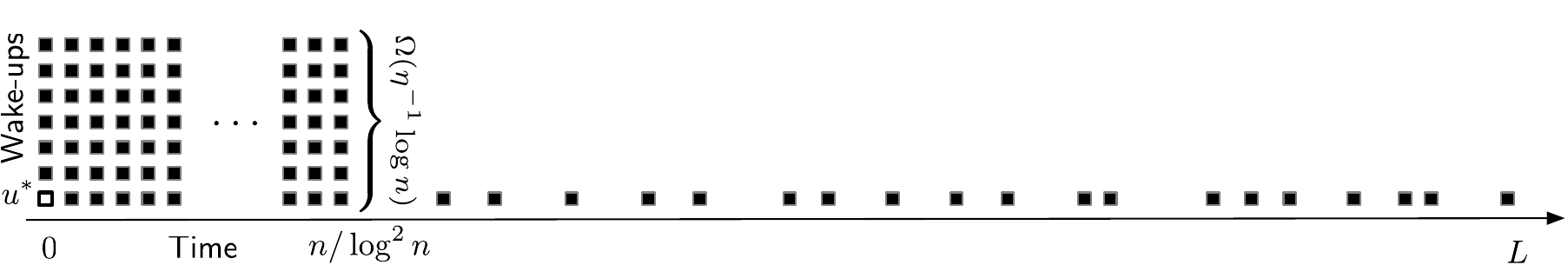}}

\medskip
\noindent{\emph{Layer 1.}}
A most important quantity of a protocol is $s(x)$: the expected number of times a party attempts to grab the resource in its first $x$ times slots.  Identify some party
$u^*$ woken up at time zero, and let
$L=L(n) < n\polylog(n)$ 
be its latency deadline.
We wake up the remaining $n-o(n)$ parties at random times in $[1,L]$.
Every time $u^*$ attempts to 
grab the shared resource in the interval $[n/\log^2 n, L]$, it collides
with a newly awoken party, with probability $1/\polylog(n)$.  
Thus, to achieve success with 
high probability it must be that
$s(L)-s(n/\log^2 n) = \Omega(\log n/\log\log n)$.
We can of course find an $n'$ for which $L(n')=n/\log^2 n$
and conclude that $s(L(n'))-s(n'/\log^2 n')=\Omega(\log n'/\log\log n')$, and so on.
We know that $L(n)<n\polylog(n)$ so by a telescoping sum calculation, 
$s(n)=\Omega((\log n/\log\log n)^2)$.
Thus---ignoring the effect of successes---the average contention over $[0,L]$
is roughly $\Omega(n s(L)/L)$.  
Setting $L=\Theta(n\log n/(\log\log n)^2)$ maintains contention $\Omega(\log n)$ over $[0,L]$, ensuring that there are, in fact, 
no successful parties with high probability.

\medskip 

\noindent{\emph{Layer 2.}} Guaranteeing \emph{no} successes in the interval $[0,L]$ is stronger than necessary when lower bounding the latency $L$.  Suppose we set $L=\Theta(n\log^2 n/ (\log\log n)^3)$, so the average contention
becomes $\Theta(n s(L)/L) = \Omega(\log\log n)$.
However, now each time step sees a success with probability $1/\polylog(n)$, and as the successful party exits the system it 
\emph{reduces} the contention of future time 
steps, thereby \emph{increasing} the probability of success in the future, and so on.  We show that this feedback loop does not spin out of control, and that the contention remains $\Omega(\log\log n)$ over the interval $[0,L]$.
By itself this argument would lead to an $\Omega(n\log^2 n/(\log\log n)^3)$ lower bound.  However, our ambition
is to prove a \emph{sharp} $\Omega(n\log^2 n/\log\log n)$ lower bound, matching the asymptotic complexity of~\cite{MarcoS17,demarco2022timeenergyefficientcontention}.

\centerline{\includegraphics[scale=.47]{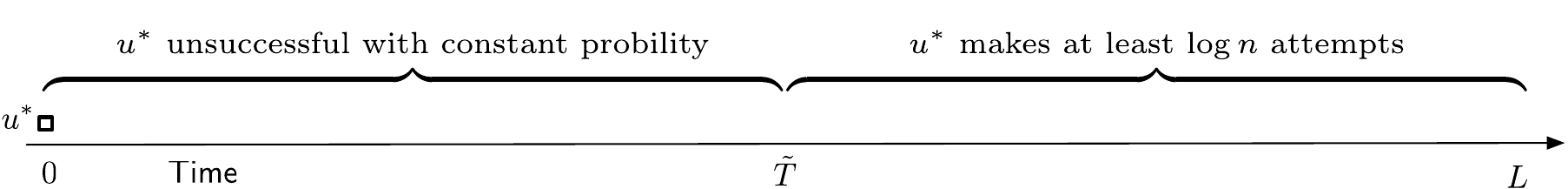}}

\medskip 

\noindent{\emph{Layer 3.}} Define $\tilde{T}$ 
to be such that an adversary that wakes up a constant fraction of the $n$ parties can prevent a party $u^*$ released at time zero from achieving success by time $\tilde{T}$, with constant probability. 
We establish two bounds  capturing the dependence between $s$ and
$\tilde{T}$, which are  informally stated 
below in a \emph{highly simplified} form.
\begin{align}
    \tilde{T} &\geq \frac{ns(n)}{\log\log n},\label{eqn1}\\
    s(L)-s(\tilde{T}) &\geq \log n.\label{eqn2}
\end{align}
These bounds have a \emph{circular} relationship,
in the sense that a weak lower bound on $s(\cdot)$ implies a lower bound on $\tilde{T}$ via \Cref{eqn1}, which then implies an stronger lower bound on $s(\cdot)$ via \Cref{eqn2}.  
This cycling converges on a fixed point of 
$s(n)=\Omega(\log^2 n)$
and $L\ge\tilde{T}=\Omega(n\log^2 n/\log\log n)$.
(The actual statements corresponding to \Cref{eqn1,eqn2} are significantly more complicated than written above, but this discussion captures the flavor of the proof.)

\medskip 

Under different parameterization the same sequence of layers leads to an $\Omega(n\log n/\log\log n)$ lower bound on expected latency, and the impossibility of simultaneous optimality along the In-Expectation and With-High-Probability metrics.

\section{Models, Metrics, and Basic Protocols}\label{sect:models-metrics}

\paragraph{Timing and Clocks.}
Time is partitioned into discrete slots indexed 
by $\mathbb{N}$.  There are $n$ identical parties,
which we index by $[n]$ for notational convenience.
Let $t_u \in [n]$ be the wake-up time of party $u$.
In the \GlobalClock{} model, at time $t$, 
$u$ perceives both $t$ and its local time $t-t_u$, 
whereas in the \LocalClock{} model, it only 
perceives $t-t_u$.

\paragraph{Strategy Space.} 
A \emph{protocol}
in the \LocalClock{} model is a distribution
$\mathcal{D}$ over $\{0,1\}^*$, under
the interpretation that if 
$X^{(u)} = (X_1^{(u)},X_2^{(u)},\ldots) \sim \mathcal{D}$,
then party $u$ will attempt to grab the shared
resource at local time $t_{\local}$ iff $X_{t_{\local}}^{(u)}=1$ \underline{and} $u$ did not achieve
success before local time $t_{\local}$.
A protocol in the \GlobalClock{} model 
is a family of distributions 
$\mathcal{D} = (\mathcal{D}_{t^*})_{t^*\in \mathbb{N}}$ over $\{0,1\}^*$,
where $t^*$ represents the global wake-up time of the party.
In other words, if 
$X^{(u)} = (X_{1}^{(u)},X_2^{(u)},\ldots) \sim \mathcal{D}_{t_u}$, then $u$ attempts
to grab the shared resource at global time 
$t_u + t_{\local}$ iff $X^{(u)}_{t_{\local}}=1$ and $u$ 
has yet to achieve success before this time.

A protocol is called \emph{memoryless} if
$\Pr(X_{t_{\local}}^{(u)}=1)$ depends only on the 
clocks ($t_{\local}=t-t_u$ or both $t,t_{\local}$) 
and not the prior history of party $u$, i.e., $X_1^{(u)},\ldots,X_{t_{\local}-1}^{(u)}$. We often identify a memoryless protocol $\mathcal{D}$ in the \LocalClock{} model 
by the function 
$p(i) \bydef \Pr\left[X_i^{(u)}=1\right]$, where $\mathcal{D}$ is normally omitted.

\paragraph{Active Sets and the Adversary.}
Let $\hat{A}[t]$ be the set of all parties woken up by time $t$, and let $A[t\mid t_0]\subseteq \hat{A}[t]$
be those parties woken up by time $t$ and yet to achieve success by time $t_0$.  
$A[t]$ is short for $A[t\mid t]$.
The number of parties woken up 
at each time step is decided by an 
adversarial strategy $\mathcal{A}$. 
Let $\mathcal{A}[n]$ be the adversary that is constrained to 
wake up $n$ parties.

\paragraph{Latency.}
Let $t_u^{\success}$ be the time when $u$ 
first achieves success, i.e., $X^{(u)}_{t_u^{\success}-t_u}=\sum_{v\in A[t_u^{\success}]} X^{(v)}_{t_u^{\success}-t_v}=1$.
Define $L_{\mathcal{D},\mathcal{A}[n]}^{(u)} = t_u^{\success} - t_u$
to be the \emph{latency} of $u$, 
which is a random variable whose distribution depends on $n$,
the protocol $\mathcal{D}$,
and adversarial 
strategy $\mathcal{A}[n]$.
Minimizing latency is the objective, 
but there are two natural 
ways to look at this metric:
\begin{itemize}
    \item \textbf{Expected Latency.} The goal is to choose 
    $\mathcal{D}$ to minimize the growth of the function
    \[
    L_{\mathcal{D}}^{\exp}(n) = \sup_{\mathcal{A}[n]} \max_{u\in [n]} \E\left[L_{\mathcal{D},\mathcal{A}[n]}^{(u)}\right].
    \]
    In other words, every party should enjoy a bound on its own expected latency, as a function of $n$.
    \item   \textbf{Latency With High Probability.} Given a failure probability threshold $q=q(n)$, the goal is to choose $\calD$ to minimize the latency bound $L_\calD(n,q)$, which satisfies 
    $$
    \forall\calA[n].\; \forall u\in[n].\ \Pr\left[L_{\calD,\calA[n]}^{(u)}\le L_\calD(n,q)\right]\ge 1-q.
    $$

    We define $L_{\calD}^{\mathrm{whp}}(n)=L_{\calD}(n,n^{-2})$, where the error threshold $n^{-2}$ is arbitrary.  Note that $L_{\calD}(n,1/2)\le 2L_{\calD}^{\mathrm{exp}}(n)$ by Markov's inequality.
\end{itemize}

Perhaps the most notorious \ContentionResolution{} protocol is \emph{Binary Exponential Backoff} (BEB). 
Upon waking up at time $t_u$, $u$ partitions all future time slots 
in \emph{windows} $(t_u,t_u+1],(t_u+1,t_u+2],\ldots,(t_u+2^i,t_u+2^{i+1}],\ldots$
and attempts to grab the shared resource at a uniformly random time slot in each window.
Observe that the probability that BEB grabs the resource at time $t_u + t_{\local}$ is $\Theta(1/t_{\local})$.  Thus, BEB is sometimes 
expressed as a \emph{memoryless} protocol with 
$p(t_{\local}) = \Pr\left[X^{u}_{t_{\local}}=1\right] = \Theta(1/t_{\local})$,
independent of $u$'s history.
The simplicity of BEB is attractive, but it is theoretically
undesirable under many metrics.  For example, in the unbounded setting it deadlocks with probability 1 when parties are woken up according
to a Poisson point process with \emph{any} constant rate~\cite{Aldous87,GoldbergL25}.  Even if all parties
are synchronized, its latency is only $O(n\log n)$ 
with high probability, rather than optimal $O(n)$~\cite{BenderFHKL05}.

\medskip

The following notation are 
used in the analysis of 
\LocalClock{} algorithms and 
lower bounds.

\paragraph{Aggregate Contention---Static.} 
Recall that for a memoryless protocol $\mathcal{D}$,
$p(t_\local)=p_{\calD}(t_{\local}) \bydef \Pr\left[X_{t_{\local}}^{(u)}=1\right]$
is the probability of any party $u$ grabbing the resource at local time $t_{\local}$. For a global time slot $t$ and a set of parties $S$, define

$$
\hat\sigma[t;S]=\hat\sigma_{\calA,\calD}[t;S] \bydef \sum_{u\in S}p(t-t_u)
$$
to be the \emph{static aggregate contention} contributed by $S$ at time $t$, with respect to a fixed protocol $\calD$ and 
oblivious adversary $\mathcal A$.  For brevity, we write $\hat\sigma[t] = \hat\sigma[t; \hat A[t]]$. 
We emphasize that $S$ is a \emph{fixed} set chosen before the execution,
making $\hat{\sigma}[t; S]$ static in nature: it is not dependent on the randomness of the parties.

\paragraph{History Prior to Global Time \texorpdfstring{$t$}{t}}
We denote by $H_t$ the \emph{history prior to global time~$t$}, 
that is, the collection of all observations made by all parties before time~$t$. 
In particular, it includes the information 
$(X^{(u)}_{\tau})_{1\le \tau< t-t_u}$ for all $u\in \hat{A}[t]$. We say that a set of parties $S$ is \emph{$H_t$-measurable} if it is a function of $H_t$.



\paragraph{Aggregate Contention---Dynamic.}
When $S=S(H_t)$ is $H_t$-measurable, 
we use different notation to measure \emph{dynamic aggregate contention}.
Define
\[
 \sigma[t; S(H_t)] \bydef \sum_{u\in S(H_t)} p(t-t_u).
\]
For brevity $\sigma[t] = \sigma[t; A[t]]$.  Sometimes we may write
$\sigma[t; S(H_{t_0})]$ to emphasize that $S$ only depends on the history up until time $t_0\leq t$.  It need not be the case that $S\subset A[t]$; it is also useful to measure the aggregate contention that \emph{would} have been contributed by parties $S\subset \hat{A}[t]\backslash A[t]$
who already achieved success before time $t$.

\paragraph{Sum of Access Probabilities.} For a memoryless protocol $\mathcal{D}$, 
define the function $s = s_{\calD}$ as
\begin{align*}
s(k)
&=
\sum_{t_{\local}=1}^k p(t_{\local})
=
\E\left[\sum_{t_{\local}=1}^k X_{t_{\local}}^{(u)}\right], 
& \mbox{which does not depend on $u$.}
\end{align*}
That is, $s(k)$ is the expected number of attempts at the shared resource in the first $k$ steps,
where the subscript may be dropped if implicit.
    
The following lemma is standard in analyzing 
\ContentionResolution{} protocols. The probability of success is maximized when the sum of 
probabilities of parties grabbing the shared resource is constant.  
See Appendix~\ref{sect:prob-of-success} for proof.

\begin{lemma}\label{lem:prob-of-success} 
For a global time $t$, let 
$p_u = \Pr\left[X_{t-t_u}^{(u)}=1\right]$ be the probability that $u$ grabs the shared resource and 
$\sigma = \sigma[t] = \sum_{u\in A[t]} p_u$.

\begin{enumerate}
    \item If $p_u \in [0,1/2]$ for all $u\in A[t]$, then $\Prob{\left(\sum_{u\in A[t]} X_{t-t_u}^{(u)}\right)=1}\geq \sigma 4^{-\sigma}$, 
    and $u$'s probability of success is at 
    least $p_u4^{-\sigma}$.
    \item $\Prob{\left(\sum_{u\in A[t]} X_{t-t_u}^{(u)}\right)=1}\le \sigma e^{-\sigma+1}$
    \ and \ $\Prob{\left(\sum_{u\in A[t]} X_{t-t_u}^{(u)}\right)\le 1}\le e^{-\sigma}+\sigma e^{-\sigma+1}$.
\end{enumerate}
\end{lemma}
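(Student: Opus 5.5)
Both parts of \Cref{lem:prob-of-success} will follow from writing the relevant probabilities as products of Bernoulli terms and applying the elementary inequalities $1-x\ge 4^{-x}$ for $x\in[0,1/2]$ and $1-x\le e^{-x}$ for all $x$. The attempts $X^{(u)}_{t-t_u}$, $u\in A[t]$, are independent Bernoulli$(p_u)$ variables. This holds because the protocol is memoryless, so each coin is fresh and independent of $H_t$. It also holds because we condition on the history $H_t$, which fixes $A[t]$. Hence
\[
\Pr\Big[\textstyle\sum_u X^{(u)}=1\Big]=\sum_{u}p_u\prod_{v\neq u}(1-p_v).
\]

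\textbf{Part 1.} First justify $1-x\ge 4^{-x}$ on $[0,1/2]$. The function $4^{-x}$ is convex, and the two sides agree at $x=0$ and at $x=1/2$, where both equal $1/2$. So the convex curve lies below its chord $1-x$ on that interval. Then
\[
\prod_{v\ne u}(1-p_v)\ \ge\ \prod_{v}(1-p_v)\ \ge\ 4^{-\sigma}.
\]
Therefore $u$ succeeds with probability $p_u\prod_{v\ne u}(1-p_v)\ge p_u4^{-\sigma}$. Summing over $u$ gives the bound $\sigma4^{-\sigma}$.

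\textbf{Part 2.} For the upper bound we drop the hypothesis $p_u\le 1/2$, so some $p_u$ may be close to $1$. This is the only delicate point, and the $+1$ in the exponent handles it. Since $1-x\le e^{-x}$, we have $\prod_{v\ne u}(1-p_v)\le e^{-(\sigma-p_u)}\le e^{-\sigma+1}$, using $p_u\le 1$. Summing $p_u e^{-\sigma+1}$ over $u$ gives $\Pr[\text{exactly one}]\le\sigma e^{-\sigma+1}$. For the second bound, $\Pr[\text{none}]=\prod_v(1-p_v)\le e^{-\sigma}$. Adding the exactly-one bound gives $\Pr[\sum\le1]\le e^{-\sigma}+\sigma e^{-\sigma+1}$.

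I expect no real obstacle. The only step needing care is the convexity check for $4^{-x}$ in Part 1, which is why the hypothesis $p_u\le1/2$ is imposed there.
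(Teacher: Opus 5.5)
Your proof is correct and matches the paper's appendix argument: you expand the exactly-one probability as $\sum_u p_u\prod_{v\ne u}(1-p_v)$, then apply $1-x\ge 4^{-x}$ on $[0,1/2]$ for Part 1, and $1-x\le e^{-x}$ with $p_u\le 1$ giving the $+1$ in the exponent for Part 2. You also correctly write the no-attempt probability as the product $\prod_v(1-p_v)$, where the paper's appendix has a typo using a sum.
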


\section{Contention Resolution with a Global Clock}\label{sect:global-clock}

Our memoryless algorithm makes use of \emph{Elias codes}, which we now review.  
For a positive integer, let $\Bin(N)\in\{\texttt{0},\texttt{1}\}^*$ be the $(1 + \floor{\log_2 N})$-bit binary representation of $N$.  
Define the sequence 
$N=N_1, N_2, \ldots, N_k=1$ 
where $N_i = \floor{\log_2 N_{i-1}}$.  
Elias's $\omega$-code~\cite{Elias75} assigns a bit-string $\Code(N)$ 
to each positive integer $N$:
\[
\Code(N) = \Bin(N_{k-1})\, \Bin(N_{k-2})\, \cdots\, \Bin(N_1) \, \texttt{0}.
\]
For example, 
$\Code(1) = \texttt{0}$,
$\Code(2) = \texttt{10 0}$,
$\Code(3) = \texttt{11 0}$,
$\Code(4) = \texttt{10 100 0}$,
with spaces introduced for clarity.

\begin{lemma}\label{lem:Elias-code}
    The set $\{\Code(N)\}_{N\in \mathbb{Z}^+}$ is a prefix-free code and the length of 
    $\Code(N)$ is $1 + (1+\floor{\log N}) + (1+\floor{\log\floor{\log N}}) + \cdots + (2)$.
\end{lemma}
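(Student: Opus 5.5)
The proof has two parts: the length formula, which follows directly from the definition, and prefix-freeness, which I would prove by describing a deterministic left-to-right decoder that knows where every codeword ends.

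\emph{Length.} The $(1+\floor{\log_2 N})$-bit block $\Bin(N_i)$ has length $1+\floor{\log N_i} = 1+N_{i+1}$. Summing over $i=1,\ldots,k-1$ and adding the terminal \texttt{0} gives
\[
1+\sum_{i=1}^{k-1}(1+\floor{\log N_i}).
\]
This is exactly the stated expression. The last block is $\Bin(N_{k-1})$ with $N_{k-1}\in\{2,3\}$, since $\floor{\log_2 N_{k-1}}=1$ forces $N_{k-1}\in\{2,3\}$, and it contributes the final term $(2)$. The boundary case $N=1$, where $k=1$ and the code is the single bit \texttt{0}, should be checked separately.

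\emph{Prefix-freeness.} Consider the following decoder. Start with $m=1$. Repeatedly read the next bit:
\begin{itemize}
\item if it is \texttt{0}, stop and output $m$;
\item otherwise it is the leading \texttt{1} of a block; read $m+1$ bits in total (this bit plus the next $m$) as a binary number $m'$, and set $m\gets m'$.
\end{itemize}

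\emph{Correctness of the decoder.} Prove by induction along the blocks $\Bin(N_{k-1}),\ldots,\Bin(N_1)$ that, just before reading $\Bin(N_{i})$, the decoder's current value is $m=N_{i+1}$, which is exactly one less than the block's length. The base case is $m=1=N_k$, and $\Bin(N_{k-1})$ has length $2=N_k+1$.

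Every block $\Bin(N_i)$ starts with \texttt{1}, since binary representations have no leading zeros. The decoder therefore never stops early, and after reading the block it sets $m=N_i$. After $\Bin(N_1)$ it sets $m=N_1=N$, then reads the terminal \texttt{0}, halts, and outputs $N$.

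\emph{Conclusion.} Since the decoder's stopping point is determined solely by the bits read so far, it halts exactly at the end of $\Code(N)$. Suppose $\Code(N)$ were a proper prefix of $\Code(M)$. Running the decoder on $\Code(M)$ would then halt after reading $\Code(N)$, before reaching the end of $\Code(M)$, contradicting that it halts exactly at the end of $\Code(M)$. The same argument shows the map is injective, since equal codewords decode to the same output.

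The only point requiring care is the induction invariant, namely that block lengths are correctly predicted by the previously decoded value via $|\Bin(N_i)|=N_{i+1}+1$. Everything else is routine.
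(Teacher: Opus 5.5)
Your proof is correct. The paper does not prove this lemma at all: it states it as a standard property of Elias's $\omega$-code, so there is no paper argument to compare against. Your decoder argument is the standard justification and fills that gap. Its key invariant is that just before reading $\Bin(N_i)$ the decoder's current value is $N_{i+1}$. The block length $1+N_{i+1}$ is therefore known in advance, and since every block begins with \texttt{1}, the decoder cannot halt before the terminal \texttt{0}. The $N=1$ case of the length formula is just the empty sum plus the terminal bit, so the separate check you flag is trivial.

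Your approach also gives something beyond the statement that the paper silently relies on. Prefix-freeness by itself only ensures the codeword prefix of $S=\Bin(t)^R\texttt{000}\cdots$ is unique, not that it exists. Your decoder gives existence as well:
\begin{itemize}
\item The decoder halts on every eventually-zero infinite string. Block starts strictly advance, so one of them eventually lands in the all-zero suffix.
\item Every finite string on which it halts is a codeword. A block read after value $m$ is an $(m+1)$-bit number $m'$ with leading \texttt{1}, so $\floor{\log_2 m'}=m$. Hence the decoded values form exactly the chain $N_k=1,\ldots,N_1$ for the last value, and the halted string is $\Code$ of that value.
\end{itemize}
This existence is what makes $a(t)$ well defined in the paper's construction, so it is worth recording next to the lemma.
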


We define an infinite sequence $(a(t))_{t\in \mathbb{N}}$ of positive integers indexed by the global time $t$ as follows.
Let $S = \Bin(t)^{R} \texttt{000}\cdots $ be an infinite bit-string obtained by reversing the binary representation of $t$, 
then padding it with an infinite suffix of \texttt{0}s.
Let $\Code(N) = S[0]S[1]\cdots S[|\Code(N)|-1]$ be the prefix
of $S$ in the code book, and set $a(t) = N$.
We will also require 
an infinite sequence $(a'(t))$ 
over $\mathbb{Z}$.  
Define 
\[
a'(t) = (-1)^{a(t)\operatorname{mod} 2}\floor{a(t)/2}.
\]

Recall from the introduction that 
   \(
    \zeta(x) = (2x)(2\log x)(2\log^{(2)} x)\cdots(2\log^{(\log^* x)} x)
    \).
\begin{lemma}\label{lem:a-sequence}
The sequences $(a(t))_{t\in\mathbb{N}}$ and $(a'(t))_{t\in\mathbb{N}}$
have the following properties.
\begin{enumerate}
    \item Let $k$ be a positive integer and $I\subset \mathbb{N}$
    be any interval of width $2^{|\Code(k)|}\leq \zeta(k)$.
    Then $\{1,2,\ldots,k\}$ is a subset of 
    $\{a(t) \mid t\in I\}$.
    \item Let $k$ be a positive integer and $I\subset \mathbb{N}$ 
            be an interval of width
            $2^{|\Code(2k+1)|} \leq\zeta(2k+1)$.
            Then $\{0, \pm 1,\pm 2,\ldots,\pm (k-1),\pm k\}$
            is a subset of $\{a'(t) \mid t\in I\}$.
\end{enumerate}
\end{lemma}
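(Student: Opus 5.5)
The plan is to reduce everything to one observation about low-order bits. Fix any positive integer $N$ and let $\ell=|\Code(N)|$. Because $S$ is $\Bin(t)^R$ padded with zeros, its first $\ell$ bits $S[0]\cdots S[\ell-1]$ are exactly the $\ell$ least significant bits of $t$, listed from bit $0$ upward. So $t \bmod 2^{\ell}$ alone decides whether $\Code(N)$ is a prefix of $S$.

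Since $\{\Code(N)\}$ is prefix-free (\Cref{lem:Elias-code}), at most one codeword is a prefix of $S$. Hence $\Code(N)$ is a prefix of $S$ exactly when $a(t)=N$, with one caveat. We must check that $a(t)$ is well defined, i.e.\ that some codeword is always a prefix of $S$. I would argue this by decoding $S$ in the usual $\omega$-code way. Start with $N=1$. Read a bit: if it is \texttt{0}, stop. Otherwise read $N+1$ bits as a binary integer and set $N$ to it. Because $S$ ends in infinitely many \texttt{0}s, this process terminates. I would note that this well-definedness step is the point that needs the most care, and it is taken as implicit in the paper's definition.

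With this in place, there is a unique residue $r_N\in[0,2^{\ell})$ such that $a(t)=N$ for every $t\equiv r_N \pmod{2^{\ell}}$. Here $r_N$ is the integer whose reversed binary digits spell $\Code(N)$. Any interval $I$ of $2^{\ell}$ consecutive integers contains every residue modulo $2^{\ell}$, so it contains a $t$ with $a(t)=N$.

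For part 1, first observe that $|\Code(N)|$ is nondecreasing in $N$. This follows from the length formula in \Cref{lem:Elias-code}, since each iterated $\lfloor\log\rfloor$ term is monotone and the number of terms is monotone. Then for every $N\le k$ we have $2^{|\Code(N)|}\le 2^{|\Code(k)|}$. An interval $I$ of width $2^{|\Code(k)|}$ therefore contains a full residue system modulo $2^{|\Code(N)|}$, and so hits $r_N$. This gives $\{1,\ldots,k\}\subseteq\{a(t)\mid t\in I\}$.

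For the bound $2^{|\Code(k)|}\le\zeta(k)$, write $2^{|\Code(k)|}=2\cdot 2^{1+\lfloor\log k\rfloor}\cdot 2^{1+\lfloor\log\lfloor\log k\rfloor\rfloor}\cdots$. Each factor satisfies $2^{1+\lfloor\log^{(i)}k\rfloor}\le 2\log^{(i)}k$, roughly, up to floors inside the iterated logs. The leftover factor $2\cdot 2^{2}$ from the final terms is absorbed by the last factors of $\zeta$. I would handle the floors by noting $\lfloor\log\lfloor x\rfloor\rfloor\le\log^{(2)}x$ and matching the number of factors, which is $\log^*k+1$ on each side. This bookkeeping is routine but fiddly, and I expect it to be the main place where small constant-factor adjustments are needed.

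For part 2, the map $N\mapsto(-1)^{N\bmod 2}\lfloor N/2\rfloor$ sends $1\mapsto 0$, $2\mapsto 1$, $3\mapsto -1$, $4\mapsto 2$, $5\mapsto -2$, and so on. Thus $\{1,\ldots,2k+1\}$ maps onto $\{0,\pm1,\ldots,\pm k\}$. Applying part 1 with $2k+1$ in place of $k$ gives part 2 immediately.
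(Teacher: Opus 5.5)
Your argument for the covering property is correct and is the paper's proof with the details filled in: $\Code(N)$ is a prefix of $S$ exactly when $t$ lies in one fixed residue class modulo $2^{|\Code(N)|}$; code lengths are nondecreasing; and $N\mapsto(-1)^{N\bmod 2}\lfloor N/2\rfloor$ maps $\{1,\ldots,2k+1\}$ onto $\{0,\pm1,\ldots,\pm k\}$. Your check that $a(t)$ is well defined, via $\omega$-decoding and the fact that $S$ has finitely many \texttt{1}s, covers a point the paper leaves implicit. It is right provided the $N+1$ bits you read include the leading \texttt{1} just seen.

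The step that does not go through is your sketch of $2^{|\Code(k)|}\le\zeta(k)$. No amount of constant-factor bookkeeping can rescue it, because the inequality is false if $\zeta$ is read with real-valued iterated logarithms and $\log^*k$ is the number of iterations needed to reach a value $\le 1$.

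\begin{itemize}
\item \emph{Counterexample.} $\Code(17)=\texttt{10}\,\texttt{100}\,\texttt{10001}\,\texttt{0}$, so $2^{|\Code(17)|}=2^{11}=2048$. But $\log^*17=4$ and $\zeta(17)\approx 34\cdot 8.17\cdot 4.06\cdot 2.04\cdot 0.064\approx 147$.
\item \emph{Why it fails.} The last factor $2\log^{(\log^*k)}k$ can be arbitrarily close to $0$ (take $k$ just above a tower of twos, e.g.\ $k=2^{16}+1$). So nothing absorbs the leftover factor $2$ coming from the terminating \texttt{0}, and the ratio is unbounded.
\item \emph{Other slips in the matching.} The factor counts do not agree in general: $2^{|\Code(k)|}$ has $m$ factors, where $N_m=1$ ends the Elias chain, while $\zeta(k)$ has $\log^*k+1$. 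For $k=17$ these are $4$ and $5$. Also, your per-factor bound is shifted by one index; it should read $2^{1+\lfloor\log N_i\rfloor}\le 2N_i\le 2\log^{(i-1)}k$.
\item \emph{Other conventions.} Stopping the iteration at values $\le 2$ instead fails already at $k=4$, where $\zeta(4)=8\cdot 4=32<2^{6}$.
\end{itemize}

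The paper merely asserts this bound. It holds exactly if the iterated logarithms in $\zeta$ are read as floors along the chain $N_1=k$, $N_{i+1}=\lfloor\log N_i\rfloor$. Then $\zeta(k)=\prod_{i=1}^{m}2N_i=2\prod_{i<m}2N_i\ge 2\prod_{i<m}2^{1+\lfloor\log N_i\rfloor}=2^{|\Code(k)|}$. Either adopt that reading, or drop the comparison with $\zeta$ entirely, since the covering statements only use the width $2^{|\Code(k)|}$.
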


\begin{proof}
    The first claim follows from the fact that $k$ 
    appears in $(a(t))$ periodically with period $2^{|\Code(k)|}\leq \zeta(k)$,
    and that for positive $k'<k$, 
    $|\Code(k')|\leq |\Code(k)|$.
    The second claim follows from the fact that
    $\{1,\ldots,2k+1\}$ are mapped onto
    $\{0,\pm 1,\pm 2,\ldots,\pm (k-1),\pm k\}$.
\end{proof}

\subsection{The Algorithm}

The memoryless version of BEB grabs the shared
resource with probability $1/(t-t_u)$, i.e., inversely proportional to $u$'s time since wake-up.  Define
$\tau(t) = \sum_{u\in A[t]} 1/(t-t_u)$ to be the ``natural'' 
BEB contention at time $t$.
If all parties knew $\alpha = 1/\tau(t)$, they could
grab with probability $\alpha/(t-t_u)$, leading to a constant success probability, 
by \Cref{lem:prob-of-success}.  Using the global clock, 
Algorithm~\ref{alg:2} uses $a'(t)$
to synchronize guesses to $\log \alpha$.
The code is written from the perspective of an arbitrary party $u$.  Until $u$ is successful, the variable $t_u^{\success}$ should be treated as $\infty$.

\begin{algorithm}[H]
    \SetAlgoLined\DontPrintSemicolon
    \caption{\textsf{ContRes}$(t,t_u)$ : $t$ is global time, $t_u$ is $u$'s wake-up time, 
    $t_u^{\success}$ is success time.}\label{alg:2}
     \If{$t \in [t_u,t_u^{\success})$ \tcp*{$u$ is awake, but not yet successful}}
        {$u$ grabs the resource with probability 
        $\displaystyle\min\left\{\frac{1}{2},\frac{2^{a'(t)}}{t-t_u}\right\}$.\;}
\end{algorithm}

Observe that Algorithm~\ref{alg:2} can only be executed in the \GlobalClock{} model as it uses $t$ to synchronize the behavior of the parties.

\begin{lemma}\label{lem:Elias-prob-success}
    Let $\tau(t) = \sum_{u\in A[t]} \frac{1}{t-t_u}$ 
    and $k=\lceil | \log\tau(t)| \rceil$.  Absent new wake-ups, Algorithm~\ref{alg:2} has a constant probability
    of seeing at least one 
    success in the interval $[t,t+\zeta(2k+1)]$.
\end{lemma}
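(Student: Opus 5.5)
Assume that no success occurs in the window $W=[t,t+\zeta(2k+1)]$; if one does, we are done. Then $A[t']=A[t]$ for every $t'\in W$, since there are no new wake-ups and no departures. The first step is to find a single slot $t'\in W$ at which the guess $a'(t')$ is right. By \Cref{lem:a-sequence}(2), every integer $j$ with $|j|\le k$ appears as $a'(t')$ for some $t'\in W$. We want $j\approx -\log\tau(t')$, but the target moves inside $W$ because local ages $t'-t_u$ grow. So we must control $\tau(t')$ across the whole window.

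\textbf{Step 1: $\tau$ is stable on $W$.} For $t'\ge t$ each term $1/(t'-t_u)$ decreases, so $\tau(t')\le\tau(t)$. For the lower bound we need $t'-t_u\le c\,(t-t_u)$ for the parties that carry most of $\tau(t)$. The window has length $\zeta(2k+1)=2^{O(k)}$ with $k\approx|\log\tau(t)|$. This is awkward when $\tau(t)$ is large, since parties of small age can age by a huge factor. I would therefore split into two cases.
\begin{itemize}
\item[(a)] $\tau(t)\ge 1$. Here $k\approx\log\tau(t)$. Only parties with age $\le\tau(t)$-ish dominate. Aging by $\zeta(2k+1)\approx\tau(t)\cdot\mathrm{polylog}$ changes each term by at most a $\mathrm{polylog}(\tau)$ factor. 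So $\tau(t')\ge\tau(t)/\mathrm{poly}(k)$, i.e.\ $\log\tau(t')\ge\log\tau(t)-O(\log k)$. Then $-\log\tau(t')$ still lies in $[-k,k]$ up to an additive term absorbed by adjusting constants.
\item[(b)] $\tau(t)<1$. This case is easier, since ages grow and $\tau$ only shrinks. We need $-\log\tau(t')\le k$. This can fail because $\tau$ keeps decreasing, so the window must be chosen carefully.
\end{itemize}

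A cleaner route is to avoid tracking $\tau(t')$ precisely. Lemma~\ref{lem:a-sequence}(2) says the values $\{-k,\dots,k\}$ all occur within any interval of width $\zeta(2k+1)$. So I would take a subinterval $W'\subseteq W$ of width $\zeta(2k+1)$ at the start of $W$ and show $\tau$ varies there by at most a constant factor on the dominant terms. Alternatively, I would find a slot where $2^{a'(t')}\tau(t')\in[1/4,1]$ directly, using monotonicity of $\tau$ and a discrete intermediate-value argument. As $j$ ranges over $[-k,k]$, the product $2^{j}\tau(t')$ sweeps across $1$ in factors of $2$, and $\tau$ is nonincreasing.

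\textbf{Step 2: constant contention gives constant success probability.} At a slot $t'$ with $2^{a'(t')}\tau(t')=\Theta(1)$, the contention is $\sigma[t']=\sum_u\min\{1/2,\,2^{a'(t')}/(t'-t_u)\}\le 2^{a'(t')}\tau(t')=O(1)$. Every individual probability is at most $1/2$. The only worry is that the cap $1/2$ removes most of the mass, leaving $\sigma$ tiny. If the cap binds for some $u$, that party alone contributes $1/2$, so $\sigma\ge\min\{1/2,\,2^{a'(t')}\tau(t')\}=\Omega(1)$. Then \Cref{lem:prob-of-success}(1) gives success probability at least $\sigma4^{-\sigma}=\Omega(1)$ at $t'$.

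\textbf{Main obstacle.} The hard part is Step 1: ensuring the correct guess $j=-\log\tau(t')$ is available at the right moment despite the drift of $\tau$ over a window of length $2^{\Theta(k)}$. I would first try the intermediate-value argument based on monotonicity of $\tau$. If that fails, I would fall back on a direct age bound in case (a), possibly inflating constants inside $\zeta$.
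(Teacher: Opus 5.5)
Your Step 2 is fine, and it handles the $1/2$ cap more carefully than the paper does. The problem is Step 1, which is the heart of the lemma: exhibiting a slot $t'$ in the window with $2^{a'(t')}\tau(t')=\Theta(1)$. You leave this step open.

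Part of the confusion is the size of the window. We have $2^{|\Code(2k+1)|}\le\zeta(2k+1)=k^{1+o(1)}$, not $2^{\Theta(k)}$. With the correct size, your case (b) is the easy one:
\begin{itemize}
\item If $\tau(t)<1$, every active party has age $t-t_u\ge 1/\tau(t)>2^{k-1}$, which dwarfs the window for large $k$.
\item So $2^k\tau(t')$ stays in $[c,2)$ across the window, and the single slot with $a'(t')=k$ already gives constant contention. This is the paper's slot $t_1$.
\end{itemize}
In case (a) you only have $\tau(t')\ge \tau(t)/(1+\zeta(2k+1))$. (Your justification via ``aging by $\tau(t)\cdot\mathrm{polylog}$'' is inconsistent with the polylog-factor conclusion you draw from it.) Here the drift is a real obstacle. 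Suppose $2^k$ parties have age $1$ at time $t$, and the slot with $a'=-k$ falls about halfway through the window. Then the contention there is only $O(1/k)$, while at the slot with $a'=+k$ it is $2^{k-1}$.

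None of your proposed remedies handles this.
\begin{itemize}
\item The intermediate-value argument presumes that $a'(t')$ sweeps through $-k,\dots,k$ as $t'$ advances. But $a'$ is a fixed sequence read off the low-order bits of $t'$: it equals $0$ at every even $t'$, and the values $\pm j$ appear in no monotone order. There is no crossing to exploit.
\item More fundamentally, \Cref{lem:a-sequence}(2) together with monotonicity of $\tau$ cannot suffice. All odd $N$ with the same bit length as $2k+1$ have the same period $2^{|\Code(2k+1)|}$. So a sequence with the same covering property could put every value $-(k-i)$ with $2^i\le 2^{-C}\cdot 2^{|\Code(2k+1)|}$ at the end of the window, and the remaining near-$(-k)$ values at its very start. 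Against the age-$1$ configuration, every slot then has contention either $\le 2^{-C}$ (summing to $O(2^{-C})$) or $\gg 1$.
\item Restricting to ``a subinterval of width $\zeta(2k+1)$'' is vacuous, since that is the whole window. Enlarging the window changes the statement.
\end{itemize}

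The missing idea is finer information about where the Elias sequence places values. One possible route (I have only sketched it):
\begin{itemize}
\item $\Code(N)$ is a prefix depending only on the bit length of $N$, followed by $\Bin(N)\,\texttt{0}$. So the odd values of a fixed bit length occupy a single quarter of each period, in bit-reversed order of $\lfloor N/2\rfloor$.
\item Fixing $\lfloor N/2\rfloor\bmod 4$ splits that quarter into four consecutive blocks of width $2^{|\Code(2k+1)|}/16$.
\item At least two of these blocks lie at offsets $\ge 2^{|\Code(2k+1)|}/16$ from $t$, where $\tau$ varies by at most a factor $2$. Choosing the residue appropriately then gives constant contention.
\item Values of smaller bit length recur in the second half of the window and cover the boundary cases.
\end{itemize}

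For the record, the paper's proof simply takes the two slots with $a'=\pm k$ and applies \Cref{lem:prob-of-success}, tacitly treating $\tau$ as constant over the window. That is sound for $\tau(t)<1$ but not justified for $\tau(t)\ge 1$. So the difficulty you flagged is real, but your proposal does not resolve it.
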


\begin{proof}
    By \Cref{lem:a-sequence}, within the interval $[t,t+2^{|\Code(2k+1)|}) \subseteq [t,t + \zeta(2k+1)]$ there are times $t_1,t_2$
    such that $a'(t_1)=k,a'(t_2)=-k$. (Notice that $\log \tau(t)$ can be either positive or negative.)  Absent
    any new wake-ups, and assuming no successes took place in any time slot $t'\notin \{t_1,t_2\}$, \Cref{lem:prob-of-success} implies
    that Algorithm~\ref{alg:2} has a constant probability
    of success in time slot $t_1$ or $t_2$.
\end{proof}

\begin{corollary}\label{cor:Elias-polylog-contention}
    If $\tau(t) \in [\frac 1 {8c} \log^{-2} n, 8c\log^2 n]$
    then in the time interval 
\[
[t,t+\zeta\left(2 \lceil 2\log\log n +\log 8c \rceil+1\right)]=[t,t+(\log\log n)^{1+o(1)}),
\]
absent new wake-ups, Algorithm~\ref{alg:2} has a constant probability of seeing at least one success.
\end{corollary}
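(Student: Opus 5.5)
This is a direct specialization of \Cref{lem:Elias-prob-success}. Set $k=\lceil|\log\tau(t)|\rceil$. The plan has three steps: bound $k$ using the hypothesis on $\tau(t)$, use monotonicity of $\zeta$ to pass to the larger window, and evaluate $\zeta$ at the resulting argument.

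\emph{Bounding $k$.} The hypothesis $\tau(t)\in[\frac1{8c}\log^{-2}n,\,8c\log^2 n]$ gives
\[
|\log\tau(t)|\le \log(8c\log^2 n)=2\log\log n+\log 8c .
\]
This single bound covers both cases: when $\tau(t)\ge 1$ we use the upper endpoint, and when $\tau(t)<1$ we use the lower endpoint. Taking ceilings, $k\le k^*:=\lceil 2\log\log n+\log 8c\rceil$.

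\emph{Applying \Cref{lem:Elias-prob-success}.} The lemma gives a constant success probability within $[t,t+\zeta(2k+1)]$, absent new wake-ups. We need $[t,t+\zeta(2k+1)]\subseteq[t,t+\zeta(2k^*+1)]$, which requires monotonicity of $\zeta$. The product form of $\zeta$ is not obviously monotone at the points where $\log^* x$ jumps. It is cleaner to avoid $\zeta$ entirely and argue from the proof of \Cref{lem:Elias-prob-success}: it only needs an interval containing times $t_1,t_2$ with $a'(t_1)=k$ and $a'(t_2)=-k$. By \Cref{lem:a-sequence}(2) applied with $k^*$, every interval of width $2^{|\Code(2k^*+1)|}\le\zeta(2k^*+1)$ contains all values $0,\pm1,\dots,\pm k^*$, and these include $\pm k$ since $k\le k^*$. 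The rest of that proof, invoking \Cref{lem:prob-of-success} at slot $t_1$ or $t_2$, goes through unchanged. This gives the first form of the interval in the statement.

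\emph{Evaluating $\zeta$.} Let $x=2k^*+1=4\log\log n+O(1)$. Then
\[
\zeta(x)=2x\prod_{i\ge1}\bigl(2\log^{(i)}x\bigr)=O(\log\log n)\cdot 2^{\log^* x}\prod_{i\ge1}\log^{(i)}x .
\]
The product $\prod_{i\ge1}\log^{(i)}x$ is $(\log\log n)^{o(1)}$, since its logarithm is dominated by the term $\log\log x=O(\log^{(3)}n)$, which is $o(\log^{(2)} n)$. Likewise $2^{\log^*x}=(\log\log n)^{o(1)}$. Hence $\zeta(x)=(\log\log n)^{1+o(1)}$.

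The only delicate point is the monotonicity step. I avoid it by reapplying \Cref{lem:a-sequence}(2) directly with $k^*$, rather than by comparing $\zeta(2k+1)$ with $\zeta(2k^*+1)$.
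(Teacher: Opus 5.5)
Your Steps~1--2 are how the paper obtains this corollary. It is \Cref{lem:Elias-prob-success} with $k=\lceil|\log\tau(t)|\rceil\le\lambda=\lceil 2\log\log n+\log 8c\rceil$, where the paper silently assumes $\zeta(2k+1)\le\zeta(2\lambda+1)$. Reapplying \Cref{lem:a-sequence}(2) at $k^*$ is a fine substitute for that assumption.

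It does not buy the extra rigor you claim, though. Take $\log^*x$ to be the least $i$ with $\log^{(i)}x\le 1$. Then the last factor $2\log^{(\log^*x)}x$ of $\zeta$ can be arbitrarily close to $0$. That breaks the bound $2^{|\Code(m)|}\le\zeta(m)$ you quote from \Cref{lem:a-sequence}, just as it breaks monotonicity: for example, $2^{|\Code(17)|}=2^{11}$ while $\zeta(17)<150$. Both routes really need $\zeta$ read with floored iterates, $\zeta(m)=\prod_i 2N_i$ with $N_1=m$, $N_{i+1}=\lfloor\log N_i\rfloor$ down to $1$. Under that reading $\zeta$ is monotone and at least $2^{|\Code(m)|}$.

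The step that is actually wrong as written is the evaluation of $\zeta$. With $x=4\log\log n+O(1)$ you have $\log x=\log^{(3)}n+O(1)$ and $\log\log x=\log^{(4)}n+o(1)$. Since $(\log\log n)^{o(1)}=2^{o(\log^{(3)}n)}$, what you must show is that $\log\prod_{i\ge1}\log^{(i)}x=o(\log^{(3)}n)$. Your chain ``$\log\log x=O(\log^{(3)}n)$, which is $o(\log^{(2)}n)$'' is shifted one level up. The bound $O(\log^{(3)}n)$ only gives $(\log\log n)^{O(1)}$, and $o(\log^{(2)}n)$ only gives $(\log n)^{o(1)}$. With the correct levels, $\log\prod_{i\ge1}\log^{(i)}x\le O(\log\log x)=O(\log^{(4)}n)=o(\log^{(3)}n)$, and also $\log^*x=o(\log^{(3)}n)$. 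Hence $\zeta(x)=O(\log\log n)\cdot(\log\log n)^{o(1)}=(\log\log n)^{1+o(1)}$, as claimed.
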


Let $\lambda=\lceil 2\log\log n+\log 8c \rceil$ for some constant $c\geq 1$ controlling the probability of error. 
In the analysis, we partition time into \emph{blocks} with width $ \zeta(2\lambda+1) = (\log \log n)^{1+o(1)}$ .

\Cref{cor:Elias-polylog-contention} says that whenever
$\tau(t)\in [\frac{1}{8c\log^2 n} ,8c\log^2 n]$, we have a constant probability of success in the next block, absent new wake-ups.
Define a block to be \emph{heavy} or \emph{light}, respectively, if the block is free of new wake-ups and 
$\tau(t) > 8c\log^2 n$, or $\tau(t) < \frac{1}{8c\log^2 n}$.
A block that is neither heavy nor light and free of new wake-ups is called a \emph{normal} block.

\begin{lemma}\label{lem:bound-heavy-short-blocks}
    Consider an $n$-party execution of Algorithm~\ref{alg:2} during an interval $[t_0,t_0+W]$ of width $W$.
    Then the number of heavy blocks during this interval
    is $O(\frac{n\log W}{\log^{2} n})$.
\end{lemma}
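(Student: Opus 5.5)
Heavy blocks are those with $\tau(t) > 8c\log^2 n$, where $\tau(t)=\sum_{u\in A[t]} 1/(t-t_u)$. The plan is to charge each heavy block to the harmonic-type contribution that the active parties make to $\tau$ over the whole interval, and to note that this total is only about $n\log W$.

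First, fix for each heavy block $B$ a representative time $t_B\in B$; say its first slot, where $\tau(t_B)>8c\log^2 n$ by definition. Since blocks are disjoint, the representative times are distinct points of $[t_0,t_0+W]$. Summing over heavy blocks gives
\[
\#\{\text{heavy blocks}\}\cdot 8c\log^2 n \;<\; \sum_{B\text{ heavy}} \tau(t_B) \;\le\; \sum_{u}\;\sum_{B\text{ heavy},\,u\in A[t_B]} \frac{1}{t_B-t_u}.
\]
Here we drop the restriction to parties that are still active, since that only makes the right-hand side larger.

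Second, bound the inner sum for a single party $u$. The times $t_B$ are distinct integers in $[t_0,t_0+W]$ with $t_B>t_u$, so the values $t_B-t_u$ are distinct positive integers. Their reciprocals are largest when these values are as small as possible. The gaps $t_B - t_u$ are at least $1$ (the algorithm only counts $t>t_u$, with local time $\ge 1$). There are at most $W+1$ of them, so the inner sum is at most $H_{W+1}=O(\log W)$. If $u$ woke before $t_0$, only values in an interval of width $W$ occur, and the bound $O(\log W)$ still holds.

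Summing over all $n$ parties yields $\#\{\text{heavy}\}\le O(n\log W)/(8c\log^2 n)=O(n\log W/\log^2 n)$, which is the claimed bound.

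The main point to check is the distinctness in the second step: $u$ is charged by each heavy block at most once, through the single representative $t_B$. This requires that $\tau$ be evaluated at one time per block rather than summed over the block's slots. I would make sure the definition of a heavy block is read as ``$\tau(t)>8c\log^2 n$ at the block's start time $t$''. If it were read as holding at every slot of the block, the argument still works, since it is enough to pick one slot. Edge effects from $t_B-t_u$ possibly being $0$ are handled by the convention that parties contribute only for local time $\ge 1$.
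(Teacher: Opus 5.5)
Your proof is correct and is essentially the paper's argument. The paper sums $\tau(t)$ over every slot of $[t_0,t_0+W]$, bounds each party's total contribution by a harmonic sum of $O(\log W)$ to get $O(n\log W)$, and divides by the threshold $8c\log^2 n$ to bound the number of slots, and hence heavy blocks, with $\tau(t)>8c\log^2 n$. Your version restricts the same sum to one representative slot per heavy block, which gives the same bound.
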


\begin{proof}
    Consider the sum of $\tau(t)$ over the interval.
\begin{align*}
    \sum_{t\in [t_0,t_0+W]} \tau(t) 
    &\le \sum_{u\in [n]}\sum_{t=\max(t_0,t_u+1)}^{t_0+W} \frac{1}{t-t_u}
    \leq \sum_{u\in [n]} O(\log W) = O(n\log W).
\end{align*} 
    Thus, there are at most $O(\frac{n \log W}{\log^2 n})$ time slots $t$ with $\tau(t) > 8c\log^2 n$.  
\end{proof}

\begin{lemma}\label{lem:light-block-success-prob}
    Suppose party $u$ is active at time $t$, where $t$ is in some light block $B$ and satisfies $a'(t)= \lceil 2\log\log n+\log 4c \rceil$.
    Then, the probability that $u$ successfully transmits at time $t$ is at least $\frac{2c\log^2 n}{t-t_u}$.
\end{lemma}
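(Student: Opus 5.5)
The plan is to apply \Cref{lem:prob-of-success}(1) directly at time $t$. With $\alpha = 2^{a'(t)}$, every party $v\in A[t]$ grabs with probability $p_v=\min\{1/2,\alpha/(t-t_v)\}$. Since $a'(t)=\lceil 2\log\log n+\log 4c\rceil$, we have
\[
\alpha\ge 4c\log^2 n \quad\text{and}\quad \alpha< 8c\log^2 n .
\]
Every $p_v$ lies in $[0,1/2]$, so the hypothesis of \Cref{lem:prob-of-success}(1) holds. That lemma then gives $u$ a success probability of at least $p_u4^{-\sigma}$, where $\sigma=\sigma[t]$.

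The first step is to bound $\sigma$. Because $\min\{1/2,x\}\le x$,
\[
\sigma\le \alpha\,\tau(t)<8c\log^2 n\cdot\tau(t).
\]
Here I take the light-block condition to mean $\tau(t)<\frac{1}{8c\log^2 n}$ at the relevant time. This gives $\sigma<1$, hence $4^{-\sigma}>1/4$.

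One subtlety concerns the block definition. It refers to $\tau$ at the block's start, and $\tau$ only decreases within a block that has no new wake-ups: each term $1/(t-t_v)$ shrinks over time, and parties leave on success. So the bound $\tau(t)<\frac{1}{8c\log^2 n}$ holds at every time in $B$. I expect this to be the only point requiring care.

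The second step is to bound $p_u$. If $\alpha/(t-t_u)\le 1/2$, then
\[
p_u=\frac{\alpha}{t-t_u}\ge\frac{4c\log^2 n}{t-t_u}.
\]
Combined with $4^{-\sigma}>1/4$, this gives a success probability of at least $\frac{c\log^2 n}{t-t_u}$. That is a factor $2$ short of the stated $\frac{2c\log^2 n}{t-t_u}$, so the constants need tightening. I would use the sharper bound $\alpha\,\tau(t)<\frac{8c\log^2 n}{8c\log^2 n}\cdot\frac12$; this requires either interpreting the ceiling or the light threshold to give $\sigma\le 1/2$, whence $4^{-\sigma}\ge 1/2$. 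Alternatively, I would use $\prod_{v\ne u}(1-p_v)\ge 1-\sum_{v\ne u}p_v\ge 1-\sigma$ in place of $4^{-\sigma}$, which gives a factor close to $1$ when $\sigma$ is small.

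In the other case, $\alpha/(t-t_u)>1/2$, we have $p_u=1/2$. We also have $u\in A[t]$, so
\[
\tau(t)\ge \frac{1}{t-t_u}, \quad\text{which gives}\quad t-t_u>8c\log^2 n .
\]
Then $\alpha/(t-t_u)<1$ contradicts $\alpha/(t-t_u)>1/2$ only partially. More precisely, $\sigma\ge p_u=1/2$, while $\sigma\le\alpha\tau<1$, so this case is not excluded outright. In this case the success probability is at least $\tfrac12\cdot\tfrac12=\tfrac14$. Meanwhile $\frac{2c\log^2 n}{t-t_u}<\frac{2c\log^2 n}{8c\log^2 n}=\frac14$, so the claimed bound holds here as well.

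The main obstacle is therefore just the constant bookkeeping: making $\sigma$ small enough that the survival factor $\prod_{v\ne u}(1-p_v)$ is at least $1/2$. I would handle it with the $1-\sigma$ bound together with $\sigma\le\alpha\tau(t)\le 2\cdot 4c\log^2 n\cdot\tau(t)$, which uses $\alpha<2\cdot 4c\log^2 n$ and the light threshold.
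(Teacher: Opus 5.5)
Your setup is correct: $\alpha=2^{a'(t)}\in[4c\log^2 n,8c\log^2 n)$, $\sigma\le\alpha\tau(t)<1$, and $\tau$ is non-increasing inside a block with no wake-ups. Your capped case $p_u=1/2$ is also sound, since there $\sum_{v\ne u}p_v<1/2$ and $t-t_u>8c\log^2 n$.

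\textbf{The gap.} In the main case you only reach $\frac{c\log^2 n}{t-t_u}$, and neither repair closes the factor of $2$.
\begin{itemize}
\item Reinterpreting the ceiling or the threshold so that $\sigma\le 1/2$ is not available. Under the stated definitions, $\alpha/(8c\log^2 n)\ge 1/2$.
\item The bound $\prod_{v\ne u}(1-p_v)\ge 1-\sigma$ is weaker than $4^{-\sigma}$ here. Take $\alpha$ just below $8c\log^2 n$ and $\tau(t)$ just below $\frac{1}{8c\log^2 n}$, spread over many parties with tiny $p_v$. Then $\sigma$ is arbitrarily close to $1$ and $1-\sigma$ is nearly $0$.
\end{itemize}
In that same example the survival factor is about $e^{-1}<1/2$. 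So the goal in your last paragraph, forcing $\prod_{v\ne u}(1-p_v)\ge 1/2$, cannot be achieved in general. The argument has to be reorganized.

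\textbf{The missing idea.} The same $\alpha$ drives both $p_u$ and $\sigma$. You paired the lower bound $\alpha\ge 4c\log^2 n$ (used for $p_u$) with the upper bound $\alpha<8c\log^2 n$ (used for $\sigma$), and these extremes cannot occur together. Write $\alpha=x\cdot 4c\log^2 n$ with $x\in[1,2)$. When the cap is inactive, $p_u=\frac{x\cdot 4c\log^2 n}{t-t_u}$ and $\sigma\le\alpha\tau(t)<x/2$. Then \Cref{lem:prob-of-success}(1) gives
\[
p_u4^{-\sigma}\;\ge\;\frac{4c\log^2 n}{t-t_u}\cdot x2^{-x}\;\ge\;\frac{2c\log^2 n}{t-t_u}.
\]
The last step holds because $x2^{-x}$ increases up to $x=1/\ln 2$ and then decreases, and it equals $1/2$ at both $x=1$ and $x=2$. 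Combined with your capped-case argument, this proves the lemma exactly as stated.

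\textbf{On the paper's proof.} It asserts $\sigma<\frac{2^{a'(t)}}{8c\log^2 n}<\frac12$ and then uses $4^{-1/2}$. The second inequality fails for exactly the reason you spotted, and that proof also ignores the $\min\{1/2,\cdot\}$ cap. Your diagnosis of the factor-$2$ issue is therefore accurate, but the repair has to be the joint bound above.
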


\begin{proof}
    At time $t$, every active party $v$ transmits with probability $p_v= \frac{2^{a'(t)}}{t-t_v}$.
    Specifically $p_u=\frac{2^{a'(t)}}{t-t_u}>\frac{4c\log^2 n}{t-t_u}$. 
    Since $B$ is a light block, $\tau(t)< \frac{1}{8c\log^2 n}$ and so $\hat{\sigma}[t] < \frac{2^{a'(t)}}{8c\log^2 n}<1/2$, which implies that all parties in $A[t]$ transmit with probability at most $1/2$.
By \Cref{lem:prob-of-success}, $u$ succeeds with probability at least $p_u4^{-1/2}\geq \frac{2c\log^2 n}{t-t_u}$.
\end{proof}

\begin{lemma}\label{lem:bound-medium-blocks}
    There are at most $O(n)$ normal blocks with high probability.
\end{lemma}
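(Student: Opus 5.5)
Each normal block succeeds with constant probability, and there are only $n$ possible successes, so a Chernoff-type bound caps the number of normal blocks at $O(n)$.

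\emph{Per-block bound.} By definition a normal block is free of new wake-ups and has $\tau(t)\in[\frac{1}{8c\log^2 n},8c\log^2 n]$, where $t$ is the block's first slot. Its width is $\zeta(2\lambda+1)$ with $\lambda=\lceil 2\log\log n+\log 8c\rceil$, so \Cref{cor:Elias-polylog-contention} applies. Hence, conditioned on the history $H_t$ up to the start of a normal block, the block contains at least one success with probability at least some constant $\rho>0$. Whether a block is normal is determined at its start: wake-up freeness is decided by the adversary, which may be adaptive but must commit before seeing the block's coin flips. So we may condition on the block being normal.

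\emph{Martingale argument.} Enumerate the normal blocks $B_1,B_2,\ldots$ in time order. Let $Y_i$ indicate that $B_i$ contains at least one success. Each $Y_i$ stochastically dominates a Bernoulli$(\rho)$ variable given the history before $B_i$. Since each party succeeds at most once, $\sum_i Y_i\le n$ deterministically.

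Suppose there were at least $m=Cn$ normal blocks, for a constant $C$ large enough that $C\rho\ge 2$. The sum $\sum_{i\le m}Y_i$ dominates a Binomial$(m,\rho)$ variable, via the standard coupling or Azuma/Freedman applied to the martingale $\sum_i (Y_i-\rho)$. The event $\sum_{i\le m}Y_i\le n\le \rho m/2$ then has probability at most $e^{-\Omega(\rho m)}=e^{-\Omega(n)}$. This is high probability once $n$ exceeds a suitably large constant; for smaller $n$, choose $m=C\max(n,\log n)$, which still gives $O(n)$ blocks.

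\emph{Main obstacle.} The delicate step is the conditioning. The number of normal blocks is random and adaptively determined, so the clean way is to couple as follows: for every $i$, sample an independent Bernoulli$(\rho)$ coin $Z_i\le Y_i$, where $B_i$ is the $i$-th normal block if it exists and $Z_i$ is a fresh coin otherwise. Then the event ``at least $m$ normal blocks'' implies $\sum_{i\le m}Z_i\le n$, and the $Z_i$ are i.i.d.

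We must also check that \Cref{cor:Elias-polylog-contention}'s constant success probability holds conditionally on arbitrary prior history. It does: its proof only uses the state at the block start together with \Cref{lem:prob-of-success} applied to the slots $t_1,t_2$. Successes at other slots only help, since if one occurs the block already succeeded. Moreover, $\tau$ can only decrease through successes, and the lemma conditions on no successes before $t_1,t_2$.
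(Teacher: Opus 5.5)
Your route is the paper's: each normal block contains a success with constant probability by \Cref{cor:Elias-polylog-contention}, there are at most $n$ successes, and a Chernoff bound finishes. Your coupling is more careful than the paper's, but the conditioning step you flagged as delicate is justified incorrectly.

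\textbf{The gap.} The paper's upper bounds are meant to hold against an adaptive adversary, and you explicitly allow one. Such an adversary chooses the number of wake-ups slot by slot after seeing the history. So ``free of new wake-ups'' is only settled at the block's \emph{end}, and the adversary can correlate it with the coins flipped inside the block. For example, whenever a block starts with $\tau$ in range and a success occurs before its last slot, the adversary wakes one party in that last slot. Every block that remains normal then had no success in its first $W-1$ slots. Conditioned on normality and the history at its start, its success probability is therefore not bounded below by $\rho$, so the domination $Z_i\le Y_i$ for ``the $i$-th normal block'' fails. Your argument is valid only for an oblivious adversary, where normality is $H_t$-measurable. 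The paper's two-line proof does not address this point either.

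\textbf{The repair.} Count \emph{candidate} blocks instead: blocks whose start $t$ has $\tau(t)$ in range. This property is $H_t$-measurable.
\begin{itemize}
\item For a candidate block, let $t^*\in\{t_1,t_2\}$ be the slot from the proof of \Cref{lem:Elias-prob-success} that matches $\tau(t)$. Consider the event that there is no wake-up in $[t,t^*]$ and no success in $[t,t^*)$; it is determined before the coins of slot $t^*$ are flipped. On this event, the conditional probability of a success at $t^*$ is at least $\rho$.
\item Hence, given $H_t$, a candidate block contains a success \emph{or} a new wake-up with probability at least $\rho$.
\item Each such block can be charged to one of at most $n$ successes or one of at most $n$ wake-ups, so at most $2n$ candidate blocks have this property.
\item Run your i.i.d.\ coupling over candidate blocks, with this success-or-wake-up indicator in place of $Y_i$. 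This gives $O(n/\rho)$ candidate blocks with probability $1-e^{-\Omega(n)}$.
\end{itemize}
Since every normal block is a candidate block, this gives $O(n)$ normal blocks.
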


\begin{proof}
    Consider a normal block $B_i$ beginning at time $t$, $\tau(t)\in [\frac{1}{8c\log^2 n} ,8c\log^2 n]$ so \Cref{cor:Elias-polylog-contention} implies that there exists some constant $p_i= \Theta(1)$ such that the probability of success in the block is $p_i$. 
    Let $p_{\min}= \min \{p_i \}$ be the minimum constant success probability across all normal blocks. 
    By the Chernoff-Hoeffding bound (\Cref{sect:tail-bounds}), the probability that there are at least 
    $\frac{2n}{p_{\min}}$ normal blocks is $\exp(-\Omega(n))$.
\end{proof}
    
\begin{theorem}\label{thm:global-clock-latency}
    Algorithm~\ref{alg:2} has latency at most $L= O(n\zeta(4\log\log n+O(1))) = n(\log \log n)^{1+o(1)}$, with high probability.
\end{theorem}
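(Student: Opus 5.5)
The plan is to fix a party $u$ and set a deadline $L = C n\,\zeta(2\lambda+1)$, where $\lambda=\lceil 2\log\log n+\log 8c\rceil$ as above and $C$ is a large constant. We then show that with probability $1-n^{-\Omega(c)}$ party $u$ succeeds within $[t_u,t_u+L]$. The window $[t_u,t_u+L]$ consists of $Cn$ blocks of width $\zeta(2\lambda+1)$. Each block is classified as one of four kinds: (i) it contains a new wake-up, (ii) it is heavy, (iii) it is normal, or (iv) it is light. We then bound how many blocks of each kind can occur before $u$ succeeds.

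\textbf{Bounding the first three kinds.} Kind (i) is easy: since only $n$ parties ever wake up, at most $n$ blocks contain a wake-up. For kind (ii) we apply \Cref{lem:bound-heavy-short-blocks} with $W=L=n(\log\log n)^{1+o(1)}$, so $\log W=O(\log n)$. This gives $O(n/\log n)=o(n)$ heavy time slots, hence $o(n)$ heavy blocks. Strictly, the lemma counts slots $t$ with large $\tau(t)$, and a block is heavy if its starting slot is; since each heavy block starts at a distinct slot, the count transfers. For kind (iii), \Cref{lem:bound-medium-blocks} bounds the number of normal blocks by $O(n)$ w.h.p. One technicality must be handled there. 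The success events in successive normal blocks are not independent, but each normal block has conditional success probability at least $p_{\min}$ given the history. So a martingale/stochastic-domination version of Chernoff applies, and the number of normal blocks is at most $2n/p_{\min}$ with probability $1-e^{-\Omega(n)}$, because there are at most $n$ successes in total.

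\textbf{The light-block step.} Choosing $C$ larger than $1+o(1)+2/p_{\min}$ plus slack guarantees that, if $u$ has not yet succeeded, at least $n$ of its blocks, and in fact a constant fraction of its $Cn$ blocks, are light. This step is the crux. By \Cref{lem:a-sequence}, every block contains a slot $t$ with $a'(t)=\lceil 2\log\log n+\log 4c\rceil$, since this value is at most $\lambda$ in absolute value. By \Cref{lem:light-block-success-prob}, at such a slot $u$ succeeds with probability at least $2c\log^2 n/(t-t_u)\ge 2c\log^2 n/L$, conditioned on the history. Over $\Omega(n)$ light blocks, the probability that $u$ never succeeds is therefore at most
\[
\exp\bigl(-\Omega(n)\cdot 2c\log^2 n/L\bigr)=\exp\bigl(-\Omega(c\log^2 n/\zeta(2\lambda+1))\bigr)\le n^{-\Omega(c)},
\]
since $\zeta(2\lambda+1)=(\log\log n)^{1+o(1)}\ll\log n$. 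This is the main obstacle. Light blocks are determined adaptively, so the argument needs a sequential conditioning: in each light block, given the history, $u$ fails with probability at most $1-2c\log^2 n/L$. The product of these bounds is valid even though the set of light blocks is random. We also need enough light blocks to exist, and that holds on the event that the other three kinds are few.

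\textbf{Conclusion.} A union bound over the $n$ parties and over the failure events of \Cref{lem:bound-medium-blocks} finishes the proof. Finally, $2\lambda+1=4\log\log n+O(1)$, which gives $L=O(n\zeta(4\log\log n+O(1)))$.
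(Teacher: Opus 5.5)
Your proposal is correct and follows the paper's proof. The paper also partitions $(t_u,t_u+L]$ into blocks of width $\zeta(2\lambda+1)$ and bounds the wake-up, heavy, and normal blocks by $n$, $o(n)$, and $O(n)$ respectively, then multiplies the per-light-block failure bounds $1-2c\log^2 n/L$ over $\Omega(n)$ light blocks. Your remarks on conditioning make explicit what the paper leaves implicit: a martingale Chernoff bound for normal blocks, and sequential conditioning over the adaptively determined light blocks. The final union bound over parties is harmless but unnecessary, since the latency guarantee is stated per party.
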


\begin{proof}
    For any party $u$, we prove that the latency of $u$ is at most $L=\gamma n\zeta(2\lambda+1)=\gamma  n\zeta(2 \lceil 2\log\log n +\log 8c \rceil+1)$ with high probability, where $\gamma$ is a constant determined later.  

    Consider the interval $(t_u,t_u+L]$. Recall that there are $4$ types of blocks. 
    \begin{itemize}
        \item \textbf{Blocks with new wake-ups:} There are at most $n$ blocks that contain new wake-ups.
        \item  \textbf{Heavy blocks:} By applying \Cref{lem:bound-heavy-short-blocks} to the interval $(t_u,t_u+L]$ of length $L$, there are at most $O(\frac{n\log L}{\log^2 n})=o(n)$ heavy blocks.
        \item \textbf{Normal blocks:} By \Cref{lem:bound-medium-blocks} there are at most $O(n)$ normal blocks throughout the execution of the algorithm, with high probability.
        \item  \textbf{Light blocks:}  The remaining blocks are all light blocks. Since $L= \gamma n\zeta(2\lambda+1)$, we have $\frac{L}{\zeta(2\lambda+1)}= \gamma n$.
        Thus, the number of light blocks is at least $\frac{1}{2}\frac{L}{\zeta(2\lambda+1)}$ when $\gamma$ is sufficiently large.
    \end{itemize}
      By \Cref{lem:light-block-success-prob}, in every light block in which $u$ is active, there exists a time slot $t$ in which $u$ succeeds with probability at least $\frac{2c\log^2 n}{t-t_u}\geq \frac{2c\log^2 n}{L}$. 
      Thus, the probability that party $u$ did not succeed during all the light blocks in which $u$ is active is at most 
      \[
      \left(1-\frac{2c \log ^2 n}{L}\right)^{\frac{L}{2\zeta(2\lambda+1)}} \leq e^{-\frac{c\log^2 n}{\zeta(2\lambda+1)}} \leq  e^{-c\log n} \leq \frac{1}{n^c}.
      \]
\end{proof}

\paragraph{A note on non-linear approximations of \texorpdfstring{$n$}{n}.}
De Marco and Stachowiak~\cite{MarcoS17} proved that if the parties share a common approximation $N=\Theta(n)$, 
then linear latency can be achieved, even in the \LocalClock{} model.
\Cref{thm:global-clock-latency} shows that in the \GlobalClock{} model latency $O(n (\log \log n)^{1+o(1)})$ can be achieved, without any \emph{a priori} knowledge of $n$.

We can modify Algorithm~\ref{alg:2} to achieve latency $O(n\log\log N)$, where $N>n$ is an upper bound common to all parties.  In other words, we can achieve latency $O(n\log\log n)$ even with any quasipolynomial approximation $N < 2^{\log^{O(1)} n}$. Rather than use the $(a'(t))$ sequence, we modify Algorithm~\ref{alg:2} to grab with probability $2^k/(t-t_u)$, where $k$ cycles through $\{-2\log\log N,\ldots,2\log\log N\}$
periodically.
The analysis of \Cref{thm:global-clock-latency} still holds, 
substituting $(2\log\log N+1)$ for $\zeta(2\lambda+1)=(\log\log n)^{1+o(1)}$.

\section{Lower Bounds in the \texorpdfstring{\LocalClock{}}{LocalClock} Model}\label{sect:local-clock-lower-bounds}

In this section, we establish the following lower bound for memoryless protocols. 

\begin{theorem}[Formal Statement of \Cref{thm:LocalClock-exp-whp}(3,4)]\label{thm: main lower bound without global clock}
    There is no 
    acknowledgment-based memoryless \ContentionResolution{} 
    protocol $\calD$
    that satisfies either

\begin{description}
    \item[With-High-Probability Guarantee] $L^{\mathrm{whp}}_{\calD}(n) = o(n\log^2 n/\log\log n)$, or
    
    \item[In-Expectation Guarantee] $L^{\mathrm{exp}}_{\calD}(n) = o(n\log n/\log\log n)$.
\end{description}
\end{theorem}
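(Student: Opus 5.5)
We argue by contradiction, following the layered adversary outlined in \Cref{sect:technical-overview}. Fix a memoryless protocol $\calD$ with access probabilities $p(\cdot)$ and cumulative function $s(\cdot)$. Assume its latency bound $L=L(n)$ (either $L^{\mathrm{whp}}_\calD(n)$ or $2L^{\mathrm{exp}}_\calD(n)$; by Markov the latter gives a constant-probability deadline) beats the claimed bound. All adversaries will be oblivious. We may assume $L(n)\le n\polylog(n)$, since otherwise there is nothing to prove.

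\textbf{Step 1 (Layer 0).} Show $p(1)\ge\eta$ for a constant $\eta$ without loss of generality. Otherwise, handle it by shifting the analysis to the first local time with constant access probability. Waking $\Theta(\eta^{-1}\log n)$ parties per slot for $\Theta(n/\log^2 n)$ slots then guarantees no successes in that prefix w.h.p. (respectively with constant probability per slot in the expectation setting), at the cost of $o(n)$ parties.

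\textbf{Step 2 (Layer 1).} Designate a party $u^*$ woken at time $0$ and wake the remaining $n-o(n)$ parties uniformly at random in $[1,L]$. Each attempt of $u^*$ at local time in $[n/\log^2 n, L]$ collides with a fresh party (using Layer 0) with probability $1/\polylog(n)$. Hence success with probability $1-n^{-2}$ forces
\[
s(L(n))-s(n/\log^2 n)=\Omega(\log n/\log\log n),
\]
and success with constant probability forces an $\Omega(1/\log\log n)$-type analogue (more precisely, the failure probability of $u^*$ is about $\exp(-\Omega(\Delta s\cdot\log\log n))$). Iterating over a chain $n_0=n,\ n_{i+1}$ with $L(n_{i+1})\approx n_i/\log^2 n_i$ and telescoping gives $s(n)=\Omega((\log n/\log\log n)^2)$ in the whp case, and a corresponding weaker bound in the expectation case.

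\textbf{Step 3 (Layers 2--3, the main obstacle).} Convert the lower bound on $s$ into sustained contention. With $n$ parties spread over $[0,\tilde T]$, the static contention $\hat\sigma[t]$ averages $\Omega(ns(\cdot)/\tilde T)$. The difficulty is that successes remove parties and lower the dynamic $\sigma[t]$. I would first try a supermartingale/potential argument showing that when $\hat\sigma=\Omega(\log\log n)$ on average, the expected number of successes over $[0,\tilde T]$ is $o(n)$. By \Cref{lem:prob-of-success}(2), each slot has success probability at most $\sigma e^{1-\sigma}=1/\polylog n$. A Markov/Chernoff bound would then show the surviving set still contributes contention $\Omega(\log\log n)$, so the feedback loop does not blow up. This yields the relation $\tilde T\gtrsim ns(n)/\log\log n$ from (\ref{eqn1}). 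Separately, rerunning Layer 1 with the deadline $\tilde T$ in place of $n/\log^2 n$ gives $s(L)-s(\tilde T)\gtrsim\log n$ (whp), or $\gtrsim 1$ (expectation), as in (\ref{eqn2}).

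\textbf{Step 4 (bootstrapping).} Alternate (\ref{eqn1}) and (\ref{eqn2}) a constant (or $O(\log\log n)$) number of times, each time improving the lower bound on $s$ by a $\log\log n$ factor, until reaching the fixed point $s(n)=\Omega(\log^2 n)$ and $L\ge\tilde T=\Omega(n\log^2 n/\log\log n)$. In the expectation setting the same scheme, with per-round gain $\Delta s=\Omega(1)$, gives $s=\Omega(\log n)$ and $L=\Omega(n\log n/\log\log n)$, contradicting either assumed $o(\cdot)$ bound.

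The hardest part is Step 3: controlling the dependence between successes and future contention against only an oblivious adversary, while losing only a single $\log\log n$ factor rather than the $(\log\log n)^3$ of the naive Layer 2 bound.
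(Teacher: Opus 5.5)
Your outline follows the paper's architecture: a Layer-0 prefix, the restricted window $\tilde T$, two circular inequalities, and a bootstrap. However, the step that carries all the weight is not supplied, and the mechanism you sketch for it would not work.

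\textbf{Step 3.} You reason from $\hat\sigma$ being $\Omega(\log\log n)$ \emph{on average} and from the expected \emph{number} of successes being $o(n)$. Neither is enough, for three reasons.
\begin{enumerate}
\item You need contention $\Omega(\log\log n)$ at \emph{every} slot. With random wake-ups this concentrates to failure probability $n^{-2}$ only if each party's contribution is tiny. If the contention at a slot comes from terms of size up to $1$ (say $p(i)=1$ on a sparse set of $i$), its lower tail at mean $\Theta(\log\log n)$ is only $1/\polylog(n)$. Over $n\polylog(n)$ slots, many slots will then have $O(1)$ contention.
\item A bound on \emph{how many} parties succeed says nothing about how much contention is removed at a given later slot $t$. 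One success can remove up to $\max_i p(i)$ from $\sigma[t]$, so the loss depends on \emph{which} parties succeeded and \emph{when}.
\item Bounding the per-slot success probability by $\sigma e^{1-\sigma}$ uses the dynamic $\sigma$ you are trying to control, so it is circular.
\end{enumerate}
The missing idea is the threshold $B_\beta(i)\approx \ln^\beta i/i$ together with the filter $\mathrm{low}_\beta^{(\ge\sqrt n)}$. Counting only slots with $p(i)\le B_\beta(i)$ and $i\ge\sqrt n$ caps each party's contribution at $B_\beta(\sqrt n)=\polylog(n)/\sqrt n$, so Chernoff works at mean $\gamma\ln\ln n$. The same cap drives the paper's segment-wise induction (density profiles with $\delta=\lfloor \ln^{2\beta+4}n\rfloor$):
\begin{itemize}
\item successes in the last three segments cost at most $3\delta B_\beta(\sqrt n)=o(1)$;
\item older successes, at density $2\mu=\log^{-\Omega(1)}n$, cost at most $\mu\ln^{\beta+1}t/(\beta+1)=o(1)$ because of the decaying envelope;
\item hence each new segment is again sparse with probability $1-e^{-\ln^3 n}$.
\end{itemize}
The price is that \eqref{eqn1} holds only for $s^{\mathrm{low}_\beta}(\lfloor n/\ln^2 n\rfloor)-s^{\mathrm{low}_\beta}(\lfloor\sqrt n\rfloor)$, so \eqref{eqn2} must be proved for the same filtered quantity. ``Rerunning Layer 1'' only gives $\log n/\log\log n$. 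The paper instead blocks the $o(n/\log n)$ high-probability slots deterministically, using $\lceil 4\ln n/\eta\rceil$ fresh parties per slot. This is the only place the weak bound $\tilde T\ge n\,s(\lfloor n/\ln^2 n\rfloor)/(40\ln n)$ is needed. On the remaining slots it uses $1-x\ge 4^{-x}$ and memorylessness to obtain $s^{\mathrm{low}_\beta}(L)-s^{\mathrm{low}_\beta}(\tilde T)\ge\frac14\ln(1/q)$.

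\textbf{Step 4.} As stated, this step is wrong. Alternating the two inequalities does not gain a $\log\log n$ factor per round. With $L\le \epsilon n\log^2 n/\log\log n$, a bound $s\ge \log^2 n/\log^{(k)}n$ yields only about $\log n/\log^{(k+1)}n$ disjoint windows $(\tilde T(m),L(m)]$. That gives only $s\ge\log^2 n/\log^{(k+1)}n$, so no constant number of rounds reaches $\Omega(\log^2 n)$. You would need about $\log^* n$ rounds. Round $k$ also uses the round-$(k-1)$ bounds at all $m\in[\sqrt n,n]$, hence ultimately at $n^{2^{-k}}$, and your sketch does not control these thresholds.

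The paper avoids iterating altogether. Set $U(n)=\lfloor n\ln(1/q)\ln n/\ln\ln n\rfloor$ and $C(n)=\tilde T_{\calD}(n)/U(n)=o(1)$. Take the least $N$ with $C(N)<C_0:=\min_{N_0\le n\le N_0^2}C(n)$. Pack $\Omega(\ln N/\ln(2/C_0))$ disjoint windows $(C_0U(N_i),U(N_i)]\subseteq(\sqrt N,N/\ln^2 N]$, each adding $\frac1{12}\ln(1/q)$ to $s^{\mathrm{low}_\beta}$. The fixed-point relation $C\gtrsim 1/\ln(2/C)$ then gives $C(N)\ge C_0$, contradicting the choice of $N$.
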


In the course of proving this theorem, we also obtain the following tradeoff showing that even \emph{near}-optimal expected latency and \emph{near}-optimal high-probability latency cannot be achieved at the same time.

\begin{theorem}[Formal statement of \Cref{thm: tradeoff between exp-latency and whp-latency without a global clock}]
   For any acknowledgment-based memoryless \ContentionResolution{} protocol $\calD$, $L_\calD^{\mathrm{exp}}(n)=o(\frac{n\log^2 n}{(\log\log n)^2})$ and $L_\calD^{\mathrm{whp}}(n)=n\log^{O(1)}n$ cannot be achieved simultaneously.
\end{theorem}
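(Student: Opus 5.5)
We argue by contradiction. Fix a memoryless protocol $\calD$ with access-probability function $p$ and cumulative sum $s(k)=\sum_{i\le k}p(i)$. Suppose $L^{\exp}_\calD(n)=o(n\log^2 n/(\log\log n)^2)$ and $L^{\mathrm{whp}}_\calD(n)\le n\log^{C}n$ for some constant $C$. The plan follows the layered adversary of the technical overview. The high-probability assumption supplies a \emph{lower} bound on the growth of $s$. The expectation assumption then forces the adversary's contention to stay large for too long, a contradiction.

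\textbf{Step 1 (Layer 1 with a polylogarithmic deadline).} Put a party $u^*$ at time $0$. Wake the other $n-o(n)$ parties uniformly at random in $[1,L]$, where $L=L^{\mathrm{whp}}_\calD(n)\le n\log^C n$. In a window of local times $[n/\log^{C'}n,\,L]$, each grab by $u^*$ collides with a freshly woken party with probability at least $1-1/\polylog(n)$; this is Layer~0 reused, since fresh parties grab at local time $1$ with constant probability $\eta$. So $u^*$ can fail at all attempts there with probability $n^{-2}$ only if
\[
s(L(n))-s(n/\log^{C'} n)\;=\;\Omega(\log n/\log\log n).
\]
Apply this along the scale sequence $n_0=n$, $n_{i+1}\approx n_i/\log^{C+C'}n_i$. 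Because $L\le n\log^{C}n$, the sequence passes through $\Theta(\log n/\log\log n)$ scales before reaching $\sqrt n$. Telescoping then gives $s(m)=\Omega((\log m/\log\log m)^2)$ for all $m$ in the relevant range. Only the $n\log^{O(1)}n$ whp bound is used here, which is exactly why the exponent $C$ is allowed to be arbitrary.

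\textbf{Step 2 (contention lower bound from the expectation assumption).} Set $T=\epsilon\, n\log^2 n/(\log\log n)^2$, with $\epsilon$ small. Let the oblivious adversary wake $n/2$ parties uniformly over $[0,T]$, together with a Layer~0 prefix. Ignoring successes, the static contention at a typical time $t\in[T/2,T]$ is about $(n/T)\,s(T/2)$. By Step~1 this is $\Omega((n/T)(\log n/\log\log n)^2)=\Omega(1/\epsilon)$, a large constant. Next I will show, using Lemma~\ref{lem:prob-of-success}(2), that successes stay rare: the expected number of successes in $[0,T]$ is at most $T\cdot\sigma e^{1-\sigma}$, a small fraction of the parties. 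Consequently the dynamic contention $\sigma[t]$ stays $\Omega(1/\epsilon)$ on most of $[T/2,T]$, and a constant fraction of the parties woken in $[0,T/2]$ are still unsuccessful at time $T$. Finally apply Markov to these survivors: some party has expected latency at least $\Omega(T)$. That is $\Omega(n\log^2 n/(\log\log n)^2)$, contradicting the $o(\cdot)$ assumption once $\epsilon$ is fixed and $n$ is large.

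\textbf{Main obstacle.} The delicate step is the feedback issue of Layer~2. Each success removes contention and can cascade. I will handle it with a potential/supermartingale argument on $\sigma[t;A[t]]$. Let $S$ be the set of parties that succeed by time $t$. Since every success removes at most $\max_i p(i)\le 1$ contention, it suffices to bound $|S|$ by $o(n)$ with constant probability. I will do this by dominating the success count with a sum of Bernoullis with means $\sigma e^{1-\sigma}$ while $\sigma\ge 1/\epsilon'$, and stopping the first time $\sigma$ drops. There is also a second concern: the telescoping in Step~1 needs $s$ evaluated at $T/2$ rather than at $n$. This is fine because $s$ is monotone and $T\ge n$.
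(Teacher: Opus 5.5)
Your overall architecture matches the paper's: the whp bound forces $s$ to grow, which yields a long high-contention window, which lower-bounds expected latency. The quantities do not close, however, and Step~2 fails as written.

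From Step~1 you only know $s(T/2)=\Omega((\log n/\log\log n)^2)$. So with $T=\epsilon n\log^2 n/(\log\log n)^2$ you get contention $\sigma=\Theta(1/\epsilon)$, a constant. Your bound $T\sigma e^{1-\sigma}$ on the number of successes is then $\Theta_\epsilon(T)$. Since $T/n=\epsilon(\log n/\log\log n)^2\to\infty$, this is far more than $n$, not a small fraction of it. Likewise, your bound $s(T)e^{1-\sigma}$ on the expected number of successes of $u^*$ diverges. Either bound is useful only when $\sigma\gtrsim\ln(T/n)$, respectively $\sigma\gtrsim\ln s(T)$; both are $\Theta(\log\log n)$. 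At contention $\Theta(\log\log n)$ the sustainable window has length about $n\,s/\log\log n$. With your $s$ bound that is only $\Theta(n\log^2 n/(\log\log n)^3)$, which is exactly the weaker bound the paper says Layers~0--2 alone give.

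Your feedback control is also insufficient. Since $\sigma[t]$ is only $O(\log\log n)$, knowing $|S|=o(n)$ with each success removing at most $1$ says nothing: a handful of successful parties with large $p(t-t_u)$ can erase $\sigma[t]$. The paper handles this with the filter $\mathrm{low}_\beta^{(\ge\sqrt n)}$. It caps each party's contribution at $B_\beta(\sqrt n)=\tilde O(n^{-1/2})$, and it also supplies the small range $V$ needed for Chernoff concentration of $\hat\sigma$ at level $\Theta(\log\log n)$ in \Cref{lem:general-adversary}. Block-wise success-density bounds (\Cref{lem: upper bound of sigma loss}) then keep the loss at every $t$ at $o(1)$.

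The missing idea is a per-scale increment of $\Omega(\log n)$ rather than $\Omega(\log n/\log\log n)$. Your Layer-1 argument loses a $\log\log n$ factor per attempt. Here $n$ fresh parties spread over $n\log^{C}n$ slots block each attempt of $u^*$ only with probability $1/\polylog(n)$, not $1-1/\polylog(n)$ as you wrote. In \Cref{lem: slow lower bound for memoryless protocols} the paper does not try to block low-probability attempts at all. It proceeds as follows:
\begin{itemize}
\item It blocks only the high-probability slots $p(i)>B_\beta(i)$, using $\lceil 4\ln n/\eta\rceil$ fresh parties each. There are only $o(n/\log n)$ such slots when latency is $n\log^{O(1)}n$ (\Cref{lem:restricted-window}(3)), so this costs $o(n)$ parties.
\item It lower-bounds the probability that $u^*$ simply never grabs in the remaining slots of $(\tilde T_{\calD}(n),L_{\calD}(n,n^{-2})]$ by $4$ raised to minus the increment of $s^{\mathrm{low}_\beta}$ over that window.
\item Requiring the total failure probability to be at most $n^{-2}$ gives $s^{\mathrm{low}_\beta}(L_{\calD}(n,n^{-2}))-s^{\mathrm{low}_\beta}(\tilde T_{\calD}(n))\ge\frac12\ln n$ at every scale.
\end{itemize}
Telescoping over $\Omega(\log n/\log\log n)$ scales with disjoint windows of polylogarithmic aspect ratio then gives $s^{\mathrm{low}_\beta}(\lfloor n/\ln^2 n\rfloor)-s^{\mathrm{low}_\beta}(\lfloor\sqrt n\rfloor)=\Omega(\log^2 n/\log\log n)$. \Cref{thm:restricted-window-lower-bound}, which sustains dynamic contention $\Theta(\log\log n)$, converts this into $\tilde T_{\calD}(n)=\Omega(n\log^2 n/(\log\log n)^2)$. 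That lower-bounds $L^{\mathrm{exp}}_{\calD}(n)$ up to a factor $2$.
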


Since the proofs in 
this section only involve oblivious adversaries, without loss of generality we may 
assume the protocol 
satisfies $\eta \bydef p(1) > 0$. 
Before proceeding, we review two key definitions:

\begin{itemize}
    \item $s(k)=\sum_{i=1}^k p(i)$ denotes the total attempt probability of a party in its local time $[1,k]$, which also equals the expected number of attempts made in this interval.
    
    \item $L_{\calD}(n,q)$ is the latency bound such that, for each party, its actual latency is at most $L_{\calD}(n,q)$ with probability at least $1 - q$.
    Throughout this section, $q=q(n)$ is either $q(n)=1/2$ or $q(n)=n^{-2}$ depending on whether we are minimizing expected latency or latency with high probability.
\end{itemize}

  

De Marco, Kowalski, and Stachowiak~\cite{demarco2022timeenergyefficientcontention}
proved that there \emph{does not exist} any acknowledgment-based protocol $\mathcal{D}$ for which
\[
L^{\mathrm{whp}}_{\mathcal{D}}(n) = o\!\left(\frac{n\log n}{(\log\log n)^2}\right).
\]
In the following, we informally outline the key steps of their argument in our notation.

\paragraph{Structural overview of the De Marco, Kowalski, and Stachowiak lower bound~\cite{demarco2022timeenergyefficientcontention}.}
\begin{itemize}
    \item For any protocol $\calD$, they constructed an adversary $\calA$ that ensures $\hat\sigma_{\calA,\calD}[t]=\Omega(\log n)$ for all $t\in[1,\Omega(n/\log^2 n)]$. 
    In this case, the probability of success is $n^{-\Omega(1)}$ in $[1,\Omega(n/\log^2 n)]$.

    \item Focus on a specific party $u^*$ that is activated at time $0$. Since there is no success w.h.p in $\lbrack 1,\Omega(n/\log^2 n)\rbrack$, $u^*$ must make a sufficiently large number of attempts in the remaining time slots to guarantee success with latency $L_{\calD}(n,n^{-2})$. 
    It implies $s(L_\calD(n,n^{-2})) - s(\Omega(n/\log^2 n))\ge \Omega(\log n/\log\log n)$. 
    \item Suppose $L_{\calD}(n,n^{-2}) = O(n\polylog(n))$, so if we let $n'<n$ be such that $L(n')=n/\log^2 n$, then $n'=n/\polylog(n)$, 
    and by the same argument above, $s(L(n'))-s(\Omega(n'/\log^2 n'))=\Omega(\log n'/\log\log n')$, and so on.
    By a telescoping-sum argument, $s(n) = \Omega\left(\frac{\log^2 n}{(\log \log n)^2}\right)$. Based on this result, they construct an adversary $\calA'$ such that $\hat\sigma_{\calA',\calD}[t]=\Omega(\log n)$ for all $t\le \Omega(n\cdot s(n)/\log n)=\Omega(n\log n/(\log\log n)^2)$. Hence, with high probability, there is no success in the first $\Omega(n\log n/(\log\log n)^2)$ time slots. 
\end{itemize}

\paragraph{Key intuitive observations.}
For memoryless protocols, we refine the previous lower-bound framework by making the following two observations:

\begin{itemize}
  \item 
  The bottleneck in prior arguments is the requirement that $\hat{\sigma}[t]=\Omega(\log n)$
  to guarantee \emph{no success} with high probability.
  In contrast, our adversary maintains a weaker condition
(approximately $\hat{\sigma}[t]=\Omega(\log\log n)$, though there are additional constraints), 
which permits $o(n)$ of the parties to succeed.  The difficulty here is showing that the (dynamic) aggregate contention $\sigma[t]$ remains $\Omega(\log\log n)$ as well, 
as successful parties effectively remove aggregate contention from the system.
This part of the analysis is enabled by a new tool of \emph{probability thresholds} and \emph{filtering}, which we discuss shortly.
If only these ideas were applied, 
we would end up with a lower bound of $L_{\calD}^{\mathrm{whp}}(n) = \Omega(n\log^2 n/(\log\log n)^3)$.

\item Our lower bound on $s(n)=\Omega(\log^2 n/(\log\log n)^2)$ turns out to be loose by a $(\log\log n)^2$ factor for bounding $L_{\calD}^{\mathrm{whp}}(n)$.  
In order to close this $(\log\log n)$-factor gap 
we need a more refined analysis.  
We define $\tilde{T}_{\calD}(n)$ to be such that, with constant probability, 
we can keep one particular party from achieving success in $\tilde{T}$ time slots 
by waking up a constant fraction of the parties.  We can simultaneously lower bound $\tilde{T}$ in terms of $s$, and lower bound
$s(L_{\calD}^{\mathrm{whp}}(n))-s(\tilde{T})$ by $\Omega(\log n)$ rather than $\Omega(\log n/\log\log n)$.  Analyzing these two lower bounds in tandem ultimately lets us shave off the last two $\log\log n$ factors and achieve a sharp lower bound.
\end{itemize}

\paragraph{Overview and organization.}
The rest of this section is organized as follows:

\begin{itemize}
    \item In Section~\ref{subsec:prob-thresholds}, we introduce a family of thresholds $B_{\beta}(i) = \Theta(\log^{\beta} i / i)$ 
for a fixed constant~$\beta$ and for all $i \ge 1$. 
Most known protocols are smooth and monotonically decreasing. 
For example, the Binary Exponential Backoff (BEB) protocol corresponds to 
a memoryless protocol with $p(i) = \Theta(1/i)$, 
whereas the high-probability protocols of~\cite{MarcoS17,demarco2022timeenergyefficientcontention} 
use $p(i) = \Theta(\log i / i)$. 
In general, a \emph{high-probability slot} (i.e., one with $p(i) > B_{\beta}(i)$) 
tends to have an undesirable effect on the analysis. 
Therefore, we introduce a filter function to exclude time slots 
whose transmission probabilities are excessively large, 
so that we can better capture the essential behavior of the protocol.

    \item Following the above argument, we  explore the properties of probability thresholds. Throughout this section, we mainly focus on the following two functions:
    \begin{itemize}
        \item $s^{\low}(k)=\sum_{1\le i\le k} p^{\low}(i)$, where $p^{\low}(i)=p(i)\cdot \mathbf{1}\{p(i)\le B_{\beta}(i)\}$ for $i\ge 1$. $s^{\low}(k)$ represents the prefix sum of low probabilities. 
        
        \item To estimate the lower bounds, we identify one arbitrary party $u^*$, which is activated at time $0$, and we focus on maximizing the latency of $u^*$. 
        For these reasons, we introduce  $\tilde{T}_{\calD}(n)$ which represents the maximum number of time slots such that an adversary can use at most a constant fraction of the $n$ parties to ensure that $u^*$ cannot succeed by time $\tilde{T}_{\calD}(n)$ with a constant probability. This function is a lower bound for \emph{both} expected and high-probability latency because it only guarantees the latency of one \emph{specific} party.
    \end{itemize}
    \item In \Cref{subsec:construct-adversary}, 
    we construct 
    in \Cref{thm: high-contention construction}
    an adversary that ensures that
    \begin{align*}
    \hat{\sigma}^{\low}[t] =\sum_{u\in \hat{A}[t]} p^{\low}(t-t_u) &\ge \Omega(\log\log n),\\
    \text{for all } t &\le \Theta\left(\frac{n(s^{\low}(\lfloor n/\log^2 n\rfloor)-s^{\low}(\lfloor \sqrt{n}\rfloor))}{\log\log n}\right).
    \end{align*} 
    
    Note that \Cref{thm: high-contention construction} refers to \emph{static} contention $\hat{\sigma}^{\low}[t]$.
    \Cref{thm:restricted-window-lower-bound} extends the lower bound of \Cref{thm: high-contention construction} to show that
    $\tilde{T}_{\calD}(n)$ is also $\Omega\left(\frac{n(s^{\low}(\lfloor n/\log^2 n\rfloor)-s^{\low}(\lfloor \sqrt{n}\rfloor))}{\log\log n}\right)$. 
    The proof is involved, and is postponed to \Cref{subsec:proofofkey}.
    We also prove that whenever the overall latency is $O(n\log^{O(1)} n)$, there are only a small number of high-probability time slots, i.e., those $i$ for 
    which $p(i)>B_\beta(i)$. 
    \item In~\Cref{subsec:analysis-for-lower-bounds}, we show that 
    \[
    s^{\low}(L_{\calD}(n,q(n)))-s^{\low}(\tilde{T}_{\calD}(n))\ge \Omega(\log (1/q(n))).
    \]
    The rationale is that some party $u^*$ inserted
    at time $t_{u^*}=0$ survives to time $\tilde{T}_{\mathcal{D}}(n)$ with constant probability.  In order for it to succeed by 
    its deadline $L_{\mathcal{D}}(n,q(n))$, 
    it must make a sufficient number of attempts
    in the interval $(\tilde{T}_{\mathcal{D}}(n),L_{\mathcal{D}}(n,q(n))]$.  There are
    a sublinear number of high-probability slots that can be blocked by new wake-ups, so the remaining attempts must be counted by $s^{\low}$.
    
    \item Combining the above two results, we observe the following circular relationship:
\begin{itemize}
    \item An improvement in the lower bound of $s^{\low}$ immediately leads to an improvement in the lower bound of $\tilde{T}_{\calD}$.
    \item Conversely, a stronger lower bound on $\tilde{T}_{\calD}$ also yields a stronger lower bound on $s^{\low}$.
\end{itemize}
The process described above eventually converges to a fixed point where both bounds stabilize, and, 
depending on $q(n)\in\{1/2,n^{-2}\}$, implies $\Omega(n\log n/\log\log n)$ and $\Omega(n\log^2 n/\log\log n)$ lower bounds on latency In-Expectation and With-High-Probability,
respectively 
(\Cref{thm: main lower bound without global clock}).
    \item In~\Cref{subsec:tradeoff}, we use a similar argument for the lower bounds: If $L_{\calD}^{\mathrm{whp}}(n)$ is small, then $s^{\low}$ must be large, which in turn implies a lower bound on the expected latency $L_{\calD}^{\mathrm{exp}}(n)$. This reveals the impossibility to achieve near-optimal latency, in expectation and with high probability simultaneously.
    \item Finally, in~\Cref{subsec:proofofkey}, we return to prove~\Cref{thm:restricted-window-lower-bound}, 
    by analyzing the random process induced by the adversary from~\Cref{thm: high-contention construction}. 
    To be more specific, we use the following steps: 

\begin{itemize}
    \item We prove that if $\sigma[t] = \Omega(\log \log n)$ for all $t \in [t_0, t_0 + \delta)$, conditioned on the observed history before $t_0$, then the number of successes in this interval is at most $\frac{\delta}{\log^{\Omega(1)} n}$, where $\delta=\polylog(n)$. 
    We refer to this property as the \emph{low-density property}.
    
    \item We divide the time slots into length $\delta$ segments, and show that if all previous segments satisfy the low-density property, then for any $t$ in the next segment, $\sigma[t] \ge \frac 45\hat\sigma[t]$ when conditioned on the history of previous segments, and thus the next segment also satisfies the low-density property with high probability. This inductive step establishes that all segments are low-density with high probability. As a result, we conclude that $\sigma[t] = \Omega(\log \log n)$ holds with high probability for every $t$.
    
    \item Finally, we show that the expected number of successes of a specific party $u^*$ is at most $\log^{-\Omega(1)} n$, implying that $u^*$ fails with probability $1 - \log^{-\Omega(1)} n$. This completes the derivation of the lower bound on $\tilde T_{\calD}(n)$.
\end{itemize}
\end{itemize}

\subsection{Probability Thresholds}\label{subsec:prob-thresholds}

To construct the adversary, we activate parties at random time slots so that $\E[\hat\sigma[t]]=\Omega(\log\log n)$. We then seek an implementation where $\hat\sigma[t]=\Omega(\log\log n)$ holds for every time slot $t$. However, when $p(t)$ is large, we cannot directly apply Chernoff and union bounds to obtain such an implementation, because the realized value of $\hat\sigma[t]$ can fall significantly below its expectation with probability $e^{-\Theta(\log\log n)}\gg n^{-1}$. 
On the other hand, if each random variable takes values in a small range $[0,v]$, 
we can strengthen the probability bound to $e^{-\Omega(\log\log n / v)}=n^{-\omega(1)}$ when $v$ is sufficiently small. 
To leverage this observation, we introduce \emph{probability thresholds}, 
which allow us to filter out high-probability time slots and thereby obtain tighter failure bounds.


\begin{definition}[Filter Functions] 

 A \emph{filter function} is a function \( \filter: \mathbb{N} \to \{0,1\} \) that maps a \emph{local} time index 
 to an indicator value. Given such a function \( \filter \), a protocol $\calD$, and an oblivious adversary $\calA$, we define the following filtered quantities:
\begin{itemize}
\item $p^{\filter}(i)=p(i)\filter(i)$
\item $s^{\filter}(k)=\sum_{i=1}^k p^{\filter}(i)$
\item $\hat\sigma^{\filter}[t;S]= \E\left[\sum_{u\in S}X_{t-t_u}^{(u)}\filter(t-t_u)\right]=\sum_{u\in S}p^{\filter}(t-t_u).$
\item $\sigma^{\filter}[t;S(H_t)]=\E\left[\sum_{u\in S(H_t)}X_{t-t_u}^{(u)}\filter(t-t_u)\right]=\sum_{u\in S(H_t)}p^{\filter}(t-t_u).$ 
\end{itemize}
\end{definition}

Recall that the memoryless version of BEB is defined by $p(i)=\Theta(1/i)$.  The high-probability protocol of~\cite{MarcoS17,demarco2022timeenergyefficientcontention}
uses $p(i)=\Theta(\log i/i)$.  However a general memoryless
protocol does not necessarily use a $p$ that is simple to define
nor monotone decreasing.  We use a filter function to filter
out slots whose probability is \emph{too large}.

\begin{definition}[Low-probability Threshold]
Given a constant $\beta\ge e$, for $t\ge 1$, we define 
\[
B_\beta(t)=\begin{cases}1&t\le 16\\\min(1,\frac{\ln^{\beta} t}{t})& t>16\end{cases}.
\]
For a acknowledgment-based protocol $\calD$, we further define the following notions:
\begin{itemize}
\item $\mathrm{low}_{\beta}(t)=\mathbf{1}\{p(t)\le B_\beta(t)\}$ is the low-probability filter with respect to $\beta$ and $\calD$.
\item $
\mathrm{low}_\beta^{(\ge m)}(t)=\mathrm{low}_{\beta}(t)\cdot \mathbf 1\{t\ge m\}$ additionally filters out time slots preceding the $m$th.
\item Let $N^{\high}(k)=\sum_{t=1}^{k} (1-\low(t))$ be the number of 
high-probability slots in the 
range $[1,k]$.
\end{itemize}

\end{definition}
$B_{\beta}$ is chosen so that $N^{\mathrm{high}_{\beta}}(n\log^{O(1)}n)$ can be bounded by $o(n/\log n)$, as shown in the following lemma:
\begin{lemma}\label{lem:basic-filter}
    Given $\beta\ge e$, we have the following:
    \begin{enumerate}
    \item $B_{\beta}(t)\le B_{\beta}(t-1)$ for $t\ge 2$. That is, $B_{\beta}(t)$ is monotone non-increasing.         
    \item For any protocol $\calD$, $N^{\high}(k)\le s(k)/B_{\beta}(k)=\max(1,k\ln^{-\beta} k)s(k)$ for $k\ge 1$.
    \end{enumerate}
\end{lemma}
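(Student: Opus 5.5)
For part 1 we analyze the function $f(t)=\ln^\beta t/t$ on $t>16$. Its derivative is
\[
f'(t)=\frac{\ln^{\beta-1} t\,(\beta-\ln t)}{t^2}.
\]
So $f$ is increasing on $(1,e^\beta)$ and decreasing on $(e^\beta,\infty)$. The plan is to show that $\min(1,f(t))$ equals $1$ on the whole stretch where $f$ is still increasing, so that the truncation hides that stretch. Concretely, we will show $f(t)\ge 1$, i.e.\ $\ln^\beta t\ge t$, for all $t\in(16,e^\beta]$.

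On $[16,e^\beta]$ consider $g(t)=\beta\ln\ln t-\ln t$. It has derivative $\frac{1}{t}\bigl(\frac{\beta}{\ln t}-1\bigr)\ge 0$, so it is non-decreasing there, and $g(e^\beta)=\beta\ln\beta-\beta\ge 0$ because $\beta\ge e$. The worry is the left end: we need $g(16)=\beta\ln\ln 16-\ln 16\ge 0$ whenever $e^\beta>16$. Here $\ln\ln 16\approx 1.02>1$, so $\beta\ln\ln 16\ge e\cdot 1.02>\ln 16\approx 2.77$, and hence $g(16)\ge 0$. Since $g$ is non-decreasing, $g\ge 0$ on the entire interval, which is exactly $\ln^\beta t\ge t$. (If $e^\beta\le 16$ the interval is empty and $f$ is already decreasing for $t>16$.)

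With this in hand, $B_\beta(t)=1$ for $t\le\max(16,e^\beta)$, and beyond that point $B_\beta=\min(1,f)$ with $f$ decreasing, so $B_\beta$ is non-increasing there. The junction at $t=17$ needs no separate check, since $B_\beta\le 1=B_\beta(16)$.

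For part 2, every high-probability slot $t\le k$ satisfies $p(t)>B_\beta(t)\ge B_\beta(k)$ by part 1. Summing over these slots gives
\[
s(k)\ge \sum_{t\le k,\ \mathrm{high}} p(t) > N^{\high}(k)\,B_\beta(k).
\]
Dividing by $B_\beta(k)$ yields $N^{\high}(k)\le s(k)/B_\beta(k)$, and unwinding the definition $1/B_\beta(k)=\max(1,k\ln^{-\beta}k)$ finishes the claim. In the regime $k\le 16$ we have $1/B_\beta(k)=1$, while $k\ln^{-\beta}k$ may differ from $1$ there; we interpret the stated formula for $k>16$, or simply use $1/B_\beta(k)$ directly.

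The only real obstacle is the numeric check in part 1 that $\ln^\beta t\ge t$ holds on $(16,e^\beta]$, which relies on the choice of $16$ together with $\beta\ge e$.
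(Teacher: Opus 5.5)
Your overall plan matches the paper's. The sign of the derivative shows $\ln^\beta t/t$ rises up to $e^\beta$ and falls afterwards, you show the truncation at $1$ hides the rising stretch, and Part 2 is the same counting argument using $B_\beta(t)\ge B_\beta(k)$ for $t\le k$. Your remark that $1/B_\beta(k)=\max(1,k\ln^{-\beta}k)$ is not literally an identity for $k\le 16$ is correct. It is harmless, because the stated factor is then at least $1=1/B_\beta(k)$, so the inequality still holds.

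The one place you diverge is the proof that $\ln^\beta t\ge t$ on $(16,e^\beta]$, and your numerical step there is false as written. In fact $\ln\ln 16\approx 1.01978<1.02$. Worse, $e\ln\ln 16\approx 2.77205<\ln 16\approx 2.77259$, equivalently $\ln^{e}16\approx 15.99<16$. So $g(16)\ge 0$ does \emph{not} follow from $\beta\ge e$.

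It does hold in the only case where you need it. The interval is nonempty only when $e^\beta>16$, i.e.\ $\beta>\ln 16$, and then $\beta\ln\ln 16>\ln 16$ simply because $\ln\ln 16>1$. With that one-line substitution your argument is complete. Your evaluation $g(e^\beta)\ge 0$ plays no role, since a non-decreasing $g$ is minimized at the left endpoint.

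The paper avoids the delicate arithmetic at $16$ with the $\beta$-free reformulation $\ln t/\ln\ln t\le\beta$. The function $\ln t/\ln\ln t$ is increasing on $[e^e,\infty)$ and $e^e<16$, so only the right endpoint matters, where it equals $\beta/\ln\beta\le\beta$.
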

\begin{remark}
    The argument in \cite{MarcoS17,demarco2022timeenergyefficientcontention} shows that $s(n)=\log^{\Theta(1)}n$ when the latency is $n\log^{O(1)}n$. Hence, $N^{\mathrm{high}_\beta}(L(n,q(n)))=o(n/\log n)$ for sufficiently large constant $\beta$.
\end{remark}
\begin{proof}
\mbox{}\\
\noindent
\emph{Proof of Part 1:}
    We compute
    $$
    \frac{\mathrm d}{\mathrm dt}\frac{\ln^\beta t}{t}=\frac{(\beta-\ln t)\ln^{\beta-1}t}{t^2}.
    $$
    This derivative is positive for $t<e^\beta$ and negative for $t>e^\beta$, so $\ln^\beta t/t$ increases on $[1,e^\beta]$ and decreases on $[e^\beta,\infty)$. Thus, $B_\beta(t)$ is non-increasing on $[e^\beta,\infty)$. 
    
    Now consider $t\in[e^e,e^\beta]$. By definition, $B_\beta(t)=1$ if $\ln^{\beta} t\ge t$, which is equivalent to $\frac{\ln t}{\ln\ln t}\le\beta$. Since $\frac{\ln t}{\ln\ln t}$ increases on $[e^e,\infty)$, we have $\frac{\ln t}{\ln\ln t}\le \frac{\ln e^\beta}{\ln\ln e^\beta}=\beta/\ln \beta\le\beta$. Hence $B_\beta(t)=1$ for all $t\in[e^e,e^\beta]$. Therefore, $B_\beta(t)$ is monotonically nonincreasing.

\medskip
\noindent
\emph{Proof of Part 2:}
    For any $t\in[1,k]$, $\mathrm{low}_\beta(t)=0$ implies $p(t)>B_\beta(t)\geq B_\beta(k)$. 
    The number of such positions is thus at most 
    \[
    \frac{s(k)}{B_\beta(k)}=\max\left(1,\frac{k}{\ln^{\beta}k}\right)s(k).
    \]
\end{proof}

To analyze the latency, we introduce the following notion:
\begin{definition}[Restricted Time Window]
\label{def:restricted-window}
Given a protocol $\mathcal{D}$, we define the
\emph{restricted time window} $\tilde{T} = \tilde{T}_{\mathcal{D}}(n)$ as the largest integer $\tilde{T}$
for which there exists an adversary $\mathcal{A}$ such that:

\begin{enumerate}
    \item there exists a party $u^*$ activated at time $0$ that succeeds in the
    interval $[1,\tilde{T}]$ with probability at most $1 - 4^{-1/8} \le 0.16$ under $\mathcal{A}$. 
    \item at most $2n/3$ parties are activated in $[0,\tilde T]$.
\end{enumerate}
\end{definition}

Note that $\tilde{T}_{\calD}(n) < L_{\calD}(n,1/2) < L_{\calD}(n,n^{-2})$
lower bounds both latency bounds of interest.  The following simple adversary ensures no success occurs 
in the first $n/\log^2 n$ slots 
with high probability.

\begin{lemma}\label{lem:simple-adversary}
Fix any acknowledgment-based protocol $\calD$ with $\eta = p(1)>0$
and any sufficiently large integer $n\ge n_0(\eta)$.  Then
\begin{enumerate}
    \item If there exists an adversary that ensures $\hat\sigma[t]\ge 10\ln n$ for $t\in \lbrack 1,T_0\rbrack$ for an integer $T_0\le n^2$, then under this adversary, $\Pr[\exists t\in[1,T_0].\, \sum_{u\in A[t]}X^{(u)}_{t-t_u}\le 1]\le \frac{1}{n^8}$.
    \item There exists an adversary that uses $\lfloor n/3\rfloor$ parties and guarantees that $\hat\sigma[t]\ge 10\ln n$ for $t\in \lbrack 1,\lfloor n/\ln^2 n\rfloor\rbrack$. Thus there are no successes in the interval $\lbrack 1,\lfloor n/\ln^2 n\rfloor$ with high probability.
\end{enumerate}
\end{lemma}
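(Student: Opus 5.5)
Part 1 is a direct application of \Cref{lem:prob-of-success}(2) plus a union bound; Part 2 uses an explicit oblivious adversary that wakes $\Theta(\ln n/\eta)$ parties in every slot.

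\emph{Part 1.} Fix $t\in[1,T_0]$ and bound the probability that at most one party grabs the resource at $t$. Because the adversary is oblivious, $\hat A[t]$ is fixed in advance. Couple the execution with a ``ghost'' execution in which every party keeps flipping its coin $X^{(u)}_{t-t_u}$ even after it has succeeded. Consider the event that the first slot $t'\le T_0$ in which at most one party grabs in the ghost execution is never reached. On that event, no party has succeeded in any slot before $t'$, so the real and ghost executions coincide up to $t'$. It therefore suffices to bound, for each $t$, $\Pr[\sum_{u\in\hat A[t]}X^{(u)}_{t-t_u}\le 1]$, where the sum ranges over all woken parties and the $X$'s are independent with total mean $\hat\sigma[t]\ge 10\ln n$. \Cref{lem:prob-of-success}(2) is stated with $\sigma$, but its proof only uses independence and the mean, so it applies to the ghost sum. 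It gives the bound $e^{-\sigma}+\sigma e^{1-\sigma}$. This expression is decreasing in $\sigma$ for $\sigma\ge 1$, so it is at most $(1+10e\ln n)n^{-10}$. A union bound over the $T_0\le n^2$ slots gives $n^2(1+10e\ln n)n^{-10}\le n^{-8}$ for $n\ge n_0$.

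\emph{Part 2.} Let $k=\lceil 10\ln n/\eta\rceil$, and let the adversary wake $k$ fresh parties at each time $0,1,\dots,\lfloor n/\ln^2 n\rfloor-1$. For every $t\in[1,\lfloor n/\ln^2 n\rfloor]$, the $k$ parties woken at $t-1$ are at local time $1$. Since every $p(\cdot)\ge 0$, we get $\hat\sigma[t]\ge k\eta\ge 10\ln n$. The number of parties used is $k\lfloor n/\ln^2n\rfloor\le (10\ln n/\eta+1)\,n/\ln^2 n=O(n/(\eta\ln n))$, which is at most $\lfloor n/3\rfloor$ once $n\ge n_0(\eta)$. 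Any leftover parties up to $\lfloor n/3\rfloor$ can be woken at time $0$; this only increases $\hat\sigma$. Since $T_0=\lfloor n/\ln^2 n\rfloor\le n^2$, Part 1 then yields no success in $[1,\lfloor n/\ln^2 n\rfloor]$ with probability $\ge 1-n^{-8}$.

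The only delicate point is that $\hat\sigma$ is static while real contention drops as parties succeed. The ghost coupling handles this, since no success can occur before the first slot with at most one attempt; everything else is routine.
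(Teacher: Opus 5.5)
Your route is the paper's. Both use the per-slot bound $e^{-\sigma}+\sigma e^{1-\sigma}$ from \Cref{lem:prob-of-success}(2) at contention at least $10\ln n$. Both take a union bound over slots, justified because nobody can succeed before the first slot with at most one grab. Both use the same adversary waking $\lceil 10\ln n/\eta\rceil$ parties at each of the times $0,\dots,\lfloor n/\ln^2 n\rfloor-1$. Your ghost coupling is a cleaner version of the paper's step $\sum_t\Pr[\,\cdot\mid A[t]=\hat A[t]]$. It uses only unconditional marginals and independence across parties, so it is valid for arbitrary acknowledgment-based protocols. The paper instead bounds the conditional probability by $e^{-\sigma}+\sigma e^{1-\sigma}$, which tacitly assumes the coins at time $t$ are independent of the history, i.e., memorylessness.

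There is one false step, and the paper's proof makes the identical slip. The quantity $n^2(1+10e\ln n)n^{-10}=(1+10e\ln n)n^{-8}$ exceeds $n^{-8}$ for every $n\ge 2$, so ``for $n\ge n_0$'' does not rescue it. A sharper per-slot estimate will not rescue it either: with many tiny grab probabilities, the per-slot probability approaches the Poisson value $(1+10\ln n)n^{-10}$. What your argument actually proves for Part 1 is the bound $T_0(1+10e\ln n)n^{-10}$, which is at most $n^{-7}$ for large $n$ when $T_0\le n^2$. For Part 2, do not cite Part 1 with $T_0\le n^2$. Instead redo the union bound with $T_0=\lfloor n/\ln^2 n\rfloor$, which gives $\frac{n}{\ln^2 n}(1+10e\ln n)n^{-10}\le n^{-8}$ for large $n$. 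Every later use of this lemma in the paper takes $T_0=\lfloor n/\ln^2 n\rfloor$, so this corrected version is all that is needed downstream.
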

\begin{proof}
\mbox{}\\
\noindent
\emph{Proof of Part 1:}
By~\Cref{lem:prob-of-success}, we know:
\begin{align*}
\Prob{\exists t\in[1,T_0].\: \sum_{u\in A[t]}X^{(u)}_{t-t_u}\le 1} & 
\le \sum_{t\in[1,T_0]} \Prob{\sum_{u\in A[t]} X_{t-t_u}^{(u)}\le 1 \;\middle|\; A[t]=\hat{A}[t]}\\
&\le T_0\cdot (e^{-\sigma[t]}+\sigma[t]e^{-\sigma[t]+1})\\
&\le T_0 \cdot \frac{10e\ln n+1}{n^{10}}
\:\le\: \frac{1}{n^{8}}.
\end{align*}

\emph{Proof of Part 2:}
We activate $\lceil 10\ln n/\eta\rceil$ parties for each time slot $0\le t< \lfloor n/\ln^2 \rfloor$, then we can notice that:
\begin{itemize}
\item The total number of parties is $\lceil 10\ln n/\eta\rceil\cdot \lfloor n/\ln^2 n\rfloor\le n\cdot \frac{10\ln n+\eta}{\eta\ln^2 n}\le n/3$.
\item For each $t\le T_0$, $\hat\sigma[t] \ge \lceil 10\ln n/\eta\rceil\cdot \eta\ge 10\ln n$. Based on \emph{Part 1}, we know there is no successes in $[1,\lfloor n/\ln^2 \rfloor]$ with probability at least $1-n^{-8}$. 
\end{itemize}
\end{proof}
The next lemma provides the first lower bound of $\tilde T_{\calD}(n)$, generalizing~\cite[Lemma 4.6]{demarco2022timeenergyefficientcontention}.
\begin{lemma}\label{lem:restricted-window}
Fix constants $\beta\ge e$, $\eta \in (0,1]$, and error probability $q(n)$, which is either $q(n)=1/2$ or $q(n)=n^{-2}$. 
For any protocol $\calD$ with $\eta \bydef p(1)>0$, 
we have the following:
\begin{enumerate}
\item Trivially, $\tilde{T}_{\calD}(n)< L_{\calD}(n,q(n))$ for $n\ge 1$.
\item $\tilde{T}_{\calD}(n)\ge \frac{s(\lfloor n/\ln^2 n\rfloor) \cdot n}{40\ln n}\ge \frac{n}{\ln^2 n}\,\,\,\text{for sufficiently large }n$.
\item If $L_{\calD}(n,q(n))= o(n\log^{\beta/2-1} n)$, then $N^{\high}(\lceil n\log^{\beta/2-1}n\rceil)\le o(n/\log n)$.
\end{enumerate}
\end{lemma}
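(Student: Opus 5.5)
Part 1 is immediate from the definitions. Take the adversary that activates only $u^*$, or any adversary using at most $2n/3 \le n$ parties. At time $\tilde T$, $u^*$ fails with probability at least $4^{-1/8} > q(n)$ for both choices of $q$. Hence $\Pr[L^{(u^*)} \le \tilde T] < 1-q$, which forces $L_{\calD}(n,q(n)) > \tilde T_{\calD}(n)$.

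For Part 2, I combine the adversary of \Cref{lem:simple-adversary}(2) with a random-spreading adversary, as in~\cite[Lemma 4.6]{demarco2022timeenergyefficientcontention}.

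\textbf{Choose the window.} Let $T = \lfloor s(m)n/(40\ln n)\rfloor$ with $m=\lfloor n/\ln^2 n\rfloor$. If $T\le m$, the bound follows from the simple adversary alone: it keeps $\hat\sigma\ge 10\ln n$ on $[1,m]$ using $n/3$ parties, so $u^*$ has no success there w.h.p.

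\textbf{Spread parties in blocks.} Otherwise, wake $\lfloor n/3\rfloor$ additional parties. Partition $[0,T]$ into consecutive blocks of length $m$. Each block receives $\approx (n/3)\, m/T$ parties, placed so that every slot $t$ receives about $n/(3T)\cdot(\text{const})$ parties, in a periodic pattern at granularity $m$. A slot $t$ then gets contention from parties woken at times $t-1,\dots,t-m$, which sums to about
\[
\frac{n}{3T}\, s(m) \;\ge\; \frac{40\ln n}{3} \;\ge\; 10\ln n .
\]
Concretely, I would wake $\lceil n s(m)/(3T s(m))\rceil$-type counts deterministically each slot. Writing $c=\lfloor n/(3T)\rfloor$, this means waking $c$ parties per slot on $[0,T]$. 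Then $\hat\sigma[t]\ge c\,s(\min(t,m))$, and for $t\le m$ the simple adversary covers the slot. The total is $cT\le n/3$, so together with the simple adversary at most $2n/3$ parties are used.

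\textbf{Apply Part 1.} \Cref{lem:simple-adversary}(1) then gives no success at all in $[1,T]$ w.h.p., so $u^*$ fails with probability $\ge 4^{-1/8}$.

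\textbf{The obstacle: rounding.} If $c=0$, i.e.\ $T>n/3$, the deterministic count breaks down. I would handle this by waking one party at each slot of an arithmetic progression with gap $g=\lceil 3T/n\rceil\le m$. Each $t>m$ still sees $\ge s(m)/g$ up to rounding; more carefully, sum $p$ over local times $\{j,j+g,\ldots\}\cap[1,m]$ and choose the offset $j$ to maximize this sum, so it is at least $s(m)/g$. Proving this fixed-offset bound holds simultaneously for all $t$ is the main obstacle. Randomizing the offset per party and applying a Chernoff bound fails for large $p$, so instead I would use averaging over the $g$ residue classes with a union bound. The final inequality $\ge n/\ln^2 n$ uses $s(m)\ge p(1)=\eta$ together with $\eta n/(40\ln n)\ge n/\ln^2 n$ for large $n$.

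For Part 3, apply \Cref{lem:basic-filter}(2) with $k=\lceil n\log^{\beta/2-1}n\rceil$, giving
\[
N^{\high}(k)\le k\,s(k)/\ln^\beta k .
\]
It then suffices to show $s(k)=O(\log^{\beta/2}n)$ in order to get $o(n/\log n)$; a slightly smaller bound would be needed for strict $o$. To bound $s(k)$, invert Part 2. For $n'$ with $\lfloor n'/\ln^2 n'\rfloor\ge k$, i.e.\ $n'\approx k\log^2 n$, Parts 1 and 2 give
\[
s(k)\le \frac{40\ln n'\,\tilde T(n')}{n'} < \frac{40\ln n'\, L(n',q)}{n'} = o(\log^{\beta/2} n').
\]
Here I use the hypothesis in the form $L(n',q)=o(n'\log^{\beta/2-1}n')$, and monotonicity of $s$ (it is a sum of nonnegative terms). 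Substituting gives $N^{\high}(k)=o(k\log^{\beta/2}n/\log^\beta n)=o(n/\log n)$.
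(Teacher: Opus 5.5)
Your Parts 1 and 3 match the paper's argument. In Part 1 the relevant adversary is the witness from \Cref{def:restricted-window}, not one that activates only $u^*$. Part 3 makes the same choice $n'=\Theta(n\log^{\beta/2+1}n)$ and inverts Part 2 the same way.

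The gap is in Part 2, in the case you flag as the obstacle. It is the main case: $T>n/3$ exactly when $s(m)\gtrsim\tfrac{40}{3}\ln n$, which holds for every protocol of interest. A single arithmetic progression of wake-ups with gap $g$ cannot work. At global time $t$, the local times $t-t_u$ of those parties all lie in one residue class modulo $g$, and that class rotates as $t$ advances. A protocol may put almost all of $s(m)$ on one residue class (e.g.\ $p(j)$ large only when $g\mid j$), so at most slots your parties contribute almost nothing. Choosing the offset to maximize the sum helps only the slots $t$ of one residue. Averaging over residue classes controls the \emph{mean} of $\hat\sigma[t]$ over $t$, but \Cref{lem:simple-adversary}(1) needs a bound at \emph{every} $t$. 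As written, nothing establishes $\hat\sigma[t]=\Omega(\log n)$ on $(m,T]$.

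The missing step is the randomized construction you dismissed, and your reason for dismissing it does not apply here. Give the $\lfloor n/3\rfloor$ extra parties independent, uniformly random wake-up times in $[0,T)$. For each $t\in(m,T]$, $\hat\sigma[t]$ is a sum of independent terms $p(t-t_u)\in[0,1]$ with $\mathbb{E}[\hat\sigma[t]]=(\lfloor n/3\rfloor/T)\,s(t)\ge 12\ln n$. Because the target contention is \emph{logarithmic}, a Chernoff bound with range $v=1$ suffices: \Cref{lem:chernoff-bound}(2) with $\epsilon=2/3$ gives $\Pr[\hat\sigma[t]<4\ln n]\le n^{-3}$. A union bound over the fewer than $n^2$ slots then fixes a deterministic schedule with $\hat\sigma[t]\ge 4\ln n$ on all of $(m,T]$. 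Large values of $p$ only hurt when the target is $\Theta(\log\log n)$, which is why the filter $B_\beta$ appears later and not here.

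To finish, note that the first success time is unchanged if successful parties keep transmitting. So you may use static contention to bound $\sum_{t\in(m,T]}\hat\sigma[t]e^{-\hat\sigma[t]+1}=O(\ln n/n^{2})$, and add the $n^{-8}$ from \Cref{lem:simple-adversary}(1) on $[1,m]$. This gives no success in $[1,T]$ with probability $1-o(1)$ using at most $2\lfloor n/3\rfloor$ parties. A minor point: even in your $c\ge 1$ case, $\lfloor n/(3T)\rfloor$ can be close to half of $n/(3T)$, giving roughly $\tfrac{20}{3}\ln n$ rather than $10\ln n$. That is harmless, but you must adjust the constant when you appeal to \Cref{lem:simple-adversary}(1).
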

\begin{proof}
\mbox{}\\
\noindent
\emph{Proof of Part 1:} By the definition of $\tilde{T}_{\calD}(n)$, there is an adversary that with probability $4^{-1/8}\ge q(n)$, $u^*$ does not succeed by time $\tilde{T}_{\calD}(n)$. Therefore, $\tilde{T}_{\calD}(n)< L_{\calD}(n,q(n))$.  

\medskip
\noindent
\emph{Proof of Part 2:}
Let $T_0=\lfloor n/\ln^2 n\rfloor$, $T_1=\lfloor \frac{s(T_0) \cdot n}{40\ln n}\rfloor$. 
We construct an oblivious adversary $\calA$, which partitions the parties into two sets $S_1$ and $S_2$, each of size $\lfloor n/3\rfloor$.

The wake-up times of the parties in $S_1$ are assigned as in~\Cref{lem:simple-adversary}.
Turning to $S_2$, we assign $\lfloor n/3\rfloor$ parties wake-up times that are independent and uniformly distributed in $[0,T_1)$. 
Then, for each time slot $t\in (T_0,T_1]$,

$$\Expec{\hat\sigma[t]}\geq \sum_{u\in S_2}\Expec{p(t-t_u)}=\sum_{u\in S_2}\frac{1}{T_1}\sum_{t_u=0}^{t-1}p(t-t_u)=\frac{|S_2|}{T_1}s(t)\ge \frac{\lfloor n/3\rfloor}{T_1} s(T_0)\ge 12\ln n.$$

Applying a Chernoff bound (see \Cref{coro:chernoff-bound-2}), $\Pr[\hat\sigma[t]\ge 4\ln n]\ge 1-n^{-2}$. By a union bound, $\Pr[\forall t\in (T_0,T_1].\: \hat\sigma[t]\ge 4\ln n]\ge 1-(T_1-T_0)n^{-2}>0$. 
The upshot is that there exists wake-up schedule such that $\hat\sigma[t]\ge 4\ln n$ for all $t\in (T_0,T_1]$.

Now combine the assignment for $S_1$ and $S_2$. We observe that the probability of seeing \emph{at least one} success in the interval $[1,T_1]$ remains unchanged if successful 
parties do not halt.
That is, even previously successful $u \in \hat{A}[t]$ grab the shared resource whenever 
$X^{(u)}_{t - t_u} = 1$. 
Thus for sufficiently large $n$, 

\begin{align*}
&\Prob{\forall t\in [1,T_1].\; \sum_{u\in A[t]} X^{(u)}_{t-t_u}\not =1}\\&\ge 1-\left(\sum_{t\in (T_0,T_1]}\Prob{\sum_{u\in \hat{A}[t]} X^{(u)}_{t-t_u}=1}\right)-\left(\Prob{\exists t\in[1,T_0].\; \sum_{u\in A[t]} X^{(u)}_{t-t_u}\le 1}\right)\\
&\ge 1-\left(\sum_{t\in (T_0,T_1]} \sigma[t]\cdot e^{-\hat\sigma[t]+1}\right)-\left(\Prob{\exists t\in [1,T_0] .\; \sum_{u\in A[t]} X^{(u)}_{t-t_u}\le 1}\right)\\
\intertext{By~\Cref{lem:simple-adversary} (1) and the fact $\hat\sigma[t]\ge 4\ln n$ for $t\in (T_0,T_1]$,}
&\ge 1-\frac{ns(\lfloor n/\ln^2 n\rfloor)}{40\ln n}\cdot \frac{4e\ln n}{n^4}-\frac{1}{n^8}\\
&\ge 1-\frac{1}{n^2}.
\end{align*}

Therefore, the adversary 
$\calA$ wakes up at most $\frac{2}{3} n$ parties, and with probability at least $1-\frac{1}{n^2}$, there is no success in $[1,T_1]$. This gives us a weak lower bound of $\tilde{T}_{\calD}(n)\ge T_1=\frac{ns(\lfloor n/\ln^2 n\rfloor)}{40\ln n}$ 
for sufficiently large $n\ge n_0(\eta)$.
In addition, since $s(\lfloor n/\ln^2 n\rfloor)\ge \eta$, $\frac{n\cdot s(\lfloor n/\ln^2 n\rfloor}{40\ln n}\ge \frac{n}{\ln^2 n}$ for sufficiently large $n$.

\medskip

\noindent
\emph{Proof of Part 3:} 
Assume $n$ is sufficiently large. We select $n'=\Theta(n\log^{\beta/2+1} n)$ such that $\lceil n\log^{\beta/2-1} n\rceil\le \lfloor n'/\ln^2 n'\rfloor$. By Parts 1 and 2
and the assumption that
$L_{\calD}(n,q(n))=o(n\log^{\beta/2-1} n)$, we know \begin{align}s(\lfloor n'/\ln^2 n'\rfloor)
\le 
\frac{\tilde{T}_{\calD}(n')\cdot 40\ln n'}{n'}
\le 
\frac{o(n'\log^{\beta/2-1} n')\cdot 40\ln n'}{n'}=o(\log^{\beta/2} n').\label{eq5:high-slots}\end{align}

By~\Cref{lem:basic-filter}, we can upper bound the number of high-probability time slots by
\begin{align*}N^{\high}(\lceil n\log^{\beta/2-1} n\rceil)
&\le N^{\high}(\lfloor n'/\ln^2 n'\rfloor)
\le \frac{s(\lfloor n'/\ln^2 n'\rfloor)}{B_{\beta}(\lfloor n'/\ln^2 n'\rfloor)}\\
\intertext{According to the definition of $B_{\beta}(\cdot)$ and~\Cref{eq5:high-slots}, the above can be upper bounded by}
&\le o(\log^{\beta/2} n')\cdot\left(\frac{\lfloor n'/\ln^2 n'\rfloor}{\ln^{\beta} \lfloor n'/\ln^2 n'\rfloor}\right)\\
&=o(n'\log^{-\beta/2-2} n)
\;=\; o(n/\log n).
\end{align*}
\end{proof}

\subsection{Construction of the Adversary}
\label{subsec:construct-adversary}
In this subsection, we construct an adversary that maintains $\hat\sigma[t]=\Omega(\log\log n)$ over a longer interval,
whose length depends on
$s^{\mathrm{low}_\beta}$. 
At this level of contention there will be \emph{some} successes, so to bound the number of successes we need to prove that $\sigma[t]$ is also $\Omega(\log\log n)$, despite successful parties dropping out of the system and reducing the contention on future time slots.

The following general lemma shows how to construct an adversary that maintains $\hat\sigma[t]\ge \ell$ in a long interval and relate the length of the interval to $s^{\filter }$.

\begin{lemma}\label{lem:general-adversary}
    We are given a protocol $\mathcal{D}$, which defines $\eta\bydef p(1)>0$, a sufficiently large $n\ge n_0(\eta)$, 
    an integer $T_0 \in [1,n]$, a real parameter 
    $\ell\geq 1$, and a filter function $\filter$. Let $V=\max_{i\in [1,T_0]} p^{\filter}(i)$. If $e^{-\ell/(7V)}<1/n^2$, then there exists an oblivious adversary that:
    \begin{itemize}
        \item wakes up at most $\lfloor n/3\rfloor$ parties.
        \item $\hat\sigma^{\filter}[t]\ge \ell$ for all $t\in [T_0,T_1]$, where $T_1\bydef\left\lfloor \dfrac{n s^{\filter}(T_0)}{8\ell}\right\rfloor$. 
    \end{itemize}
\end{lemma}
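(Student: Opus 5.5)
The plan is the probabilistic method, mirroring Part 2 of \Cref{lem:restricted-window}: pick random wake-up times, show the expected filtered contention is large, and show concentration using that each summand is bounded by $V$.

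\textbf{Construction.} Let $m=\lfloor n/3\rfloor$ and give each of $m$ parties an independent wake-up time $t_u$, uniform on $[0,T_1)$. (If $T_1<T_0$ there is nothing to prove, so assume $T_1\ge T_0$.)

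\textbf{Expectation.} Fix $t\in[T_0,T_1]$. Then
\[
\Expec{\hat\sigma^{\filter}[t]}=\frac{m}{T_1}\sum_{j=1}^{t}p^{\filter}(j)=\frac{m}{T_1}s^{\filter}(t)\ge \frac{m}{T_1}s^{\filter}(T_0).
\]
This is at least
\[
\frac{\lfloor n/3\rfloor\cdot 8\ell}{n s^{\filter}(T_0)}\,s^{\filter}(T_0)\ge 2\ell
\]
for $n\ge n_0(\eta)$, since $\lfloor n/3\rfloor\ge n/4$ for large $n$. The factor $8$ leaves ample slack.

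\textbf{Concentration.} Each summand $p^{\filter}(t-t_u)$ ranges in $[0,\max_i p^{\filter}(i)]$, which is where the main obstacle lies. The hypothesis only bounds $p^{\filter}(i)\le V$ for $i\le T_0$, while $t-t_u$ can be as large as $T_1$.

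I would handle this by truncation. Replace each summand by $Y_u=p^{\filter}(t-t_u)\mathbf 1\{t-t_u\le T_0\}$, which lies in $[0,V]$, is independent across $u$, and satisfies $\hat\sigma^{\filter}[t]\ge\sum_u Y_u$. Its mean is
\[
\Expec{\textstyle\sum_u Y_u}=\frac{m}{T_1}s^{\filter}(\min(t,T_0))=\frac{m}{T_1}s^{\filter}(T_0)\ge 2\ell,
\]
because $t\ge T_0$. So the expectation bound survives truncation unchanged, and this is exactly why the statement uses $s^{\filter}(T_0)$.

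Now rescale by $V$ and apply the multiplicative Chernoff lower tail (\Cref{coro:chernoff-bound-2}) with $\mu\ge 2\ell/V$ and deviation factor $1/2$. This gives
\[
\Pr\Bigl[\textstyle\sum_u Y_u<\ell\Bigr]\le e^{-\mu/8}\le e^{-\ell/(4V)}\le e^{-\ell/(7V)}<n^{-2}.
\]
If the Chernoff form in the appendix has a worse constant, the slack between $8$ and $7$ in the hypotheses should absorb it.

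\textbf{Union bound.} Sum the failure probability over all $t\in[T_0,T_1]$, giving at most $T_1 n^{-2}$. It remains to check $T_1<n^2$. Since $s^{\filter}(T_0)\le T_0\le n$ and $\ell\ge1$, we get $T_1\le n^2/8<n^2$. Hence the failure probability is below $1$, so some deterministic schedule of $\lfloor n/3\rfloor$ wake-ups achieves $\hat\sigma^{\filter}[t]\ge\ell$ throughout $[T_0,T_1]$. Fixing it yields the oblivious adversary.
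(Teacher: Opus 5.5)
Your proof is correct and follows the paper's argument: $\lfloor n/3\rfloor$ parties get independent uniform wake-ups in $[0,T_1)$, the expected filtered contention is at least $2\ell$, the Chernoff lower tail of \Cref{coro:chernoff-bound-2} applies, and a union bound runs over at most $T_1\le n^2/8$ slots. Your truncation to local times at most $T_0$ is a real improvement in rigor: the paper applies the Chernoff bound with range $V$ directly to $\hat\sigma^{\filter}[t]$, even though $V$ only bounds $p^{\filter}$ on $[1,T_0]$ while local times up to $T_1$ occur (this is harmless in the paper's application, where the filter caps $p^{\filter}$ at $B_\beta(\sqrt{n})$ at every local time), and your observation that the truncated mean is still $\frac{\lfloor n/3\rfloor}{T_1}s^{\filter}(T_0)\ge 2\ell$ closes that gap for a general filter.
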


\begin{proof}
    We independently assign the activation time of the $\lfloor n/3\rfloor$ parties to slots in $[0,T_1-1]$ uniformly at random,
    with replacement. For any time $t\in [T_0,T_1]$. The expected filtered aggregate contention at time $t$ is
    
$$
\E[\hat\sigma^{\filter}[t]]=\sum_{u=1}^{\lfloor n/3\rfloor}\frac{1}{T_1}\sum_{t_u=0}^{t-1}p^{\filter}(t-t_u)\geq \frac{\lfloor n/3\rfloor}{ns^{\filter}(T_0)/(8l)}\cdot s^{\filter}(T_0)\geq 2\ell. 
$$

\noindent Since the parties are activated independently, we apply a Chernoff bound (\Cref{coro:chernoff-bound-2}) to get

$$
    \Pr[\hat\sigma^{\filter}[t]<\ell]\leq e^{-\ell/(7V)}< 1/n^2.
$$
    
\noindent By a union bound over $[T_0,T_1]$,
$$
\Pr\left[\exists t\in[T_0,T_1] \text{ such that } \hat\sigma^{\filter}[t]<l\right]\le n^{-2}\left\lfloor \dfrac{n s^{\filter}(T_0)}{8\ell}\right\rfloor \le n^{-2}\cdot \frac{nT_0}{8}<1.
$$
Thus, there exists an oblivious adversary such that $\hat\sigma^{\filter}[t]\geq \ell$ for all $t\in[T_0,T_1]$. 
\end{proof}

Now assume that we have a lower bound for $s^{\low}(n)-s^{\low}(\sqrt{n})$ for sufficiently large $n$. Consider the filter $\filter=\low^{(\ge \sqrt{n})}$, where $\low^{(\ge \sqrt{n})}(i) = \low(i)\cdot \mathbf{1}\{i\ge \sqrt{n}\}$. We can construct an adversary as follows: 

\begin{theorem}\label{thm: high-contention construction}
    For any constants $\eta\in(0,1],
    \gamma>0,
    \beta\ge 10$ and acknowledgment-based \ContentionResolution{} protocol $\mathcal D$ satisfying $p(1)=\eta$, there exists an oblivious adversary such that for sufficiently large $n\ge n_0(\eta,\beta,\gamma)$, 
    $n_0$ depending only on $\eta,\beta$, and $\gamma$, 
    
    \begin{itemize}
        \item $\hat\sigma[t]\ge 10\ln n$ for all $t\in [1,T_0(n)]$,
        \item $\hat\sigma^{\mathrm{low}^{(\ge \sqrt{n})}_\beta}[t]\ge \gamma\ln\ln n$ for all $t\in [T_0(n), T_1(n)]$,
        \item The adversary wakes up at most $2n/3$ parties in $[1,T_1(n)]$,
    \end{itemize}
    
     where $T_0,T_1$ are defined to be
     \begin{align*} T_0(n)&\bydef\lfloor n/\ln^2 n\rfloor\\
     T_1(n)&\bydef\left\lfloor \dfrac{ns^{\mathrm{low}_\beta^{(\ge \sqrt{n})}}(T_0(n))}{8\gamma\ln\ln n}\right\rfloor=\left\lfloor \dfrac{n(s^{\mathrm{low}_\beta}(\lfloor n/\ln^2 n\rfloor)-s^{\mathrm{low}_\beta}(\lfloor\sqrt{n}\rfloor))}{8\gamma\ln\ln n}\right\rfloor.
     \end{align*}

\end{theorem}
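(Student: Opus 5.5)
The plan is to combine two independent sub-adversaries on disjoint sets of parties: the set $S_1$ from \Cref{lem:simple-adversary}(2) and a set $S_2$ obtained from \Cref{lem:general-adversary}. Each uses at most $\lfloor n/3\rfloor$ parties, so together they use at most $2n/3$. Static contention is additive over disjoint sets of parties, and all summands are nonnegative. Hence any lower bound on $\hat\sigma[t;S_1]$ or on $\hat\sigma^{\filter}[t;S_2]$ is also a lower bound on the corresponding quantity for the full activated set.

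For the first bullet, \Cref{lem:simple-adversary}(2) directly gives $\hat\sigma[t;S_1]\ge 10\ln n$ for all $t\in[1,T_0(n)]$. It wakes $\lceil 10\ln n/\eta\rceil$ parties per slot in $[0,T_0)$, for a total of at most $\lfloor n/3\rfloor$ once $n\ge n_0(\eta)$.

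For the second bullet, apply \Cref{lem:general-adversary} with filter $\filter=\mathrm{low}_\beta^{(\ge\sqrt n)}$, threshold $\ell=\gamma\ln\ln n$, and $T_0=T_0(n)\in[1,n]$. The lemma produces exactly the stated $T_1(n)$, since $s^{\filter}(T_0)=s^{\mathrm{low}_\beta}(T_0)-s^{\mathrm{low}_\beta}(\lfloor\sqrt n\rfloor)$; up to the boundary index $\lfloor\sqrt n\rfloor$ versus $\lceil\sqrt n\rceil$, this just means defining the filter with the appropriate strict or weak inequality. Its hypothesis $\ell\ge1$ holds for large $n$.

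The only real hypothesis to check is $e^{-\ell/(7V)}<n^{-2}$, where $V=\max_{i\le T_0}p^{\filter}(i)$. Here the filter does its job. Every surviving index satisfies $i\ge\sqrt n$ and $p(i)\le B_\beta(i)$. Since $B_\beta$ is non-increasing by \Cref{lem:basic-filter}(1), we get
\[
V\le B_\beta(\sqrt n)\le \frac{\ln^\beta\sqrt n}{\sqrt n}=\frac{(\tfrac12\ln n)^\beta}{\sqrt n}.
\]
Therefore $\ell/(7V)\ge \gamma\ln\ln n\cdot\sqrt n/(7(\ln n/2)^\beta)$, which exceeds $2\ln n$ for $n\ge n_0(\beta,\gamma)$. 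One corner case is when no index in $[\sqrt n,T_0]$ survives the filter. Then $s^{\filter}(T_0)=0$, $T_1(n)=0$, and the second bullet is vacuous; in that case we simply take $S_2=\emptyset$, or set $V$ to any tiny positive value.

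The main subtlety I expect is bookkeeping rather than substance. First, we must make sure the interval for the second bullet is handled correctly when $T_1<T_0$, which is again vacuous. Second, the third bullet counts parties woken in $[1,T_1(n)]$, and this count is bounded by $|S_1|+|S_2|\le 2\lfloor n/3\rfloor\le 2n/3$ regardless of when they wake. Finally, $n_0$ is taken as the maximum of the thresholds from the two lemmas and from the inequality on $V$, so it depends only on $\eta,\beta,\gamma$.
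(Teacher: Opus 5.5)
Your proposal is correct and is essentially the paper's proof. You place $\lfloor n/3\rfloor$ parties as in \Cref{lem:simple-adversary}(2), then apply \Cref{lem:general-adversary} with $\mathcal{I}=\mathrm{low}_\beta^{(\ge\sqrt n)}$ and $\ell=\gamma\ln\ln n$, and you verify $e^{-\ell/(7V)}<n^{-2}$ via $V\le B_\beta(\sqrt n)$. You are somewhat more careful than the paper's one-paragraph argument on the vacuous cases and the $\lfloor\sqrt n\rfloor$ boundary index. Your bound $p^{\mathcal{I}}(i)\le B_\beta(\sqrt n)$ for \emph{all} surviving $i$, not only $i\le T_0$, is also exactly what the Chernoff step needs, since local times up to $T_1$ occur there.
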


\begin{proof}
We assign $\lfloor n/3\rfloor$ parties in $[0,T_0]$ as in~\Cref{lem:simple-adversary}. Let $\filter=\low^{(\ge \sqrt{n})}$ and $\ell =\gamma \ln\ln n$. Since $V=\max_{t\in[1,T_0]} p^{\low^{(\ge \sqrt{n})}}(t)=B_{\beta}(\lfloor \sqrt{n}\rfloor)$, $e^{-\ell/(7V)}=e^{-\Omega(\sqrt{n}/\polylog(n))}$,
which is less than $1/n^2$ 
when $n$ is sufficiently large,
as a function of $\beta,\gamma$.
Therefore, we can apply~\Cref{lem:general-adversary} in $[T_0,T_1]$ with $\filter=\low^{(\ge \sqrt{n})}$ and $\ell =\gamma \ln\ln n$.

Combining the above two results yields
the adversary satisfying the desired conditions.
\end{proof}

Given error probability function $q(n)$, which is either
$q(n)= 1/2$ or $q(n)= n^{-2}$, 
we can prove that the adversary of~\Cref{thm: high-contention construction} 
guarantees a low success probability for a specific party $u^*$ in $\left[ 1,T_1(n)\right]$ for appropriate choice of $\beta$ and $\gamma$, and therefore provides a lower bound for the \emph{restricted time window} $\tilde{T}_{\calD}(n)$. In particular, we have the following theorem.
\begin{restatable}{theorem}{RestrictedWindowLowerBound}\label{thm:restricted-window-lower-bound}
Given a memoryless protocol $\calD$, error probability $q(n)=1/2$ or $q(n)=n^{-2}$,
and constants $\beta\ge 10,\eta\bydef p(1)>0$, 
if 
$L_{\calD}(n,q(n))=o(n\log^{\beta/2-1} n)$, then for all sufficiently large $n$,
$$
\tilde{T}_{\calD}(n)\ge \left\lfloor \dfrac{n(s^{\mathrm{low}_\beta}(\lfloor n/\ln^2 n\rfloor)-s^{\mathrm{low}_\beta}(\lfloor\sqrt{n}\rfloor))}{40\beta\ln\ln n}\right\rfloor.$$
\end{restatable}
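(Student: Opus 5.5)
We apply the adversary of \Cref{thm: high-contention construction} with $\gamma=5\beta$, so that $T_1(n)$ coincides with the claimed bound. The party $u^*$ is added at time $0$. The adversary wakes at most $2n/3$ parties, so condition 2 of \Cref{def:restricted-window} holds. It remains to show that $u^*$ succeeds in $[1,T_1]$ with probability at most $0.16$.

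On $[1,T_0]$, \Cref{lem:simple-adversary}(1) gives that no slot has at most one grabber, except with probability $n^{-8}$. So no party, in particular not $u^*$, succeeds there, and $A[t]=\hat A[t]$ up to time $T_0$.

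For $(T_0,T_1]$ we follow the segment induction outlined in the overview. We partition $(T_0,T_1]$ into segments of length $\delta=\polylog(n)$ (say $\delta=\ln^{c}n$) and prove by induction that each segment is \emph{low-density}, meaning it contains at most $\delta/\ln^{\Omega(1)}n$ successes, with failure probability $n^{-\omega(1)}$ per segment.

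The inductive step has two parts. First, conditioned on the history $H_{t_0}$ up to the start of a segment, and assuming all earlier segments are low-density, we show $\sigma^{\mathrm{low}_\beta^{(\ge\sqrt n)}}[t]\ge \tfrac45\hat\sigma^{\mathrm{low}_\beta^{(\ge\sqrt n)}}[t]\ge 4\beta\ln\ln n$ for every $t$ in the segment. The removed contention is at most a sum over successful parties $v$ of $p^{\mathrm{low}}(t-t_v)\le B_\beta(t-t_v)$. Because the filter only keeps local times $\ge\sqrt n$, every term is at most $\ln^\beta n/\sqrt n$. So a successful party can lose contention at time $t$ only if it was woken at least $\sqrt n$ before $t$, and then it contributes only a tiny amount. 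Grouping the successes by segment and using low-density of each earlier segment, the total lost filtered contention is at most roughly
\[
\sum_{j}\frac{\delta}{\ln^{\Omega(1)}n}\cdot B_\beta(\text{distance}),
\]
which is a harmonic-type sum of order $\ln^{\beta+1}n/\ln^{\Omega(1)}n$ times a small factor. We need the low-density exponent chosen large relative to $\beta$ so that this stays at most $\tfrac15\gamma\ln\ln n$.

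Second, given $\sigma[t]\ge 4\beta\ln\ln n$ for all $t$ in the segment, \Cref{lem:prob-of-success}(2) bounds the success probability at each slot by $\sigma e^{1-\sigma}\le \ln^{-3\beta}n$. The probability of any single slot succeeding is bounded in this way even though the contention is dynamic, because it depends only on $H_t$. A martingale or Chernoff bound with stochastic domination then gives at most $\delta\ln^{-2\beta}n$ successes in the segment, except with probability $e^{-\Omega(\delta\ln^{-2\beta}n)}$, which is negligible once $c>2\beta$. A union bound over at most $n\,\polylog(n)$ segments finishes the induction. The hypothesis $L_\calD=o(n\log^{\beta/2-1}n)$ enters here, through \Cref{lem:restricted-window}(1,3), to guarantee $T_1\le n\polylog n$. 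Here $s^{\mathrm{low}}\le s$ is bounded via Part 2.

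Finally, on the good event the success probability of $u^*$ at slot $t$ is at most $p(t)\,e^{-\sigma[t]+O(1)}$. Summing over $(T_0,T_1]$ gives at most
\[
s(T_1)\cdot e\cdot \ln^{-4\beta}n+n^{-\omega(1)}.
\]
We have $s(T_1)\le \polylog$ with exponent below $4\beta$: by \Cref{lem:basic-filter}, $s(T_1)$ is at most the high slots plus $s^{\mathrm{low}}$, and the low part is at most $\sum B_\beta=O(\ln^{\beta+1}n)$. There is one complication. At high-probability slots, $u^*$'s own $p$ may be large, but $\sigma$ excluding $u^*$ is still $\ge4\beta\ln\ln n$, so $u^*$'s success probability there is at most $e^{-\sigma_{-u^*}}\le\ln^{-4\beta}n$. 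These slots number at most $T_1$, which is too many to sum naively, so this step needs care. Instead, we bound the success probability per slot by $p_{u^*}(t)e^{-\sigma_{-u^*}}$ and use $s(T_1)\le \ln^{\beta+1}n+o(\ln^{\beta/2}n)$ from \Cref{eq5:high-slots}. This gives total success probability $o(1)<0.16$. The main obstacle is the contention-preservation step $\sigma\ge\frac45\hat\sigma$, in particular choosing $\delta$ and the density exponent consistently with $\beta$.
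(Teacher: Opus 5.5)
Your proposal is correct and is essentially the paper's proof. It uses the same adversary from \Cref{thm: high-contention construction} with $\gamma=5\beta$ and the same segment induction: polylog-length segments, success density about $\ln^{-2\beta}n$, the harmonic-sum bound on lost filtered contention plus the $B_\beta(\sqrt n)\,\delta$ bound for recent successes, and Chernoff within a segment. It also ends with the same estimate $s(T)\ln^{-4\beta}n=o(1)$, using $s=o(\log^{\beta/2}n)$ from the latency hypothesis.

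Two small fixes are needed. First, the union bound over $n\,\polylog(n)$ segments needs $c\ge 2\beta+1$, not merely $c>2\beta$; the paper takes $\delta=\lfloor\ln^{2\beta+4}n\rfloor$. Second, bound $s(T_1)$ as the paper does, by applying \Cref{lem:restricted-window}(1,2) at a larger argument to get $s(n^2)=o(\log^{\beta/2}n)$, since $T_1$ need not lie below the specific point used in \Cref{eq5:high-slots}.
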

In the interest of readability we shall postpone the proof of \Cref{thm:restricted-window-lower-bound}
to~\Cref{subsec:proofofkey}. 

\subsection{Analysis for Lower Bounds}\label{subsec:analysis-for-lower-bounds}

In this section 
we apply~\Cref{thm:restricted-window-lower-bound} to prove \Cref{thm: main lower bound without global clock}. 
Fix a memoryless protocol $\calD$, and recall that $L_{\calD}^{\mathrm{whp}}(n)=L_{\calD}(n,n^{-2})$ and $L_{\calD}^{\mathrm{exp}}(n)\le 2L_{\calD}(n,1/2)$. 
We therefore only consider 
$q(n)=1/2$ when bounding expected latency and
$q(n)=1/n^2$
when bounding high-probability latency. 
For the sake of contradiction, we suppose that $L_\calD(n,q(n))=o(U(n))$, 
where $U(n)=\lfloor \frac{n\log(1/q(n))\log n}{\log\log n}\rfloor$, which
captures both claims of \Cref{thm: main lower bound without global clock}. 
When $n$ is sufficiently large we can avoid asymptotics and proceed under the assumption that $L_\calD(n,q(n)) < U(n)$.
By~\Cref{thm:restricted-window-lower-bound}, we know that
\begin{align}
\tilde{T}_{\calD}(n)&\ge\left\lfloor \dfrac{n(s^{\mathrm{low}_\beta}(\lfloor n/\ln^2 n\rfloor)-s^{\mathrm{low}_\beta}(\lfloor\sqrt{n}\rfloor))}{40\beta\ln\ln n}\right\rfloor.\label{eqn:hatT}
\end{align}

For memoryless protocols, we show that:
\begin{align}
s^{\mathrm{low}_\beta}(L_{\calD}(n,q(n)))-s^{\mathrm{low}_\beta}(\tilde{T}_{\calD}(n))
&\ge 
\frac{1}{4}\ln(1/q(n)).\label{eqn:s-low}
\end{align}

We consider the quantity $C(n) \bydef \tilde{T}(n)/U(n)$. 
By assumption, $C(n) = o(1)$. 
By combining \Cref{eqn:hatT,eqn:s-low}, we obtain (informally) that 
$C(n) \approx 1 / \ln(2/C(n))$, which leads to a contradiction.

\begin{lemma}\label{lem: slow lower bound for memoryless protocols}
Fix a memoryless protocol $\calD$, an error probability $q$, which is either $q(n)=1/2$ or 
$q(n) = n^{-2}$,
and constants $\beta\ge 10, \eta\bydef p(1) >0$.
If 
$L_{\calD}(n,q(n))=o(n\log^{\beta/2-1}n)$ then for $n$ sufficiently large, 
$$
s^{\mathrm{low}_\beta}(L_{\calD}(n,q(n)))-s^{\mathrm{low}_\beta}(\tilde{T}_{\calD}(n))\ge \frac{1}{4}\ln(1/q(n)).
$$
\end{lemma}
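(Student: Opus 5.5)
We plan to build an oblivious adversary under which $u^*$ (woken at time $0$) survives to time $\tilde T=\tilde T_{\calD}(n)$ with probability at least $4^{-1/8}$, and under which every low-probability attempt of $u^*$ in $(\tilde T, L]$, where $L=L_{\calD}(n,q(n))$, succeeds with probability at most about $4^{-1}$ per unit of $p$. Every high-probability slot in $(\tilde T,L]$ will be blocked outright. Comparing the resulting failure probability with $q(n)$ then forces the claimed inequality.

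\textbf{Step 1 (reuse the $\tilde T$ adversary).} By Definition~\ref{def:restricted-window}, fix an adversary $\calA_1$ that wakes at most $2n/3$ parties in $[0,\tilde T]$ and under which $u^*$ fails throughout $[1,\tilde T]$ with probability at least $4^{-1/8}$.

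\textbf{Step 2 (block the high slots).} By Lemma~\ref{lem:restricted-window}(3) and the hypothesis $L=o(n\log^{\beta/2-1}n)$, we have $N^{\high}(L)=o(n/\log n)$. For every local time $i\in(\tilde T,L]$ with $\low(i)=0$, we wake $\lceil c\ln n/\eta\rceil$ fresh parties at global time $i-1$. Each such party attempts at local time $1$ with probability $\eta$, so at least two other parties transmit in slot $i$ except with probability $n^{-\Omega(c)}$. This uses $o(n)$ parties, so together with Step 1 the total stays at most $n$. A union bound then shows that $u^*$ cannot succeed at any high slot except with probability $o(1)$, or $n^{-3}$ after adjusting constants, which is negligible against $q$.

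\textbf{Step 3 (low slots).} Use the remaining $\Omega(n)$ parties, or simply observe that success at slot $i$ requires $u^*$ itself to attempt, which happens with probability $p(i)\le B_\beta(i)$. Memorylessness makes the attempts at different local times independent of one another and of the adversary's parties' coins. Hence, conditioned on survival to $\tilde T$, the probability of no success in the low slots of $(\tilde T,L]$ is at least
\[
\prod_{i\in(\tilde T,L],\ \low(i)=1}(1-p(i))\ \ge\ \exp\Bigl(-2\bigl(s^{\low}(L)-s^{\low}(\tilde T)\bigr)\Bigr),
\]
using $1-x\ge e^{-2x}$ for $x\le 1/2$, which holds since $B_\beta(i)\le 1/2$ for $i>\tilde T\ge n/\ln^2 n$. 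Survival to $\tilde T$ depends on the same coins of $u^*$ only at earlier local times, so independence across local times lets us multiply:
\[
q(n)\ \ge\ \Pr[\text{$u^*$ fails by } L]\ \ge\ 4^{-1/8}\exp\bigl(-2\Delta\bigr)-n^{-3},
\qquad \Delta=s^{\low}(L)-s^{\low}(\tilde T).
\]

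\textbf{Step 4 (finish).} For $q=n^{-2}$, this gives $2\Delta\ge 2\ln n-O(1)$. For $q=1/2$, it gives $2\Delta\ge \ln 2-\tfrac18\ln 4-o(1)>\tfrac{1}{2}\ln 2$ for large $n$. In both cases $\Delta\ge\tfrac14\ln(1/q)$.

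\textbf{Main obstacle.} The delicate point is Step 2 combined with the wake-up-time window. The paper assumes wake-up times $t_u\in[n]$, while blocking wakes occur up to time $L>n$. If that constraint is binding, we will instead block only those high slots that fall within the allowed window, and absorb the rest using the fact that $s(L)-s^{\low}(L)$ over high slots is at most $N^{\high}(L)$ times... In fact we will try first to avoid blocking altogether by noting it suffices to bound attempts; if high slots carry large $p$, the bound is lost, so blocking is necessary, and we will check that the model permits wake-ups at arbitrary times.
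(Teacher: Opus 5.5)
Your proof is correct and follows the same argument as the paper. You survive to $\tilde T$ via the restricted-window adversary, and you block each high-probability slot in $(\tilde T,L]$ with $\Theta(\log n/\eta)$ fresh parties woken one step earlier, using Lemma~\ref{lem:restricted-window}(3) to keep that count $o(n)$. You then lower-bound survival over the low slots by $\prod(1-p^{\mathrm{low}_\beta})$, using the independence of a memoryless party's coins. Two of your steps differ slightly from the paper but are harmless: you use $1-x\ge e^{-2x}$ in place of $1-x\ge 4^{-x}$, and you subtract the blocking-failure probability rather than multiplying by $\Pr[E_2]\ge 4^{-1/8}$. The worry in your last paragraph is moot: the paper's adversaries routinely wake parties at arbitrary global times (e.g.\ at $t-1$ for $t$ up to $L$, or uniformly in $[0,T_1)$ with $T_1\gg n$), so the $t_u\in[n]$ in the model section should be read as $t_u\in\mathbb{N}$.
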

\begin{proof}
Consider a specific party $u^*$ activated at time $0$. 
We will derive a lower bound of the probability that $u^*$ does not succeed in $[1,L_{\calD}(n,q(n))]$ and compare it with $q(n)$. We may assume that $u^*$ does not exit after success since this does not change the probability that $u^*$ does not succeed in $[1,L_{\calD}(n,q(n))]$.

Let $E_1$ be the event that $u^*$ does not succeed in $[1,\tilde T(n)]$. By the definition of $\tilde{T}_{\calD}(n)$, there exists an adversary waking up at most $2n/3$ parties that guarantees $\Pr[E_1]\ge 4^{-1/8}$. 

Next, for each high-probability time slot $t\in (\tilde{T}(n),L_{\calD}(n,q(n))]$ for which $p(t)>B_\beta(t)$, 
we wake up $\lceil 4\ln n/\eta\rceil$ parties at global time $t-1$. 
Let $E_2$ be the probability that $u^*$ does not succeed in any of these high-probability time slots. This step wakes up at most $N^{\mathrm{high}}(L_{\calD}(n,q(n)))\lceil 4\ln n/\eta\rceil$ parties, guaranteeing that $\Pr[E_2]\ge 1-1/n\ge 4^{-1/8}$ when $n$ is large. 
By~\Cref{lem:restricted-window}(3),
$
N^{\high}(L_{\calD}(n,q(n)))\le N^{\high}(\lceil n\log^{\beta/2-1} n\rceil)=o(n/\ln n).
$
Therefore, when $n$ is sufficiently large, at most $n/3$ parties are activated in this step. Thus, at most $n$ parties are activated in total.  

Finally, let $E_3$ be the probability that $u^*$ does not succeed in any of the remaining slots in $(\tilde{T}_{\calD}(n),L_{\calD}(n,q(n))]$. Since $p^{\mathrm{low}_\beta}(t)\le B_{\beta}(t)=\frac{\ln^{\beta}t}{t}$ and $\tilde{T}_{\calD}(n)\ge\Omega(n/\ln^2 n)$ by~\Cref{lem:restricted-window}, we may assume $p^{\mathrm{low}_\beta}(t)\le 1/2$ for all $\tilde{T}_{\calD}(n)<t\le L_{\calD}(n,q(n))$ when $n$ is sufficiently large. Using the approximation $1-x\ge 4^{-x}$ for $x\in[0,1/2]$, we have

$$
\Pr[E_3]=\prod_{t\in (\tilde{T}_{\calD}(n), L_{\calD}(n,q(n))]}(1-p^{\mathrm{low}_\beta}(t))\ge (1/4)^{s^{\mathrm{low}_\beta}(L_{\calD}(n,q(n))-s^{\mathrm{low}_\beta}(\tilde{T}_{\calD}(n)))}.
$$

Since the protocol is memoryless, $E_1,E_2,E_3$ are independent. 
Thus, the probability that $u^*$ does not succeed in $[1,L_{\calD}(n,q(n))]$ is $\Pr[E_1]\Pr[E_2]\Pr[E_3]\ge 4^{-(s^{\mathrm{low}_\beta}(L_{\calD}(n,q(n)))-s^{\mathrm{low}_\beta}(\tilde{T}_{\calD}(n))+1/4)}$. Since $u^*$ succeeds before $L_{\calD}(n,q(n))$ with probability at least $1-q(n)$, we must have
$$
4^{-(s^{\mathrm{low}_\beta}(L_{\calD}(n,q(n)))-s^{\mathrm{low}_\beta}(\tilde{T}_{\calD}(n))+1/4)}\le q(n).
$$
It follows that
$$
s^{\mathrm{low}_\beta}(L_{\calD}(n,q(n)))-s^{\mathrm{low}_\beta}(\tilde{T}_{\calD}(n))\ge \frac{\ln(1/q(n))}{\ln 4}-1/4\ge \frac{1}{4}\ln(1/q(n)).
$$
\end{proof}

The following lemma summarizes the statements we need for the final lower bound.

\begin{lemma}\label{lem:n0}
    Given a error probability function $q(n)$ that is either identically $=1/2$ or identically $n^{-2}$, let $\beta=10$ and $U(n)=\left\lfloor \frac{n\ln(1/q(n))\ln n}{\ln\ln n}\right\rfloor$. If $L_{\calD}(n,q(n))=o(U(n))$, then there exists an integer $n_0$ such that the following statements hold when $n\ge n_0$:  
\begin{enumerate}[(i)]
    \item $\tilde{T}_{\calD}(n)\ge \left\lfloor \dfrac{n(s^{\mathrm{low}_\beta}(\lfloor n/\ln^2 n\rfloor)-s^{\mathrm{low}_\beta}(\lfloor\sqrt{n}\rfloor))}{40\beta\ln\ln n}\right\rfloor$.
    \item $\tilde{T}_{\calD}(n)\ge n/\ln^2 n$. 
    \item $L_\calD(n,q(n))>\tilde{T}_{\calD}(n)$.
    \item $U(n)>L_\calD(n,q(n))$.
    \item $\max\{U(n'):U(n')\le cU(n)\}\ge cU(n)/2$ whenever $20/n\le c<1$.
    \item $\displaystyle s^{\mathrm{low}_\beta}(L_{\calD}(n,q(n)))-s^{\mathrm{low}_\beta}(\tilde{T}_{\calD}(n))\ge \frac{1}{4}\ln(1/q(n))$.
    \item $U(n)\le\lfloor n\ln^2 n\rfloor$.
    \item $\ln(1/q(\sqrt{n}))\ge \frac{1}{3}\ln (1/q(n))$.
    \item $\frac{\ln\ln i}{\ln(1/q(i))\ln^3 i}\ge \frac{\ln\ln n}{\ln(1/q(n))\ln^3 n}\ge 1/\ln^4 (\sqrt{n})\ge 20/\sqrt{n}$ for all $i\le n$.
    \item $U(i)\le U(n)$ for all $1\le i\le n$.
    \item $\ln (\lfloor n/\ln^4 n \rfloor/(\sqrt{n}\ln^4 \sqrt{n}))\ge \frac{1}{5}\ln n$.
\end{enumerate}
\end{lemma}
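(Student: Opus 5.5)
The lemma is a checklist. The items split into two kinds: those that come from earlier results once $n$ is large, and purely analytic facts about the explicit functions $U$, $q$, $\ln$. My plan is to verify each item separately, obtaining a threshold $n_0^{(j)}$ for item $j$, and then take $n_0$ to be the maximum of these finitely many thresholds.

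\textbf{Items that follow from earlier results.}
\begin{itemize}
\item The hypothesis $L_\calD(n,q(n))=o(U(n))$ with $\beta=10$ gives $L_\calD(n,q(n))=o(n\log^{3}n)=o(n\log^{\beta/2-1}n)$, because $U(n)\le n\ln^2 n$ for both choices of $q$. This is exactly the hypothesis of \Cref{thm:restricted-window-lower-bound} and of \Cref{lem: slow lower bound for memoryless protocols}.
\item Hence (i) is \Cref{thm:restricted-window-lower-bound} and (vi) is \Cref{lem: slow lower bound for memoryless protocols}, each for $n$ beyond its own threshold.
\item (ii) is \Cref{lem:restricted-window}(2).
\item (iii) is \Cref{lem:restricted-window}(1).
\item (iv) holds because $L_\calD(n,q(n))=o(U(n))$ eventually gives $L_\calD(n,q(n))<U(n)$.
\end{itemize}

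\textbf{Analytic items.} I treat the two cases $\ln(1/q(n))=\ln 2$ and $\ln(1/q(n))=2\ln n$ separately.
\begin{itemize}
\item (vii) follows since $\ln(1/q(n))\ln n/\ln\ln n\le 2\ln^2 n/\ln\ln n\le \ln^2 n$ once $\ln\ln n\ge 2$.
\item (viii): in the first case both sides involve the constant $\ln 2$, and $\ln 2\ge\frac13\ln 2$. In the second case $2\ln\sqrt n=\ln n\ge\frac23\ln n$.
\item (xi): the argument equals $\ln n-\tfrac12\ln n-O(\ln\ln n)=\tfrac12\ln n-O(\ln\ln n)$, which is at least $\tfrac15\ln n$ for large $n$.
\item (ix): the first inequality, for $i\le n$, needs monotone decrease of $g(i)=\ln\ln i/(\ln(1/q(i))\ln^3 i)$. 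This holds for $i\ge e^e$ or so, since $\ln\ln x/\ln^k x$ is decreasing there. For small $i$, either restrict attention to $i\ge 3$, or note that $g(i)$ is bounded below by a positive constant on that finite range while $g(n)\to0$. The second inequality holds because $g(n)\ge \ln\ln n/(2\ln^4 n)\ge 16/\ln^4 n$ once $\ln\ln n\ge 2$. The third holds because $\ln^4\sqrt n=o(\sqrt n)$.
\item (x): $U$ is nondecreasing in $n$ for $n\ge 3$ or so, because $n\ln(1/q(n))\ln n/\ln\ln n$ is increasing and the floor preserves this. For the finitely many small $i$, $U(i)$ is bounded while $U(n)\to\infty$.
\end{itemize}

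\textbf{The main obstacle: item (v).} I need some $n'$ with $cU(n)/2\le U(n')\le cU(n)$. Take $n'$ to be the largest integer with $U(n')\le cU(n)$; it exists because $cU(n)\ge 20U(n)/n\ge U(1)$-scale values for large $n$. I then want to bound the ratio of consecutive values, $U(n'+1)\le 2U(n')$. This gives $U(n')\ge U(n'+1)/2>cU(n)/2$.

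The ratio bound holds once $n'$ exceeds an absolute constant, because the real function $n\ln(1/q(n))\ln n/\ln\ln n$ grows by a factor $1+o(1)$ per unit step and the floors are harmless when the values are at least $2$. To guarantee $n'$ is large, I use $c\ge 20/n$: then $cU(n)\ge 20\ln(1/q(n))\ln n/\ln\ln n-20/n$, which exceeds $U(m)$ for a fixed constant $m$. Therefore $n'\ge m$. I expect the delicate part to be checking that the constant $20$ suffices here; if it does not, I would choose the absolute constant $m$ small, using only $U(m+1)\le 2U(m)$ for all $m\ge m_0$ with $m_0$ a fixed small number, and verify it directly.
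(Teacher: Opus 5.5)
Your proposal is correct and follows the paper's proof. Items (i)--(iv) and (vi) are read off from \Cref{thm:restricted-window-lower-bound}, \Cref{lem:restricted-window}, the definition of $\tilde{T}_{\calD}$, and \Cref{lem: slow lower bound for memoryless protocols}, exactly as the paper does, and the analytic items are the routine facts the paper leaves unstated. Your argument for (v) (take the largest $n'$ with $U(n')\le cU(n)$, use $U(n'+1)\le 2U(n')$ past a fixed threshold, and use $c\ge 20/n$ to push $n'$ past it) is precisely \Cref{lem: smoothness of U} and \Cref{coro: smoothness of U}; the constant $20$ is not delicate, since $cU(n)\ge 20U(n)/n\to\infty$.

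Two harmless slips only affect thresholds. First, $\beta/2-1=4$ rather than $3$. Second, the bound $\ln\ln n/(2\ln^4 n)\ge 16/\ln^4 n$ in (ix) requires $\ln\ln n\ge 32$, not $\ln\ln n\ge 2$.
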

\begin{proof}
Items~(i)–(iii) and~(vi) follow from 
\Cref{thm:restricted-window-lower-bound}, 
\Cref{lem:restricted-window}~(ii), 
\Cref{def:restricted-window},  
and~\Cref{lem: slow lower bound for memoryless protocols}, respectively. 
See Appendix~\ref{sect:smoothness-proofs} for technical lemmas
that imply Item~(v).
Item~(iv) is a consequence of the assumption that $L_{\calD}(n,q(n)) = o(U(n))$. 
Items~(vii)–(xii) are straightforward mathematical facts.
\end{proof}

We are now ready to prove \Cref{thm: main lower bound without global clock}.  To prove both the In-Expectation and With-High-Probability lower bounds of \Cref{thm: main lower bound without global clock}, it suffices to show that no protocol
$\calD$ has latency $L_\calD(n,q(n)) = o(\frac{n\log(1/q(n))\log n}{\log\log n})$,
where the error function is either 
$q(n)= 1/2$ or $q(n)= 1/n^2$.

\begin{proof}[Proof of \Cref{thm: main lower bound without global clock}]
Take $\beta=10$, and recall that $U(n)=\left\lfloor \frac{n\ln(1/q(n))\ln n}{\ln\ln n}\right\rfloor$.
Suppose, for the sake of obtaining a contradiction, that $L_\calD(n,q(n))=o(U(n))$. 
We choose $n_0$ as in~\Cref{lem:n0}
and henceforth assume 
$n\ge n_0^2$.

Let $C(n)=\tilde{T}_{\calD}(n)/U(n)$, which is
$o(1)$ since $\tilde{T}_{\calD}(n)< L_\calD(n,q(n))=o(U(n))$. 
It follows that 
$\ln (2/C(n))=o(C(n)^{-1})$.
By~\Cref{lem:n0}(ii), $\tilde{T}_{\calD}(n)\ge n/\ln^2 n$ and we also have $C(n)\ge \frac{\ln\ln n}{\ln(1/q(n))\ln^3 n}$. 
We define numbers $N_0,C_0,N$ as follows.
\begin{itemize}
    \item Let $N_0\ge n_0$ be a constant such that $2400\beta\ln(2/C(n))\le (2C(n))^{-1}$ for all $n\ge N_0$. The existence of $N_0$ is due to our assumption $C(n)=o(1)$ and the fact that $\ln(1/x)=o(1/x)$ for small $x$.
    \item Let $C_0=\min\limits_{N_0\le n\le N_0^2}C(n)$. By the definition of $N_0$ we have $2400\beta\ln(2/C_0)\le (2C_0)^{-1}$. 
    \item Let $N=\min\lbrace N':N'\ge N_0,C(N')<C_0\rbrace$ be the smallest integer that is not smaller than $N_0$ and satisfies $C(N)< C_0$. The existence of $N$ is due to our assumption $C(n)=o(1)$.
\end{itemize}

We prove that $C(N)\ge C_0$, which contradicts the definition of $N$. 
By the definition of $C_0$ and $N$, $N>N_0^2$ and $C(n)\ge C_0$ for all $\sqrt{N}\le n<N$.
Define the sequence $(N_i)$ by $N_1=\lfloor N/\ln^4 N\rfloor$ and for $i>1$, $N_i=\arg\max_{N'}\{U(N')\le C_0U(N_{i-1})\}$.  We terminate the sequence 
at $i_{\max}$, defined to be
$$
i_{\max}=\min\{i:\min(C_0U(N_i),N_{i})\le \sqrt{N}\}.
$$
By the definition of $C_0$, we have:
\begin{align}
1>C_0 &=\min_{N_0\le n\le N_0^2} C(n)= \min_{N_0\le n\le N_0^2}\left(\frac{\tilde{T}(n)}{U(n)}\right)\notag\\
\intertext{By~\Cref{lem:n0}(ii) this is lower bounded by}
&\ge \min_{N_0\le n\le N_0^2} \left(\frac{n/\ln^2 n}{n\ln(1/q(n))\cdot \ln n/\ln\ln n}\right)\notag\\ 
\intertext{Applying~\Cref{lem:n0}(ix),}
&\ge \frac{\ln\ln (N_0^2)}{\ln(1/q(N_0^2))\cdot \ln^3(N_0^2)}\ge \frac{1}{\ln^4 N_0}\notag\\
\intertext{Since $N\ge N_0^2$, we have:}
C_0&\ge \frac{1}{\ln^4 (\sqrt{N})} \label{eq5:c0lowerbound}
\end{align}
Since $\frac{1}{\ln^4 \sqrt{N}}\ge \frac{20}{\sqrt{N}}\ge \frac{20}{N_{i-1}}$ by~\Cref{lem:n0} (ix), applying~\Cref{lem:n0}(v), we have $U(N_i)\ge C_0U(N_{i-1})/2$ for $2\le i\le i_{\max}$, and by the definition of $i_{\max}$ we have
\begin{align*}
\min(C_0U(N_{i_{\max}}),N_{i_{\max}}) &\le \sqrt{N}.
\intertext{If this last inequality is satisfied because $C_0U(N_{i_{\max}})\le \sqrt{N}$,
then by~\Cref{eq5:c0lowerbound}, we have:}
U(N_{i_{\max}}) &\le \sqrt{N}/C_0 \le \sqrt{N}\ln^4 \sqrt{N}.
\intertext{If it were satisfied because $N_{i_{\max}}\le \sqrt{N}$, then by~\Cref{lem:n0}(x) and the definition of $U(\cdot)$,}
U(N_{i_{\max}}) &\le U(\sqrt{N})
\le \sqrt{N}\ln^4 \sqrt{N}.
\intertext{Therefore, in every case $U(N_{i_{\max}})\le \sqrt{N}\ln^4 \sqrt{N}$. By~\Cref{lem:n0}(xi), }
i_{\max}-1&\ge \frac{\ln (U(N_1)/U(N_{i_{\max}}))}{\ln(\max_{1\le i\le i_{\max}-1}U(N_{i+1})/U(N_i))}\\
&\ge \frac{\ln (\lfloor N/\ln^4 N \rfloor/(\sqrt{N}\ln^4 \sqrt{N}))}{\ln(2/C_0)}\\
&\ge \frac{\ln N}{5\ln(2/C_0)}.
\end{align*}

For $i\in [1,i_{\max}]$, since $n_0\le N_0\le \sqrt{N}$, we have $\tilde{T}_{\calD}(N_i)=C(N_i)\cdot U(N_i)\ge C_0U(N_i)$. 
By~\Cref{lem:n0}(vi,viii), we have:

\begin{align*}
s^{\mathrm{low}_\beta}(U(N_i))-s^{\mathrm{low}_\beta}(C_0U(N_i))&\ge s^{\mathrm{low}_\beta}(L_{\calD}(N_i,q(N_i)))-s^{\mathrm{low}_\beta}(\tilde{T}_{\calD}(N_i))\\
&\ge \frac{1}{4}\ln(1/q(N_i))\ge\frac{1}{4}\ln(1/q(\lfloor\sqrt{N}\rfloor))\ge \frac{1}{12}\ln(1/q(N)).
\intertext{Since the intervals $(C_0U(N_i),U(N_i)]$ are disjoint and are in $(\lfloor\sqrt{N}\rfloor,\lfloor N/\ln^2 N\rfloor]$, we have}
s^{\mathrm{low}_\beta}(\lfloor N/\ln^2 N\rfloor)-s^{\mathrm{low}_\beta}(\lfloor\sqrt{N}\rfloor)&\ge\sum_{i=1}^{i_{\max}-1} \left[\istrut[1]{4}s^{\mathrm{low}_\beta}(U(N_i))-s^{\mathrm{low}_\beta}(C_0U(N_i))\right]\\
&\ge (i_{\max}-1)\cdot \frac{1}{12}\ln(1/q(N))\ge \frac{\ln N}{5\ln(2/C_0)}\cdot \frac{1}{12}\ln(1/q(N))\\
&\ge\frac{U(N)\ln\ln N}{60\ln(2/C_0)N}.
\intertext{It follows that}
\tilde{T}_{\calD}(N)&\ge \left\lfloor \dfrac{n(s^{\mathrm{low}_\beta}(\lfloor N/\ln^2 N\rfloor)-s^{\mathrm{low}_\beta}(\lfloor\sqrt{N}\rfloor))}{40\beta\ln\ln N}\right\rfloor\\
&\ge \lfloor \frac{U(N)}{2400\beta\ln(2/C_0)}\rfloor\ge \lfloor 2C_0U(N)\rfloor\\
&\ge C_0U(N).
\end{align*}
The third inequality follows from the condition $2400\beta \ln(2/C_0) \le (2C_0)^{-1}$, as specified in the definition of $C_0$.
Therefore, $C(N)=\tilde{T}_{\calD}(N)/U(N)\ge C_0$, contradicting our choice of $N$.
\end{proof}


\subsection{Tradeoff Between Expected Latency and High-probability Latency}\label{subsec:tradeoff}

By applying an argument that is similar to~\Cref{thm: main lower bound without global clock}, we can prove the tradeoff lower bound
of \Cref{thm: tradeoff between exp-latency and whp-latency without a global clock}, 
which states that no memoryless protocol
$\calD$ has both 
$L_\calD^{\mathrm{exp}}(n)=o(\frac{n\log^2 n}{(\log\log n)^2})$ 
and 
$L_\calD^{\mathrm{whp}}(n)=n\log^{O(1)}n$.

\begin{proof}[Proof of \Cref{thm: tradeoff between exp-latency and whp-latency without a global clock}]
    Suppose $L_\calD^{\mathrm{whp}}(n)\le \bar{U}(n)=n\ln^{\beta/2-1} n$ for some constant $\beta\ge 10$.

    By~\Cref{lem:restricted-window}(2), we know $\tilde{T}_{\calD}(n)\ge n/\ln^2 n$ for sufficiently large $n$.
    By~\Cref{lem: slow lower bound for memoryless protocols}, 
    \begin{align*}s^{\low}(\lfloor n\ln^{\beta/2-1} n\rfloor)-s^{\low}(\lfloor n/\ln^2 n\rfloor)&\ge s^{\low}(L_{\calD}(n,n^{-2}))-s^{\low}(\tilde{T}_{\calD}(n))\\
    &\ge \frac{1}{4} \ln (1/(n^{-2}))\ge \frac{1}{2}\ln n.
    \end{align*}

    We define $\bar{C_0}=\min_{\sqrt{n}\le t\le n} \left( \frac{\tilde{T}_{\calD}(n)}{\bar{U}(n)}\right)$, then we have
    \begin{align}\bar{C_0}\ge \min_{\sqrt{n}\le t\le n}\frac{\lfloor t/\ln^2 t\rfloor}{\lfloor t\ln^{\beta/2-1} t\rfloor}\ge \ln^{-\beta/2-2} n \quad\text{for sufficiently large $n$}\label{eq5:tradeoff-C0-lower-bound}\end{align}.
    We choose a series of thresholds $\lbrace n_i\rbrace_{1\le i\le i_{\max}}$ such that $n_1=\lfloor n/\ln^4 n\rfloor$, $n_i=\arg\max_{n'} \lbrace \bar{U}(n')\le \bar{C_0}\bar{U}(n_{i-1})\rbrace$ for $i\ge 2$, and
    $$i_{\max}=\min\lbrace i:\min(C(n)U(n_i),n_i)\le \sqrt{n}\rbrace.$$
    We can notice that:
    \begin{itemize}
    \item If $C_0U(n_{i_{\max}})\le \sqrt{n}$, then by~\Cref{eq5:tradeoff-C0-lower-bound}, $U(n_{i_{\max}})\le \sqrt{n}/\bar{C_0}\le \sqrt{n}\cdot \ln^{\beta/2+2} n$.
    \item Otherwise, $n_{i_{\max}}\le \sqrt{n}$, then $\bar{U}(n_{i_{\max}})\le \bar{U}(\lfloor \sqrt{n}\rfloor)\le \sqrt{n}\ln^{\beta/2+2} n$.
    \end{itemize}
    Thus we have: \begin{align}U(n_{i_{\max}})\le \sqrt{n}\ln^{\beta/2+2} n\quad \text{ for sufficiently large $n$}\label{eq5:tradeoff-u-upper-bound}\end{align} 
    
For sufficiently large $n$ and $2\le i\le i_{\max}$, by the fact  $\bar{U}(n_i+1)/\bar{U}(n_{i-1})>\bar{C_0}$ and~\Cref{eq5:tradeoff-C0-lower-bound}, we can notice that: \begin{equation}
    \begin{aligned}\bar{U}(n_i)/\bar{U}(n_{i-1})&=\left(\frac{\bar{U}(n_i+1)}{\bar{U}(n_i+1)}\right)-\left(\frac{\bar{U}(n_i+1)-\bar{U}(n_{i})}{\bar{U}(n_{i-1})}\right)\\
    &\ge \bar{C_0}-O(\ln^{\beta/2-1} n)/\Omega(\sqrt{n}\cdot \ln^{\beta/2+2} n)\\
    &\ge \bar{C_0}-O(1/\sqrt{n})\\
    &\ge \bar{C_0}/2
\end{aligned}
\label{eq5:tradeoff-ui/ui-1}
\end{equation}
and by~\Cref{eq5:tradeoff-u-upper-bound} and~\Cref{eq5:tradeoff-ui/ui-1}, 
\begin{align*}
    i_{\max}-1&\ge \frac{\ln(\bar{U}(n_1)/\bar{U}(n_{i_{\max}}))}{\ln(\max_{2\le i\le i_{\max}} \bar{U}(n_{i-1})/\bar{U}(n_{i}))}\\
    &\ge \frac{\ln (\lfloor n/\ln^4 n\rfloor/(\sqrt{n}\ln^{\beta/2+2} n))}{\ln(2/\bar{C_0})}\\
    &\ge \Omega\left(\frac{\log n}{\log \log n}\right).
\end{align*}
Lastly, by the definition of $\bar{C_0}$, we know $\bar{C_0}\bar{U}(n_i)\le \bar{T}_{\calD}(n)$ so $(\bar{C_0} \bar{U}(n_i),\bar{U}(n_i)]\,(1\le i<i_{\max})$  are disjoint intervals. Therefore, 
\begin{align*}
    s^{\low}(\lfloor n/\ln^2 n\rfloor)-s^{\low}(\lfloor \sqrt{n}\rfloor)&\ge \sum_{i=1}^{i_{\max}-1} \left[\istrut[1]{4}s^{\low}(U(n_i))-s^{\low}(\bar{C_0}\cdot \bar{U}(N_i))\right]\\
    &\ge (i_{\max}-1)\cdot \frac{1}{2} \ln n_i\ge \Omega\left(\frac{\log^2 n}{\log\log n}\right).
    \end{align*}
    
    \noindent Applying~\Cref{thm:restricted-window-lower-bound}, 
    $$L^{\mathrm{exp}}_{\calD}(n)\ge \tilde{T}_{\calD}(n)\ge \left\lfloor \frac{n(s^{\low}(\lfloor n/\ln^2 n\rfloor)-s^{\low}(\lfloor \sqrt{n}\rfloor))}{40\beta \ln\ln n}\right\rfloor=\Omega\left(\frac{n\log^2 n}{(\log\log n)^2}\right).$$
\end{proof}

\subsection{Proof of\texorpdfstring{~\Cref{thm:restricted-window-lower-bound}}{ Theorem 5.8}}\label{subsec:proofofkey}

Recall that \Cref{thm: high-contention construction} gave an adversary that sustains a high \emph{static} 
contention $\hat{\sigma}^{\filter}[t] = \Omega(\log\log n)$ 
over a certain interval of time, 
for the filter $\filter = \low^{(\geq\sqrt{n})}$.  
This does not \emph{directly} 
imply a similar
lower bound $\sigma^{\filter}[t]=\Omega(\log\log n)$, as parties achieving success \emph{reduce} the dynamic contention on future time slots.

The idea of the proof of \Cref{thm:restricted-window-lower-bound} 
is as follows. We divide the time slots into intervals of length $\delta=\log^{\Theta(1)} n$. If $\sigma^{\filter}[t]=\Omega(\log\log n)$ in an interval, then there will be few successes in that interval with very high probability. If there are few successes in previous intervals, then the difference between
$\sigma^{\filter}[t]$ and $\hat{\sigma}^{\filter}[t]$ caused by these successes is negligible. 
By using the filter $\filter=\mathrm{low}_\beta^{(\ge \sqrt{n})}$ that removes all high probability slots, we guarantee that
$\sigma^{\filter}[t]$ does not decrease much within an interval. Consequently, we have $\sigma^{\filter}[t]\approx \hat\sigma^{\filter}[t]=\Omega(\log\log n)$ in the next interval. By induction, we can prove $\sigma^{\filter}[t]=\Omega(\log\log n)$ for all $t\in[1,T]$ with high probability, provided that $\hat\sigma^{\filter}[t]=\Omega(\log\log n)$ initially,
for all $t\in[1,T]$. Then, the expected number of successes of a specific party $u^*$ activated at time $0$ is at most $s(T)/e^{\Omega(\log\log n)}=\log^{-\Omega(1)}n$ for appropriately chosen constants. 
It follows that the success probability of $u^*$ 
is at most $\log^{-\Omega(1)}n$, 
which yields a lower bound $T\leq \tilde{T}_{\calD}(n)$.

\medskip 

We begin with some definitions specific to this section.
\begin{definition}[Success Record, $(\mu,I)$-Goodness, Density Profiles] \ 
\begin{enumerate}
\item Define $\mathrm{Succ}[t] \bydef \hat A[t]\setminus A[t]$ as the set of parties that have succeeded by global time $t$. Recall that $A[t\mid t']$ is the set of parties that are activated by time $t$ and yet to succeed by time $t'$. By this definition, we have 
$A[t \mid t'] = \hat{A}[t]\setminus \mathrm{Succ}[t']$, for $t'\le t$.
\item Define the \emph{success record} by global time $t$ as $\mathrm{Rec}[t]=\{(u,t^{\text{succ}}_u):u\in \mathrm{Succ}[t]\}$, i.e., the set of successful parties along with their success time by global time $t$. By abuse of notation, we write $u\in \mathrm{Rec}[t]$ to mean that there exists a pair $(u,\tau)\in \mathrm{Rec}[t]$ for some $\tau$.
\item A success record $R$ is said to be \emph{$(\mu,I)$-good} for a real number $\mu\in[0,1]$ and a set of time slots $I$ if the number of successes in $I$ is at most $\mu|I|$. 
That is, the success \emph{density} in $I$ is low. 
\item
A \emph{density profile} $\Pi$ is a collection of pairs $(\mu,I)$, where $\mu\in [0,1]$ is a real number and $I$ is an interval of time slots. A success record $R$ is said to be \emph{$\Pi$-good} if it is $(\mu,I)$ good for every $(\mu,I)\in\Pi$.
\item For integers $t_0,\delta\ge 1$ and real number $\mu\in [0,1]$, the \emph{$(t_0,\mu,\delta)$-density profile} $\Pi(t_0,\mu, \delta)$ is defined as follows: 
\begin{itemize}
    \item $I_0=[1,t_0)$ with density $\mu_0=0$;
    \item For $i\ge 1$, let $I_i=[t_0+(i-1)\delta,t_0+i\delta)$ and $\mu_i=\mu$;
    \item Then $\Pi(t_0,\mu,\delta)=\{(\mu_i,I_i)\}$. 
\end{itemize}
That is, we divide the time slots into intervals of length $\delta$ except the first interval $[1,t_0)$, and require no success in the first interval and at most $\mu\delta$ successes in other intervals.
\item A success record $R$ is said to be \emph{$(t_0,\mu,\delta)$-good} if it is $\Pi(t_0,\mu,\delta)$-good. 
\end{enumerate}
\end{definition}




\medskip 
We restate our plan as follows:

\begin{itemize}
    \item In~\Cref{subsubsec:process-analysis-1}, we show that if $\sigma^{\filter}[t; A[t\mid t_0]]=\Omega(\log\log n)$ for all $t\in [t_0,t_0+\delta)$, then the success record is $(2\mu,I)$-good with high probability where $\mu\approx \log^{-\Omega(1)}n$ and $I=[t_0,t_0+\delta)$.
    \item \Cref{subsubsec:process-analysis-2} shows that if the success record is $(\mu_i,I_i)$-good for all $i<k$ where $(\mu_i,I_i)\in\Pi(t_0,2\mu,\delta)$, then it is also $(\mu_k,I_k)$-good with high probability, provided that $\hat\sigma^{\filter}[t]=\Omega(\log\log n)$ in $I_k$. By induction, an adversary satisfying the properties in \Cref{thm: high-contention construction} guarantees that the success record $\mathrm{Rec}[T]$ is $(t_0,2\mu,\delta)$-good with high probability, and hence $\sigma^{\filter}[t]=\Omega(\log\log n)$ for all $t\in [1,T]$ with high probability. 
    \item Finally, we show that the expected number of successes of a particular party $u^*$ is $\log^{-\Omega(1)} n$, thus concluding the lower bound of $\tilde T(n)$. Details are provided in~\Cref{subsubsec:process-analysis-3}.
\end{itemize}

The following fact is useful:

\begin{fact}\label{fact:trivial facts about sigma}
For any memoryless protocol $\calD$, filter $\filter$, and time $t_0\le t$,
$$
\sigma^{\filter}[t; A[t\mid t_0]]=\hat \sigma^{\filter}[t]-\sigma^{\filter}[t;\mathrm{Succ}[t_0]].
$$
\end{fact}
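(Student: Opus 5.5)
The plan is to unfold all three quantities in \Cref{fact:trivial facts about sigma} as sums of $p^{\filter}(t-t_u)$ over explicit sets of parties, and reduce the claim to a partition of $\hat A[t]$. No probabilistic reasoning is needed: once the history up to $t_0$ is fixed, every set involved is determined, so the identity holds pointwise.

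\emph{Step 1: partition the awake parties.} Since $t_0\le t$, we have $\mathrm{Succ}[t_0]=\hat A[t_0]\setminus A[t_0]\subseteq \hat A[t_0]\subseteq \hat A[t]$. By the definition recorded in the success-record definition, $A[t\mid t_0]=\hat A[t]\setminus \mathrm{Succ}[t_0]$. Together these give the disjoint union
\[
\hat A[t]=A[t\mid t_0]\,\sqcup\,\mathrm{Succ}[t_0].
\]

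\emph{Step 2: sum over the partition.} Apply the definitions of $\sigma^{\filter}[t;S]=\sum_{u\in S}p^{\filter}(t-t_u)$ and $\hat\sigma^{\filter}[t]=\hat\sigma^{\filter}[t;\hat A[t]]$. By additivity of the sum over disjoint index sets,
\[
\hat\sigma^{\filter}[t]=\sigma^{\filter}[t;A[t\mid t_0]]+\sigma^{\filter}[t;\mathrm{Succ}[t_0]].
\]
Rearranging yields the stated identity.

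\emph{Where memorylessness and the $\hat\sigma$/$\sigma$ distinction enter.} Memorylessness is what makes each summand $p^{\filter}(t-t_u)$ a fixed number that depends only on local time, not on $u$'s history. Without it, the notion of contention ``that would have been contributed'' by already-successful parties would not be well defined. For the oblivious adversary, $\hat A[t]$ is fixed in advance. $A[t\mid t_0]$ and $\mathrm{Succ}[t_0]$ are $H_{t_0}$-measurable (hence $H_t$-measurable), so the dynamic notation $\sigma^{\filter}$ legitimately applies to them.

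The only point requiring care is the inclusion $\mathrm{Succ}[t_0]\subseteq\hat A[t]$. It is what makes the difference of sums an honest set difference, and it follows immediately from monotonicity of $\hat A[\cdot]$ in time. I do not expect any genuine obstacle.
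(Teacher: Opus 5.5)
Your proof is correct. It matches the paper's reasoning, which states this fact without a written proof: the identity follows from $A[t\mid t_0]=\hat A[t]\setminus\mathrm{Succ}[t_0]$, recorded in the success-record definition, together with additivity of $\sum_u p^{\filter}(t-t_u)$ over the disjoint union $\hat A[t]=A[t\mid t_0]\sqcup\mathrm{Succ}[t_0]$, which is exactly your argument.
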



\begin{lemma}[Generalization of~\Cref{lem:prob-of-success}(2).]\label{fact: prob of success}
For any filter $\filter$ and time slot $t$,
$$
\begin{aligned}\Pr\left[\sum_{u\in A[t]}X_{t-t_u}^{(u)}= 1 \right]
\le (1+e\cdot \sigma^{\filter}[t])e^{-\sigma^{\filter}[t]}.
\end{aligned}
$$
\end{lemma}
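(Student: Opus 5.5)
The plan is to discard the parties whose current slot is filtered out, and reduce to \Cref{lem:prob-of-success}(2) applied only to the parties whose slot passes the filter. Fix the time slot $t$ and condition on the history $H_t$. Then $A[t]$ is determined, and since $\calD$ is memoryless, the indicators $X^{(u)}_{t-t_u}$ for $u\in A[t]$ are independent Bernoulli variables with means $p(t-t_u)$. These means do not depend on $H_t$. It therefore suffices to prove the bound conditionally on every $H_t$; the unconditional statement (with $\sigma^{\filter}[t]$ understood as the $H_t$-measurable quantity it is) then follows.

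Partition $A[t]$ into $F=\{u\in A[t]:\filter(t-t_u)=1\}$ and $A[t]\setminus F$. By definition, $\sigma^{\filter}[t]=\sum_{u\in F}p(t-t_u)$. The key observation is an event inclusion. If exactly one party in $A[t]$ grabs the resource, then at most one party in $F$ grabs it. Hence
\[
\Pr\Bigl[\textstyle\sum_{u\in A[t]}X^{(u)}_{t-t_u}=1\Bigr]\le \Pr\Bigl[\textstyle\sum_{u\in F}X^{(u)}_{t-t_u}\le 1\Bigr].
\]

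Next I bound the right-hand side directly, which is exactly the computation behind the second inequality of \Cref{lem:prob-of-success}(2). Write $p_u=p(t-t_u)$ and $\sigma=\sum_{u\in F}p_u$. Then
\[
\Pr\Bigl[\textstyle\sum_{u\in F}X^{(u)}_{t-t_u}\le 1\Bigr]=\prod_{v\in F}(1-p_v)+\sum_{u\in F}p_u\prod_{v\in F\setminus\{u\}}(1-p_v).
\]
Using $1-x\le e^{-x}$, the first term is at most $e^{-\sigma}$. Each summand in the second term is at most $p_u e^{-\sigma+p_u}\le p_u e^{1-\sigma}$, because $p_u\le 1$. Summing over $u\in F$ gives at most $e^{-\sigma}+\sigma e^{1-\sigma}=(1+e\sigma)e^{-\sigma}$, which is the claimed bound with $\sigma=\sigma^{\filter}[t]$.

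I expect no real obstacle here. The only point needing care is justifying the independence of the time-$t$ coins from the conditioning on $H_t$, and that is exactly what memorylessness gives. The filter enters only through the monotone event inclusion, so no property of $\filter$ is needed at all.
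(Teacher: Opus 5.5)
Your proof is correct and is essentially the paper's argument. The paper also uses the inclusion $\{\sum_{u\in A[t]}X^{(u)}_{t-t_u}=1\}\subseteq\{\sum_{u\in A[t]}X^{(u)}_{t-t_u}\filter(t-t_u)\le 1\}$, which follows from $X^{(u)}_{t-t_u}\ge X^{(u)}_{t-t_u}\filter(t-t_u)$. It then bounds the probabilities of zero and of exactly one filtered grab by $e^{-\sigma^{\filter}[t]}$ and $\sigma^{\filter}[t]e^{1-\sigma^{\filter}[t]}$.

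Restricting to the set $F$ instead of multiplying the coins by $\filter$ is only a notational difference. Your explicit conditioning on $H_t$, justified by memorylessness, makes precise what the paper leaves implicit.
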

\begin{proof}
    The proof is similar to \Cref{lem:prob-of-success}(2). Since $X_{t-t_u}^{(u)}\ge X^{(u)}_{t-t_u}\filter(t-t_u)$,
    \begin{align}
    \Pr\left[\sum_{u\in A[t]}X_{t-t_u}^{(u)}= 1 \right]\le \Pr\left[\sum_{u\in A[t]}X_{t-t_u}^{(u)}\filter(t-t_u)\le 1\right] \label{eq5:suc-prob-1}
    \end{align}
    Since $X^{(u)}_{t-t_u}\,(u\in A[t])$ are mutually independent, 
\begin{equation}
\begin{aligned}
\Pr\left[\sum_{u\in A[t]}X_{t-t_u}^{(u)}\filter(t-t_u)=0 \right]
&=\prod_{u\in A[t]}\left(1-\Pr\left[X_{t-t_u}^{(u)}\filter(t-t_u)=1\right]\right)\le e^{-\sigma^{\filter}[t]}
\end{aligned}
\label{eq5:succ-prob-2}
\end{equation}
and
\begin{align}
\lefteqn{\Pr\left[\sum_{u\in A[t]}X_{t-t_u}^{(u)}\filter(t-t_u)=1 \right]}\nonumber\\
&=\sum_{v\in A[t]}\Prob{X_{t-t_v}^{(v)}\filter(t-t_v)=1}\prod_{u\in A[t]\backslash\{v\}}\left(1-\Pr\left[X_{t-t_u}^{(u)}\filter(t-t_u)=1\right]\right)\nonumber\\
&\le \sum_{v\in A[t]}\Pr[X_{t-t_v}^{(v)}\filter(t-t_v)=1]e^{-\sigma^{\filter}[t]+1}=\sigma^{\filter}[t]e^{-\sigma^{\filter}[t]+1}.\label{eq5:succ-prob-3}
\end{align}
Combining~\Cref{eq5:suc-prob-1,eq5:succ-prob-2,eq5:succ-prob-3}, we have:
\begin{align*}
\Pr\left[\sum_{u\in A[t]}X_{t-t_u}^{(u)}\filter(t-t_u)\le 1 \right]
&=\Pr\left[\sum_{u\in A[t]}X_{t-t_u}^{(u)}\filter(t-t_u)=0 \right]+\Pr\left[\sum_{u\in A[t]}X_{t-t_u}^{(u)}\filter(t-t_u)= 1 \right]\\
&\le (1+e\cdot \sigma^{\filter}[t])e^{-\sigma^{\filter}[t]}.
\end{align*}
\end{proof}

\subsubsection{Proving \texorpdfstring{$(2\mu,I)$}{(2μ,I)}-goodness} \label{subsubsec:process-analysis-1}

If $\sigma[t;A[t\mid t_0]]$ is large for all $t \in [t_0, t_0 + \delta)$, then there are few successes in this interval. To prove this, we first introduce the following lemma, which shows that the loss in $\sigma^{\mathrm{low}_\beta^{(\ge \sqrt{n})}}[t]$ caused by successes in a short interval is small.

\begin{lemma}\label{lem: determinstic bound of sigma decrease}
    For any memoryless protocol $\calD$, integers $t_0,\delta,t$ with $t\ge t_0+\delta$,   
    
    $$
    \sigma^{\mathrm{low}^{(\ge \sqrt{n})}_\beta}[t;A[t\mid t_0+\delta]]\ge \sigma^{\mathrm{low_\beta^{(\ge \sqrt{n})}}}[t;A[t\mid t_0]]-B_\beta(\sqrt{n})\delta.
    $$
\end{lemma}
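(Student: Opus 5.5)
The plan is to write the difference of the two dynamic contentions explicitly as a sum over the parties that succeed inside the window $[t_0,t_0+\delta)$, and then bound each summand by the filter threshold. Let $\filter=\low^{(\ge\sqrt{n})}$. Since $A[t\mid t_0]=\hat A[t]\setminus\mathrm{Succ}[t_0]$ and $A[t\mid t_0+\delta]=\hat A[t]\setminus\mathrm{Succ}[t_0+\delta]$, with $\mathrm{Succ}[t_0]\subseteq\mathrm{Succ}[t_0+\delta]$, \Cref{fact:trivial facts about sigma} (or the definition of $\sigma^{\filter}$ directly) gives
\[
\sigma^{\filter}[t;A[t\mid t_0]]-\sigma^{\filter}[t;A[t\mid t_0+\delta]]=\sum_{u\in \mathrm{Succ}[t_0+\delta]\setminus\mathrm{Succ}[t_0]} p^{\filter}(t-t_u).
\]
So it suffices to show that this sum is at most $B_\beta(\sqrt{n})\delta$.

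First I bound the number of summands. At most one party succeeds per time slot, because a success means exactly one party grabbed the resource. The parties in $\mathrm{Succ}[t_0+\delta]\setminus\mathrm{Succ}[t_0]$ succeeded at slots in a window of $\delta$ consecutive slots, so there are at most $\delta$ of them. I should check the boundary convention of $\mathrm{Succ}[\cdot]$, i.e.\ whether "succeeded by time $t$" includes slot $t$. Either way the relevant window contains at most $\delta$ slots, so the count is at most $\delta$.

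Next I bound each summand. For any local time $i$, either $\filter(i)=0$ and $p^{\filter}(i)=0$, or $i\ge\sqrt{n}$ and $p(i)\le B_\beta(i)$. In the second case, monotonicity of $B_\beta$ (\Cref{lem:basic-filter}(1)) gives $p^{\filter}(i)\le B_\beta(i)\le B_\beta(\sqrt{n})$. Here $B_\beta(\sqrt n)$ is read via the same formula, which is fine since $i\ge\lceil\sqrt n\rceil\ge\sqrt n$ and the function $\min(1,\ln^\beta x/x)$ is non-increasing on the relevant range; I would note this briefly, or replace $\sqrt n$ by $\lceil\sqrt n\rceil$. Combining the two bounds, the loss is at most $\delta\cdot B_\beta(\sqrt{n})$, which is the claim.

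No step here is a real obstacle; the argument is deterministic and holds pointwise for every realization. The only points needing care are the "one success per slot" observation and applying the monotonicity of $B_\beta$ at a non-integer argument.
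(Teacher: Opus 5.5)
Your proposal is correct and is essentially the paper's argument: the loss is a sum over the at most $\delta$ parties that succeed in the length-$\delta$ window, and each contributes at most $B_\beta(\sqrt{n})$ because of the filter and the monotonicity of $B_\beta$. Your extra care about the boundary convention of $\mathrm{Succ}[\cdot]$ and about evaluating $B_\beta$ at the possibly non-integer point $\sqrt{n}$ usefully fills in details that the paper's one-line proof leaves implicit.
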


\begin{proof}
    Each party can contribute at most $B_\beta(\sqrt{n})$ to $ \sigma^{\mathrm{low}^{(\ge \sqrt{n})}_\beta}[t;A[t\mid t_0+\delta]]$, and there are at most $\delta$ parties that succeed in $[t_0,t_0+\delta)$. The statement follows.
\end{proof}

\begin{lemma}\label{lem: good success record in an interval}
    Fix a memoryless protocol $\calD$ and integers $t_0,\delta\ge 1,\ell\ge 2$. Suppose $\sigma^{\mathrm{low_\beta^{(\ge \sqrt{n})}}}[t;A[t\mid t_0]]\ge \ell$ for all $t\in [t_0,t_0+\delta)$ and $B_\beta(\sqrt{n})\delta\le \ell/2$. Then, conditioned on $H_{t_0}$, the success record is $(2\mu,I)$-good with probability at least $1-e^{-\mu\delta/3}$, where $\mu=(1+(e\cdot \ell/2))e^{-(\ell/2)})$ and $I=[t_0,t_0+\delta)$.
\end{lemma}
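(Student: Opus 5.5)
Throughout, write $\filter=\mathrm{low}_\beta^{(\ge\sqrt n)}$. The plan is to show that the success indicators in $I$ are dominated by independent Bernoulli$(\mu)$ variables, and then apply a Chernoff bound.

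\textbf{Step 1: contention stays high for the whole window.} Fix $t\in I=[t_0,t_0+\delta)$ and an arbitrary history $H_t$ extending $H_{t_0}$. Apply \Cref{lem: determinstic bound of sigma decrease} with window length $t-t_0<\delta$; its proof gives a loss of at most $B_\beta(\sqrt n)$ per success, so the loss is at most $B_\beta(\sqrt n)(t-t_0)\le B_\beta(\sqrt n)\delta$. Together with the hypotheses this yields
\[
\sigma^{\filter}[t;A[t]]=\sigma^{\filter}[t;A[t\mid t]]\ge \sigma^{\filter}[t;A[t\mid t_0]]-B_\beta(\sqrt n)\delta\ge \ell-\ell/2=\ell/2 .
\]
This bound holds deterministically for every realization of the history up to $t$.

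\textbf{Step 2: per-slot success probability.} Let $Y_t$ indicate a success at slot $t$. The protocol is memoryless, so conditioned on $H_t$ the variables $X^{(u)}_{t-t_u}$ for $u\in A[t]$ are independent with the stated marginals. Hence \Cref{fact: prob of success} applies conditionally and gives
\[
\Pr[Y_t=1\mid H_t]\le (1+e\sigma^{\filter}[t])e^{-\sigma^{\filter}[t]}.
\]
The function $x\mapsto(1+ex)e^{-x}$ is decreasing for $x\ge 1-1/e$. Since $\ell\ge2$ gives $\sigma^{\filter}[t]\ge \ell/2\ge 1$, we get $\Pr[Y_t=1\mid H_t]\le (1+e\ell/2)e^{-\ell/2}=\mu$.

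\textbf{Step 3: concentration.} Each $Y_t$ has conditional mean at most $\mu$ given the past, uniformly over histories. A standard coupling, or the supermartingale $\prod_t e^{\lambda Y_t}/(1+\mu(e^\lambda-1))$, shows that $\sum_{t\in I}Y_t$ is stochastically dominated by $\mathrm{Bin}(\delta,\mu)$ conditioned on $H_{t_0}$. Applying the multiplicative Chernoff bound with deviation factor $1$ (from the tail-bounds appendix, or re-derived via the exponential moment method) gives
\[
\Pr\Bigl[\textstyle\sum_{t\in I} Y_t>2\mu\delta \Bigm| H_{t_0}\Bigr]\le e^{-\mu\delta/3}.
\]
This is exactly $(2\mu,I)$-goodness of the success record.

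\textbf{Main obstacle.} The difficulty is the adaptivity in Step 3: the $Y_t$ are not independent, because each success lowers future contention. Step 1 handles this by making the lower bound on $\sigma^{\filter}$ history-independent, which is precisely what the filter buys us, since every party's filtered contribution is capped at $B_\beta(\sqrt n)$. I would therefore state the martingale/domination argument carefully, conditioning on $H_t$ at each step and multiplying the conditional exponential moments.
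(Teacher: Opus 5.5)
Your proposal is correct and takes essentially the same route as the paper. The deterministic loss bound of at most $B_\beta(\sqrt{n})$ per success keeps the filtered contention at least $\ell/2$ throughout $I$; the generalized success-probability bound then caps each slot's success probability by $\mu$; and a Chernoff bound with deviation factor $1$ finishes. Your explicit check that $x\mapsto(1+ex)e^{-x}$ is decreasing for $x\ge 1-1/e$, and your supermartingale/domination argument for applying Chernoff to the adaptively dependent indicators, make rigorous two steps the paper only asserts; you also use the correct threshold $2\mu\delta$, where the paper writes $2\mu$.
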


\begin{proof}
    By \Cref{lem: determinstic bound of sigma decrease} and our assumptions that 
    $\sigma^{\mathrm{low_\beta^{(\ge \sqrt{n})}}}[t;A[t\mid t_0]]\ge \ell$ and 
    $B_\beta(\sqrt{n})\delta\le \ell/2$,
    
    $$
    \begin{aligned}
    \sigma^{\mathrm{low_\beta^{(\ge \sqrt{n})}}}[t;A[t]]&\ge \sigma^{\mathrm{low_\beta^{(\ge \sqrt{n})}}}[t;A[t\mid t_0]]-(t-t_0)B_{\beta}(\sqrt{n})\\
    &\ge \sigma^{\mathrm{low_\beta^{(\ge \sqrt{n})}}}[t;A[t\mid t_0]]-\delta\cdot B_\beta(\sqrt{n})\\
    &\ge \ell/2. 
    \end{aligned}
    $$

    Then, by \Cref{fact: prob of success}, the success probability at time $t$ is at most $\mu=(1+e\cdot \ell/2)e^{-\ell/2}$. Hence, the expected number of successes in $[t_0,t_0+\delta)$ is at most $\mu\delta$. Moreover, since the probability bound is independent of the history within the interval $[t_0,t)$, we can apply the Chernoff bound (\Cref{coro:chernoff-bound-2}). It follows that the probability of more than $2\mu$ successes is at most $e^{-\mu\delta/3}$.
\end{proof}

\subsubsection{Proving \texorpdfstring{$(t_0,2\mu,\delta)$}{(t0,2μ,δ)}-goodness and High-contention} \label{subsubsec:process-analysis-2}

We now jointly establish, with high probability, that under the adversary constructed in \Cref{thm: high-contention construction}, the success record is $(t_0, 2\mu, \delta)$-good for appropriate parameters $t_0, \mu, \delta$, and that $\sigma^{\mathrm{low}_\beta^{(\ge \sqrt{n})}}[t] = \Omega(\log\log n)$ also holds. Let $I_i$ be the $i$th interval in the density profile $\Pi(t_0, \mu, \delta)$ (Recall that $I_0=[1,t_0)$ and $I_i=[t_0+(i-1)\delta,t_0+i\delta)$). We first show that if the success record is good, then the contention loss due to successes is small.

\begin{lemma}\label{lem: upper bound of sigma loss}
Given parameters $t'\le t,\delta\ge e^{\beta},\mu>0$, if $\mathrm{Rec}[t']$ is $(t_0,\mu,\delta)$-good, then
\[
\sigma^{\mathrm{low}_\beta^{(\ge \sqrt{n})}}[t; \mathrm{Succ}[t']] \le 3 B_\beta(\sqrt{n}) \delta + \frac{\mu \ln^{\beta+1} t}{\beta + 1}.
\]
\end{lemma}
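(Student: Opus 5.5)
The quantity to bound is
\[
\sigma^{\filter}[t;\mathrm{Succ}[t']]=\sum_{u\in\mathrm{Succ}[t']}p^{\filter}(t-t_u),
\]
where $\filter=\mathrm{low}_\beta^{(\ge\sqrt n)}$. Every term satisfies $p^{\filter}(t-t_u)\le B_\beta(t-t_u)\cdot\mathbf 1\{t-t_u\ge\sqrt n\}$. Every party in $\mathrm{Succ}[t']$ succeeded at some time $\tau\le t'\le t$ and has $t_u<\tau$, so $t-t_u>t-\tau$. Since $B_\beta$ is non-increasing (\Cref{lem:basic-filter}(1)), each term is at most $\min\{B_\beta(\sqrt n),\,B_\beta(t-\tau)\}$. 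This replaces the unknown wake-up times by the known success times, and the problem becomes bounding $\sum_{\tau}\min\{B_\beta(\sqrt n),B_\beta(t-\tau)\}$ over the success times $\tau\le t'$.

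Next I use goodness. Since $\mathrm{Rec}[t']$ is $(t_0,\mu,\delta)$-good, there are no successes in $I_0=[1,t_0)$ and at most $\mu\delta$ successes in each block $I_i=[t_0+(i-1)\delta,t_0+i\delta)$. I group the successes by block and measure distance backward from $t$. The blocks that contain or are adjacent to $t$ (at most two, plus possibly the partial block cut off by $t'$) I handle crudely. Each has at most $\delta$ successes, each contributing at most $B_\beta(\sqrt n)$, which gives the $3B_\beta(\sqrt n)\delta$ term.

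For every other block $I_i$, all its points $\tau$ satisfy $t-\tau\ge j\delta$, where $j\ge1$ is the number of whole blocks separating $I_i$ from $t$. Each such block contributes at most $\mu\delta\,B_\beta(j\delta)$, and since $B_\beta$ is non-increasing this is at most $\mu\sum_{x=(j-1)\delta+1}^{j\delta}B_\beta(x)$. Summing over $j$ then gives at most $\mu\sum_{x=1}^{t}B_\beta(x)$. I need the sum to start where $B_\beta=\ln^\beta x/x$, that is, beyond the region $x\le e^\beta$. This is where the hypothesis $\delta\ge e^\beta$ enters: the first far block already has distance at least $\delta\ge e^\beta$, so the index range can be shifted to begin at $\delta$ rather than $1$. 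The comparison sum is then $\mu\sum_{x=\delta}^{t}\ln^\beta x/x\le\mu\int_{\delta-1}^{t}\ln^\beta x/x\,dx\le\mu\ln^{\beta+1}t/(\beta+1)$. The function $\ln^\beta x/x$ is decreasing on $[e^\beta,\infty)$, which justifies comparing the sum with the integral.

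The main obstacle is the bookkeeping at the boundary. I have to align the block structure with distances from $t$ precisely enough that the near blocks total at most $3\delta$ successes, and that after the one-block shift each far block's bound $\mu\delta B_\beta(j\delta)$ is dominated by a disjoint length-$\delta$ stretch of the integral lying in the decreasing regime. I would first set $j$ to be the gap in whole blocks, dropping the adjacent block into the crude term. I would then check that the integral's lower limit stays at least $e^\beta-1$, so that the $\beta$-dependent constants are absorbed.
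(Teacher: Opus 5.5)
Your proposal is correct and is essentially the paper's proof. Like the paper, you bound each term by $B_\beta(\max(t-t_u^{\mathrm{succ}},\sqrt n))$ using monotonicity of $B_\beta$, and charge the three blocks $I_{k-2},I_{k-1},I_k$ nearest to $t$ crudely by $3\delta B_\beta(\sqrt n)$. You then dominate each remaining block's $\mu\delta B_\beta(j\delta)$, for $j\ge 2$, by $\mu\int_{(j-1)\delta}^{j\delta}\ln^\beta x/x\,dx$, which is valid because $(j-1)\delta\ge\delta\ge e^\beta$.

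Two small clean-ups:
\begin{itemize}
\item The block containing $t'$ needs no special treatment. Goodness of $\mathrm{Rec}[t']$ caps it at $\mu\delta$ like any other block, so the crude term is exactly those three blocks.
\item Use the blockwise comparison from your last paragraph. The pointwise bound $\sum_{x\ge\delta}\le\int_{\delta-1}$ can reach below $e^\beta$, where the integrand is not decreasing, and it is unnecessary.
\end{itemize}
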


\begin{proof}
By definition,

$$
\sigma^{\mathrm{low}_\beta^{(\ge \sqrt{n})}}[t;\mathrm{Succ}[t']]=\sum_{u\in \mathrm{Succ}[t']}p^{\mathrm{low}_\beta^{(\ge \sqrt{n})}}(t-t_u).
$$
    
    Suppose $t\in I_k$. If a party $u\in \mathrm{Succ}[t']$ succeeds at time $t_u^{\success}$, then $t_u<t_u^{\success}$ and $p^{\mathrm{low}_\beta^{(\ge \sqrt{n})}}(t-t_u)\le B_\beta(\max(t-t_u^{\success},\sqrt{n}))$. So the contribution of parties that succeeded in $I_i$ for each $i \le k-3$ is at most $\mu \delta \cdot B_\beta((k - i - 1)\delta)$. For $i \in \{k-2, k-1, k\}$, their total contribution is at most $3\delta \cdot B_\beta(\sqrt{n})$. So
    
    $$
    \begin{aligned}
        \sigma^{\mathrm{low}_\beta^{(\ge \sqrt{n})}}[t;\mathrm{Succ}[t']]&\le 3\delta\cdot B_\beta(\sqrt{n})+\mu\delta\sum_{i=2}^{k-1}\frac{\ln^{\beta}(i\delta)}{i\delta}\\
        &\le 3\delta\cdot B_\beta(\sqrt{n})+\mu\int_{\delta}^t\frac{\ln^{\beta}x}{x}\mathrm{d}x\\
        &\le 3\delta\cdot B_\beta(\sqrt{n})+\frac{\mu \ln^{\beta+1}t}{\beta+1}.
    \end{aligned}
    $$

\end{proof}

\begin{lemma}\label{lem: memoryless protocol main lemma}
    Let $n$ be sufficiently large, $t_0=\lfloor n/\ln^2 n\rfloor$, $\delta=\lfloor \ln^{2\beta+4}n\rfloor$ and $\mu=(1+e\cdot 2\beta\ln\ln n)e^{-2\beta\ln\ln n}$.  Suppose there exists an adversary $\calA[n]$ and an integer $T$ with $t_0<T\le n^2$ such that
    \begin{itemize}
        \item $\hat\sigma[t]\ge 10\ln n$ for all $t\in [1,t_0]$, and
        \item $\hat\sigma^{\mathrm{low}_\beta^{(\ge\sqrt{n})}}[t]\ge 5\beta\ln\ln n$ for all $t\in (t_0,T]$.
    \end{itemize}
    Then for any $k\ge 1$, 
    the following hold:

\begin{enumerate}
    \item Let $E_i$ be the event that the success record is $(2\mu, I_i)$-good. If events $E_0,E_1,\ldots, E_{k-1}$ all happen, then for any $t\in I_k\cap [1,T]$, $\sigma^{\mathrm{low_\beta^{(\ge \sqrt{n})}}}[t;A[t\mid t_0+(k-1)\delta]]\ge \sigma^{\mathrm{low_\beta^{(\ge \sqrt{n})}}}[t]\ge 4\beta\ln\ln n$.
    
    \item If $I_k\subseteq [t_0,T]$, then $\Pr[E_k\mid E_0\cap E_1\cap \cdots \cap E_{k-1}]\ge 1-e^{-\ln^3 n}$. As a consequence, $\Pr[E_0\cap E_1\cdots\cap E_k]\ge 1-n^{-6}-k\cdot e^{-\ln^3 n}\ge 1-2n^{-6}$, and $\Pr\left[\sigma^{\mathrm{low}_\beta^{(\sqrt{n})}}[t] \ge 4\beta\ln\ln n\right]\ge 1-2n^{-6}$ for all $t\in[t_0,T]$.
\end{enumerate}
\end{lemma}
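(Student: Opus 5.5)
Write $\filter=\low^{(\ge\sqrt{n})}$ and $b=B_\beta(\sqrt n)$. The plan is to prove Part~1 deterministically, using Fact~\ref{fact:trivial facts about sigma} and Lemma~\ref{lem: upper bound of sigma loss}. Part~2 then follows by combining Part~1 with Lemma~\ref{lem: good success record in an interval} and chaining conditional probabilities.

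\textbf{Part 1.} Fix $t\in I_k\cap[1,T]$. The first inequality is monotonicity. Since $t_0+(k-1)\delta\le t$, we have $A[t]\subseteq A[t\mid t_0+(k-1)\delta]$, and every summand $p^{\filter}$ is nonnegative. For the second inequality, Fact~\ref{fact:trivial facts about sigma} with $t'=t$ gives
\[
\sigma^{\filter}[t]=\hat\sigma^{\filter}[t]-\sigma^{\filter}[t;\mathrm{Succ}[t]].
\]
I bound the loss term by Lemma~\ref{lem: upper bound of sigma loss}, applied with density $2\mu$. This needs $\mathrm{Rec}[t]$ to be $(t_0,2\mu,\delta)$-good.

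Goodness on $I_0,\dots,I_{k-1}$ is exactly $E_0,\dots,E_{k-1}$. The partial interval $I_k\cap[1,t]$ is handled by the same crude ``$3b\delta$'' term that the lemma's proof already allots to the last three intervals. So I will rerun that proof with $k$ replaced by the index of $t$, which puts $I_k$ among the three crudely bounded intervals. This gives
\[
\sigma^{\filter}[t;\mathrm{Succ}[t]]\le 3b\delta+\frac{2\mu\ln^{\beta+1}t}{\beta+1}.
\]
There is also a subtlety when $t\le t_0$, i.e.\ $k$ small. There $E_0$ means there are no successes at all, so the loss is $0$. In that regime $\hat\sigma[t]\ge 10\ln n$ should be used, though the filtered version is only needed for $t>t_0$.

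Now plug in the parameters. We have $b=\ln^\beta\sqrt n/\sqrt n$ and $\delta\le\ln^{2\beta+4}n$, so $3b\delta=O(\ln^{3\beta+4}n/\sqrt n)=o(1)$. Also $\mu=(1+2e\beta\ln\ln n)\ln^{-2\beta}n$, and $t\le T\le n^2$ gives $\ln t\le 2\ln n$. Hence
\[
\frac{2\mu\ln^{\beta+1}t}{\beta+1}=O\!\left(\frac{\ln\ln n\cdot\ln^{\beta+1}n}{\ln^{2\beta}n}\right)=o(1),
\]
using $\beta\ge 10$. Therefore $\sigma^{\filter}[t]\ge 5\beta\ln\ln n-o(1)\ge 4\beta\ln\ln n$.

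\textbf{Part 2.} Condition on $H_{t_0+(k-1)\delta}$ with $E_0,\dots,E_{k-1}$ holding. By Part~1, every $t\in I_k$ has $\sigma^{\filter}[t;A[t\mid t_0+(k-1)\delta]]\ge 4\beta\ln\ln n=:\ell$. Apply Lemma~\ref{lem: good success record in an interval} with this $\ell$. Its hypothesis $b\delta\le\ell/2$ holds since $b\delta=o(1)$. Its density is $(1+e\ell/2)e^{-\ell/2}$, which with $\ell/2=2\beta\ln\ln n$ is exactly the stated $\mu$. The lemma gives failure probability at most $e^{-\mu\delta/3}$. Since $\mu\delta\ge\ln^{-2\beta}n\cdot\ln^{2\beta+4}n/2=\ln^4 n/2$, this is at most $e^{-\ln^3 n}$ for large $n$.

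For the base case, $E_0$ (no success in $[1,t_0)$) holds with probability at least $1-n^{-8}\ge 1-n^{-6}$ by Lemma~\ref{lem:simple-adversary}(1). Chaining the conditional bounds and using $k\le T\le n^2$ gives
\[
\Pr[E_0\cap\cdots\cap E_k]\ge 1-n^{-6}-n^2e^{-\ln^3 n}\ge 1-2n^{-6}.
\]
The final contention statement for $t\in[t_0,T]$ follows by taking the $k$ with $t\in I_k$ and applying Part~1 on the event $E_0\cap\cdots\cap E_{k-1}$.

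\textbf{Main obstacle.} The delicate point is the bookkeeping in Part~1. Lemma~\ref{lem: upper bound of sigma loss} is stated for a record that is good on \emph{all} intervals, but only $E_0,\dots,E_{k-1}$ are known. The lemma's proof must be re-inspected to confirm that the current interval $I_k$, together with the two before it, is charged only the crude $3b\delta$ term and never the density assumption.
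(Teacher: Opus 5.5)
Your proposal is correct and follows the paper's proof. Part~1 bounds the contention lost to successes using \Cref{fact:trivial facts about sigma} and \Cref{lem: upper bound of sigma loss}, giving $5\beta\ln\ln n-o(1)$, and Part~2 applies \Cref{lem: good success record in an interval} with $\ell=4\beta\ln\ln n$ and chains conditional probabilities from the base case in \Cref{lem:simple-adversary}. The only difference is how your ``main obstacle'' is handled: the paper applies the Fact and \Cref{lem: upper bound of sigma loss} at $t'=t_0+(k-1)\delta$, where $\mathrm{Rec}[t']$ is literally $(t_0,2\mu,\delta)$-good by $E_0,\dots,E_{k-1}$, and charges the at most $\delta$ successes in $[t_0+(k-1)\delta,t)$ separately via \Cref{lem: determinstic bound of sigma decrease}. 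This costs a loss of $4B_\beta(\sqrt{n})\delta$ instead of your $3B_\beta(\sqrt{n})\delta$, but uses the loss lemma as a black box rather than reopening its proof.
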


\begin{proof}
Suppose $E_i$ happens for all $i<k$. Then $\mathrm{Rec}[t_0+(k-1)\delta]$ is $(t_0,2\mu,\delta)$-good. By \Cref{lem: upper bound of sigma loss}, \Cref{fact:trivial facts about sigma} and \Cref{lem: determinstic bound of sigma decrease}, for any $t\in I_k\cap [1,T]$,
\[
    \begin{aligned}
    \sigma^{\mathrm{low_\beta^{(\ge \sqrt{n})}}}[t;A[t]]&\ge\sigma^{\mathrm{low_\beta^{(\ge \sqrt{n})}}}[t;A[t\mid t_0+(k-1)\delta]]-B_\beta(\sqrt{n})\delta\\
    &\ge \hat\sigma^{\low^{(\ge\sqrt{n})}}[t]-\sigma^{\low^{(\ge\sqrt{n})}}\left[t;\mathrm{Succ}[t_0+(k-1)\delta]\right]-B_{\beta}(\sqrt{n})\delta\\
    &\ge \hat\sigma^{\mathrm{low}_\beta^{(\ge \sqrt{n})}}[t]-\left(3B_\beta(\sqrt{n})\delta+\frac{\mu\ln^{\beta+1}t}{\beta+1}\right)-B_\beta(\sqrt{n})\delta\\
     &\ge 5\beta\ln\ln n-4B_\beta(\sqrt{n})\delta-\frac{\mu\ln^{\beta+1}t}{\beta+1}.\\
    \end{aligned}
\]
For  sufficiently large $n$, we have $B_\beta(\sqrt{n})=\frac{\ln^{\beta}\sqrt{n}}{\sqrt{n}}$ and $\mu\le \ln^{-2\beta+1} n\le \ln^{-2\beta+1}n^2$. We also have $t\le T\le n^2$ by our assumption. Thus,
\[
\begin{aligned}
4B_\beta(\sqrt{n})\delta+\frac{\mu\ln^{\beta+1}t}{\beta+1}&\le\frac{4\ln^{\beta }\sqrt{n}}{\sqrt{n}}\ln^{2\beta+4}n+\frac{\ln^{2-\beta}n^2}{\beta+1}=o(1).
\end{aligned}
\]
So $\sigma^{\mathrm{low_\beta^{(\ge \sqrt{n})}}}[t;A[t\mid t_0+(k-1)\delta]]\ge \sigma^{\mathrm{low_\beta^{(\ge \sqrt{n})}}}[t;A[t]]\ge 4\beta\ln\ln n$.

By \Cref{lem: good success record in an interval}, if $\sigma^{\mathrm{low_\beta^{(\ge \sqrt{n})}}}[t;A[t\mid t_0+(k-1)\delta]]\ge 4\beta\ln\ln n$ for all $t\in I_k$, then $E_k$ happens with probability at least $1-e^{-\mu\delta/3}\ge 1-e^{-\ln^3 n}$ when $n$ is sufficiently large. So $\Pr[E_i\mid E_0\cap E_1\cap\cdots \cap E_{i-1}]\ge 1-e^{-\ln^3 n}$. 

Next, we prove by induction that $\Pr[E_0\cap E_1\cap \cdots\cap E_k]\ge 1-n^{-6}-k\cdot e^{-\ln^3 n}\ge 1-2n^{-6}$ when $n$ is large enough. The base case $\Pr[E_0]\ge 1-n^{-6}$ because of the fact that $\hat\sigma[t]\ge 10\ln n$ for $t\le t_0$ and~\Cref{lem:simple-adversary}. Then for $k\ge 1$,

$$
\begin{aligned}
\Pr[E_0\cap E_1\cap \cdots\cap E_k]&=\Pr[E_0\cap E_1\cap\cdots\cap E_{k-1}]\Pr[E_k\mid E_0\cap E_1\cap\cdots\cap E_{k-1}]\\
&\ge (1-n^{-6}-(k-1)e^{-\ln^3 n})(1-e^{-\ln^3 n})
\;\ge\; 1-n^{-6}-k\cdot e^{-\ln^3 n}.
\end{aligned}
$$
\end{proof}

\subsubsection{Proof of Latency Lower Bound} \label{subsubsec:process-analysis-3}

We are now prepared to restate and prove \Cref{thm:restricted-window-lower-bound}.

\RestrictedWindowLowerBound*
\begin{proof}
    Since $L_{\calD}(n,q(n))=o(n\log^{\beta/2-1}n)$, $s(\lfloor n/\ln^2 n\rfloor)\le \frac{40\ln n}{n}L_{\calD}(n,q(n))=o(\log^{\beta/2}n)$ by \Cref{lem:restricted-window}. It follows that $s(n^2)=o(\log^{\beta/2}n^2)=o(\log^{\beta/2} n)$.
    
    Now let $\calA$ be the adversary constructed by \Cref{thm: high-contention construction} with $\gamma=5\beta$ and
    $$
    T=\left\lfloor \dfrac{n(s^{\mathrm{low}_\beta}(\lfloor n/\ln^2 n\rfloor)-s^{\mathrm{low}_\beta}(\lfloor\sqrt{n}\rfloor))}{40\beta\ln\ln n}\right\rfloor\le n^2.
    $$
    Then $\calA$ and $T$ meet the requirements in \Cref{lem: memoryless protocol main lemma}. Consider a party $u^*$ activated at time $0$. For any time $t\in [1,T]$, if $\sigma^{\mathrm{low_\beta^{(\ge m)}}}[t]\ge 4\beta\ln\ln n$, then the probability that $u^*$ succeeds at time $t$ is at most $p(t)e^{-4\beta\ln\ln n}=\frac{p(t)}{\ln^{4\beta} n}$. By \Cref{lem: memoryless protocol main lemma}, $\sigma^{\mathrm{low_\beta^{(\ge m)}}}[t]\ge 4\beta\ln\ln n$ holds 
    with probability at least $1-2n^{-6}$ for all $t\in[\lfloor n/\ln^2 n\rfloor,T]$. So the probability that $u^*$ succeeds at time $t$ is at most $\frac{p(t)}{\ln^{4\beta}n}+2n^{-6}$ for all $t\in [\lfloor n/\ln^2 n\rfloor,T]$. For $[1,\lfloor n/\ln^2 n\rfloor)$, \Cref{lem:simple-adversary} implies that there are no successes in this interval with probability at least $1-n^{-8}$. Thus, the expected number of $u^*$'s successes in $[1,T]$ is at most    
    $$
    \frac{\lfloor  n/\ln^2 n\rfloor}{n^8}+\sum_{t=\lfloor n/\ln^2 n\rfloor}^{T}\left(\frac{p(t)}{\ln^{4\beta}n}+2n^{-6}\right)\le \frac{s(T)}{\ln^{4\beta}n}+2n^{-4}.
    $$
    By Markov's inequality, the probability that $u^*$ succeeds in $[1,T]$ is at most 
    $$
    \frac{s(T)}{\ln^{4\beta}n}+2n^{-4}\le\frac{s(n^2)}{\ln^{4\beta}n}+2n^{-4}=\frac{o(\ln^{\beta/2} n)}{\ln^{4\beta}n}+2n^{-4}=o(1).
    $$
    According to the definition of $\tilde{T}_{\calD}(n)$, we have $\tilde{T}_{\calD}(n)\ge T$ when $n$ is sufficiently large.
\end{proof}




\ignore{
\section{Upper Bound for Binary Exponential Backoff}
Binary Exponential Backoff (B.E.B.) is a simple Windowed-backoff protocol with $w_i=2^{i}-1$ for $i\ge 1$. That means that the length of the $i$-th window is $w_{i}-w_{i-1}=2^{i-1}$. Denote this protocol as $\calB$.
\begin{lemma}\label{lem:BEB-range}
For Binary Exponential Backoff $\calB$, sufficiently large $n$ and an arbitrary party $u^*$, denote $x_0$ as the first window such that $w_{x_0}-w_{x_0-1}\ge 100 n\log n$. We have:

\[
\Prob{C^{(u^*)}_{t_{u^*}+w_{x_0}+1}\mid C^{(u^*)}_{t_{u^*}+w_{x_0-1}+1}}\le (0.6)^{x-x_0} 
\]
\end{lemma}
\begin{proof}
Consider any $x\ge x_0$, and $r=w_{x}-w_{x-1}$.
Based on $\omega_{w_{x}+1}\in \calP{w_{x}+1}$, $u^*$ need to sample exactly one transmission in window $\lbrack w_{x-1}+1,w_x\rbrack$.
For other parties $v\not =u$, based on $\omega_{t_{u^*}+w_{x-1}+1}$, $v$ will sample $\log r+O(1)$ transmissions in $\lbrack t_{u^*}+w_{x-1}+1,t_{u^*}+w_x\rbrack$.
Therefore, at most $n(10+\log r)$ positions have at least one of other transmission, which implies $$\Prob{C^{(u^*)}_{t_{u^*}+w_{x_0}+1}\mid \omega_{t_{u^*}+w_{x_0-1}+1}}\le \frac{n(10+\log r)}{r}\le \frac{2(\log(n)+\log(r/n))/\log n}{r/(n\log n)}$$
$$\le \frac{2(1+\log_n(r/n))}{r/(n\log n)}\le \left(\frac{r}{w_{x_0}-w_{x_0-1}}\right)^{-0.9}\le 2^{-0.9(x-x_0)}\le (0.6)^{x-x_0}$$
\end{proof}

\begin{theorem}\label{thm:BEB}
For Binary Exponential Backoff Protocol $\calB$, we have the following results:
\begin{enumerate}[(a)]
\item High Probability Latency of B.E.B. $\whplty_{\calB}(n)=O(n2^{O(\sqrt{\log n})})$.
\item Expected Latency of B.E.B. $\explty_{\calB}(n)=O(n\log n)$
\end{enumerate}
\end{theorem}
\begin{proof}
Similarly, we consider an arbitrary party $u^{*}$ with activation time $0$. 
According to \Cref{lem:BEB-range}, we select a $x_0$ such that $w_{x_0}-w_{x_0-1}\ge 1000n\log n$.
Then for any $x>x_0+5$, 
$$\Prob{C^{(u^*)}_{t_{u^*}+w_{x}+1}}\le \prod_{x_0\le x'\le x}\Prob{C^{(u^*)}_{t_{u^*}+w_{x'}+1}\mid C^{(u^*)}_{t_{u^*}+w_{x'-1}+1}}\le (0.6)^{\sum_{i=0}^{x-x_0} i}\le (0.6)^{(x-x_0-1)^2/2}$$
Therefore, if we choose $x^*=x_0+\Omega(\sqrt{\log n})$, then, with probability $1-\Prob{C^{(u^*)}_{t_{u^*}+w_{x^*}+1}}\ge 1-n^{-2}$, the latency is less than $2^{x^*}=O(n2^{O(\sqrt{\log n})})$. In addition,
$$\explty_{\calB}(n)\le \Theta(2^{x_0})+\sum_{x\ge x_0+5} 2^x\Prob{C^{(u^*)}_{t_{u^*}+w_{x}+1}}\le \sum_{x\ge x_0+5} 2^x(0.4)^{x-x_0}=O(2^{x_0})=O(n\log n)$$

\end{proof}
}

\section{Upper Bounds in the \texorpdfstring{\LocalClock{}}{LocalClock} Model}\label{sect:upper-bounds}

In order to simplify the analysis of \ContentionResolution{} protocols, 
we reduce the problem to analyzing the length of a \emph{counter game}, 
for which the optimal adversarial strategy is obvious. 
In \Cref{sect:counter-games}, we bound the time of a counter 
game in terms of its parameters. In \Cref{sect:upper-bounds-exp-whp}, we apply this reduction
to bound the latency of \ContentionResolution{} protocols under
the metrics \emph{in expectation} and \emph{with high probability}. 

At a high level, we fix a special party $u^*$ and aim to evaluate its latency. For both expected latency and high-probability latency, we design a protocol and select an interval $\lbrack a, a + r)$, where $a \ge t_{u^*}$ and $r \ge n$. Our goal is to show that the probability that $u^*$ succeeds within this interval, conditioned on the event that it has not succeeded before time $a$, is either at least a constant (in the case of expected latency) or at least $1 - n^{-2}$ (in the case of high-probability latency). For a specific time slot $t \in \lbrack a, a + r)$, we focus on the quantity $\sigma[t] = \sum_{u \in A[t]} p(t - t_u)$, which represents the aggregate contention of the remaining parties at global time $t$. We consider the following three cases:

\begin{description}
    \item[Low Contention.] $\sigma[t] \le 1$: In this case, party $u^*$ succeeds with probability at least $p(t-t_{u^*})/4$, by \Cref{lem:prob-of-success}.
    \item[Medium Contention.] $1 < \sigma[t] \le \frac{1}{4} \log_2\log_2 n$: There is a non-negligible probability that at least one party succeeds.
    \item[High Contention.] $\sigma[t] > \frac{1}{4} \log_2\log_2 n$: This case cannot occur too many times, since the total sum of $\sigma[t]$ over the interval is bounded.
\end{description}

Rather than analyze the \emph{actual} dynamics of $\sigma$ over time, 
we effectively allow the adversary to \emph{choose} the value of $\sigma[t]$
in each time step, subject to some budget constraints, which can be modeled directly as a counter game; see \Cref{def:counter-game}.

\subsection{Counter Games}\label{sect:counter-games}

\begin{definition}[Counter Game]\label{def:counter-game}
The game is defined by
\begin{description}
    \item[Parameters.] Let $r,k,n_1,\ldots,n_{k-1}$ be positive integers and $c,\gamma_1,\ldots,\gamma_{k-1}$ be positive reals such that each $\gamma_i \in [c/r,1]$.

    \item[Initialization.] In the beginning $n_i'\gets n_i$,
    for each $i\in \{1,\ldots,k-1\}$.

    \item[Game Play.] In each round, based 
    on the state $(n_1',\ldots,n_{k-1}')$ the adversary plays some option $j\in [k]$.
    \begin{description}
        \item[When $j\in \{1,\ldots,k-1\}$:] With probability $\gamma_j$, set $n_j' \gets n_j'-1$.  If $n_j'<0$ the game ends.

        \item[When $j=k$:] With probability $c/r$ the game ends.
    \end{description}
    \item[Winning.] The adversary wins if the game lasts at least $r$ rounds.
\end{description}
\end{definition}

In a counter game, the adversary's optimal strategy is clear: play options $j\in[k-1]$ until the first $k-1$ counters are reduced to zero,
then play the option with the smallest probability of ending the game, which is
always $j=k$ in the cases we consider.

\begin{theorem}\label{thm:counter-game}
Fix any adversarial strategy and let $\mathcal{E}^*$ be the event 
that the adversary wins, i.e., the game lasts at least $r$ rounds.
Then
\[
\Prob{\mathcal{E}^*} \le \alpha + e^{-c(1-\frac{2}{r} \beta)},
\]
where 
$\alpha=\sum_{i} e^{-n_i/6}$ and 
$\beta=\sum_{i} \frac{n_i}{\gamma_i}$.
\end{theorem}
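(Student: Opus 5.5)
The game can last $r$ rounds in only two ways: some counter $i\in[k-1]$ is hit unusually often relative to the number of times it is played, or the adversary plays option $k$ many times and survives each play. Concretely, let $m_i$ be the number of rounds in which option $i$ is played during the first $r$ rounds, and let $m_k$ be the number of plays of option $k$. Define the bad event $\mathcal{B}_i$ as follows: at least $2n_i/\gamma_i$ rounds with option $i$ are played (among the first $r$), yet the counter $n_i'$ has been decremented at most $n_i$ times during the first $\lceil 2n_i/\gamma_i\rceil$ plays of option $i$. Off $\bigcup_i\mathcal{B}_i$, if the game lasts $r$ rounds then $m_i< 2n_i/\gamma_i$ for each $i$, because otherwise counter $i$ would fall below $0$ and end the game. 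Hence $m_k\ge r-\sum_i 2n_i/\gamma_i = r(1-\tfrac{2}{r}\beta)$.

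To bound $\Pr[\mathcal{B}_i]$, I will use a coupling to remove adaptivity. Attach to each counter $i$ an independent infinite sequence of Bernoulli($\gamma_i$) coins, where the $j$-th play of option $i$ consumes the $j$-th coin. Similarly, attach to option $k$ an independent sequence of Bernoulli($c/r$) coins. This construction produces exactly the law of the game under any adaptive strategy, since each play's outcome is a fresh coin independent of the past.

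Under this coupling, $\mathcal{B}_i$ implies that the first $\lceil 2n_i/\gamma_i\rceil$ coins of sequence $i$ contain at most $n_i$ ones. That sum has mean at least $2n_i$, so the multiplicative Chernoff lower tail with deviation $1/2$ (\Cref{sect:tail-bounds}) gives probability at most $e^{-(1/2)^2\cdot 2n_i/2}=e^{-n_i/4}\le e^{-n_i/6}$. Summing over $i$ yields $\alpha$. The assumption $\gamma_i\ge c/r$ is not needed here. It only ensures the budget accounting is meaningful and that option $k$ is the best option to play last.

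Finally, on the complement, winning requires the first $\lceil r(1-\tfrac2r\beta)\rceil$ coins of the option-$k$ sequence to all be $0$. This has probability at most $(1-c/r)^{r(1-2\beta/r)}\le e^{-c(1-\frac{2}{r}\beta)}$. The bound is vacuous, and still true, if $1-2\beta/r\le 0$.

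A union bound over the two cases gives the theorem. The main subtlety is the coupling step: the number of plays of each option is a stopping-time-like random quantity chosen adaptively. Fixing the coin sequences in advance and bounding the fixed-length prefixes avoids any martingale argument. I expect the rest to be routine.
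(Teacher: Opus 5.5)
Your proof is correct and follows the paper's argument. Both take a union bound over two kinds of events: some option $i\in[k-1]$ is played at least $2n_i/\gamma_i$ times without its counter going negative, or option $k$ is played at least $r-2\beta$ times without ending the game. Your independent coin-sequence coupling adds something the paper leaves implicit: it justifies applying the Chernoff bound and the $(1-c/r)^{r-2\beta}$ estimate when the number of plays of each option is chosen adaptively.
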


\begin{proof}
For each $j\in [k-1]$, let $\mathcal{E}_j$ be the event that the adversary plays option $j$ at least $2n_j/\gamma_j$ times and survives.
The expected number of decrements to counter $n_j'$ is at least $2n_j$, 
so we can upper bound $\Pr[\mathcal{E}_j] \leq e^{n_j/6}$ using a standard Chernoff bound; see~\Cref{sect:tail-bounds}.

Define $\mathcal{E}_k$ as the event in which the adversary plays option $k$ at least $r-2\beta$ times and survives. We have

\[
\Pr[\mathcal{E}_k] = (1-c/r)^{r-2\beta} \le e^{-c(1-\frac{2}{r} \beta)}.
\]

Observe that if the adversary survives $r$ rounds, at least one of the events $\mathcal{E}_1,\ldots,\mathcal{E}_k$ must occur. 
By a union bound, 
\begin{align*}
\Pr[\mathcal{E}^*] \leq \Pr[\mathcal{E}_1\cup\cdots\cup\mathcal{E}_k] 
\le \sum_{j=1}^k \Pr[E_j]
\le e^{-c(1-\frac{2}{r} \beta)} + \alpha.
\end{align*}
\end{proof}
\subsection{Upper Bounds for Expected Latency and High-probability Latency} \label{sect:upper-bounds-exp-whp}
\Cref{thm:counter-game} allows us to bound the probability that 
a party executing a \ContentionResolution{} protocol is unsuccessful 
for a period of time.  We construct a memoryless protocol $\calD$ to achieve the desired upper bound. Recall that $p(t-t_u)$ is the probability that an active party $u$ grabs the resource at local time $t-t_u$, independent of the history.

\begin{lemma}\label{lem:upper-ind-range}
Fix a memoryless protocol $\mathcal{D}$ and party $u^*\in [n]$.
Define $\mathcal{E}(u^*,t)$ to be 
the event that $u^*$ has 
yet to achieve success by global 
time $t$.  
Then
\[
\Pr\left[\mathcal{E}(u^*,a+r) \mid \mathcal{E}(u^*,a)\right] \leq
\exp\left(-c\left(1-\frac{8n(\sqrt{\log_2 n}+U)}{r\log_2\log_2 n}\right)\right)+n^{-20},
\]
where $t_{u^*}\leq a$, $p(j) \leq 1/2$ for all $j$,
$\sum_{j\in I} p(j) \leq U$ for any interval $I$ of length $r$, and $p(j) \geq \frac{4c}{r}$ for all $t\in [a,a+r)$.
\end{lemma}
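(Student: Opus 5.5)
We reduce to the counter game of \Cref{def:counter-game} and then invoke \Cref{thm:counter-game}. The condition is presumably $p(t-t_{u^*})\ge 4c/r$ for $t\in[a,a+r)$, i.e.\ $u^*$'s own grab probability over the window. Condition on $\mathcal{E}(u^*,a)$. For each slot $t\in[a,a+r)$ in which $u^*$ is still active, classify $t$ by the dynamic contention $\sigma[t]$, as in the three-case outline preceding \Cref{def:counter-game}.

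\textbf{Low contention,} $\sigma[t]\le 1$. Since all $p(j)\le 1/2$, \Cref{lem:prob-of-success}(1) says $u^*$ succeeds at $t$ with probability at least $p(t-t_{u^*})4^{-1}\ge c/r$. This is option $k$, which ends the game with probability at least $c/r$.

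\textbf{Medium contention,} $1<\sigma[t]\le\frac14\log_2\log_2 n$. By \Cref{lem:prob-of-success}(1), some party succeeds with probability at least $\sigma 4^{-\sigma}\ge 4^{-\frac14\log_2\log_2 n}=(\log_2 n)^{-1/2}=:\gamma_1$. This is option $1$, a counter with $n_1=n$, since at most $n$ parties can ever succeed. Once $n$ successes have occurred, $u^*$ itself must have succeeded, so the counter going negative ends the game.

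\textbf{High contention,} $\sigma[t]>\frac14\log_2\log_2 n$. Here I would bound the number of such slots deterministically. Summing over the window gives
\[
\sum_{t\in[a,a+r)}\sigma[t]\le\sum_u\sum_{t}p(t-t_u)\le nU,
\]
because each party's local-time window has length $r$. So there are at most $4nU/\log_2\log_2 n$ high slots. I would model these as option $2$ with $\gamma_2=1$ and $n_2=\lceil 4nU/\log_2\log_2 n\rceil$, which decrements deterministically. Equivalently, I would simply subtract them from the number of rounds available to option $k$.

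The reduction must hold against the adaptive process. At each round, conditioned on the history, the adversary-controlled quantity $\sigma[t]$ determines which bound applies, and the true process ends (success of $u^*$) or decrements a counter with at least the stated probabilities. So the probability that $u^*$ survives all $r$ rounds is at most the probability that some counter-game adversary wins. This coupling is the step needing the most care. I would formalize it by realizing each round's outcome with a uniform variable, so that the true events dominate the game events. I would also note that the game's option probabilities can be taken to be exactly $\gamma_j$ by thinning, and that $\gamma_1\ge c/r$ holds as \Cref{thm:counter-game} requires, since $r\ge n$ is large.

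Now apply \Cref{thm:counter-game}. First, $\alpha=e^{-n/6}+e^{-n_2/6}$; the $n_2$ term can be avoided by the deterministic treatment above, and in any case $\alpha\le n^{-20}$ for large $n$. Second,
\[
\beta=\frac{n_1}{\gamma_1}+\frac{n_2}{\gamma_2}\le n\sqrt{\log_2 n}+\frac{4nU}{\log_2\log_2 n}.
\]
Then $2\beta/r$ is at most $\frac{8n(\sqrt{\log_2 n}+U)}{r\log_2\log_2 n}$ after loosening constants. The $\sqrt{\log_2 n}$ term as written lacks the $1/\log\log n$ factor; I would either set the medium threshold so that $\gamma_1$ absorbs it, or accept a slightly different constant. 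This yields the claimed bound.
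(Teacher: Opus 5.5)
Your reduction matches the paper's proof: the same three contention regimes, option $k$ ending the game with probability $c/r$ when $\sigma[t]\le 1$, a counter of size about $n$ for medium-contention successes, and a deterministic counter $n_2=4nU/\log_2\log_2 n$ with $\gamma_2=1$ for high-contention slots. Your explicit coupling remark is, if anything, more careful than the paper's informal reduction. The gap is in the medium case, and it prevents you from reaching the stated inequality.

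You lower-bound $\sigma 4^{-\sigma}$ by $4^{-\sigma}$, which gives $\gamma_1=(\log_2 n)^{-1/2}$. The first part of $2\beta/r$ is then $2n\sqrt{\log_2 n}/r$, which exceeds the required $8n\sqrt{\log_2 n}/(r\log_2\log_2 n)$ by a factor $\frac14\log_2\log_2 n\to\infty$. Neither proposed remedy works:
\begin{itemize}
\item The discrepancy is not a constant, so ``accepting a slightly different constant'' does not give the lemma.
\item Moving the threshold $M$ does not help either. Raising $4^{-M}$ by a $\Theta(\log\log n)$ factor forces $M$ below $\frac14\log_2\log_2 n$ by $\Theta(\log\log\log n)$. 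That inflates the high-contention count $nU/M$ past $4nU/\log_2\log_2 n$, so the constant $8$ on the $U$ term is lost.
\end{itemize}

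The missing observation is that $x\mapsto x4^{-x}$ is decreasing for $x\ge 1/\ln 4$. On $(1,M]$ with $M=\frac14\log_2\log_2 n$ it is therefore minimized at $x=M$, so
\[
\gamma_1 = M4^{-M}=\frac{\frac14\log_2\log_2 n}{\sqrt{\log_2 n}}.
\]
The factor $\sigma$ you discarded is exactly the missing $\log\log n$. With this $\gamma_1$ you get $n_1/\gamma_1\le 4n\sqrt{\log_2 n}/\log_2\log_2 n$. Hence $\beta\le 4n(\sqrt{\log_2 n}+U)/\log_2\log_2 n$, and $1-2\beta/r$ is precisely the expression in the lemma.

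In the paper's two applications, $r=\Omega(n\log n/\log\log n)$, so your extra term $2n\sqrt{\log_2 n}/r$ is $o(1)$ and your weaker bound would suffice there. It does not, however, prove the lemma as stated.
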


\begin{proof}
Let $\sigma[t]=\sum_{u\in A[t]} p(t-t_u)$ be the
aggregate contention at global time $t$. 
Each time slot $t\in [a,a+r)$ falls into one of the following three cases.

\begin{description}
\item[Case 1:] $\sigma[t]\le 1$. By \Cref{lem:prob-of-success}, 
$u^{*}$ succeeds at time $t$ with probability at least \( p(t-t_{u^*})4^{-\sigma[t]}\ge \frac{c}{r} \).

\item[Case 2:] $1<\sigma[t]\le \frac{1}{4}\log_2\log_2 n$. Then the probability that \emph{some} party succeeds at time $t$ is at least \( \sigma[t]4^{-\sigma[t]} \ge 
\frac{\frac{1}{4}\log_2\log_2 n}{\sqrt{\log_2 n}} \).

\item[Case 3:] $\sigma[t]\ge \frac{1}{4}\log_2\log_2 n$. The probability
of success may be negligible. By definition of $U$,
$\sum_{t\in [a,a+r)} \sigma[t]\le nU$, so the \emph{number}
of times $t\in [a,a+r)$ in Case 3 is at most 
$\frac{nU}{\frac{1}{4}\log_2\log_2 n}$.
\end{description}

We map this situation onto a counter game with the following parameters.
There are $k=3$ options per play 
(corresponding to cases 2, 3, 1, respectively)
and the number of rounds is $r$;
play 1 has initial counter $n_1=n-1$ and success 
probability $\gamma_1 = \frac{1}{4}\log_2\log_2 n/\sqrt{\log_2 n}$;
play 2 has initial counter $n_2 = 4nU/\log_2\log_2 n$ 
and 
success probability $\gamma_2=1$;
play 3 ends the game with probability $c/r$.
Applying~\Cref{thm:counter-game}, we observe that
\begin{align*}
    \alpha &=e^{-n_1/6}+e^{-n_2/6} = \exp\left(-\Omega(n(1+U/\log\log n))\right) \le n^{-20},\\
    \beta &= n_1/\gamma_1+n_2/\gamma_2=\frac{n(\sqrt{\log_2 n}+U)}{\frac{1}{4}\log_2\log_2 n}.
\end{align*}
The probability that the adversary can survive $r$ rounds of the counter game is 
\[
\alpha + e^{-c(1-2\beta/r)} \leq \exp\left(-c\left(1-\frac{8n(\sqrt{\log_2 n}+U)}{r\log_2\log_2 n}\right)\right) +  n^{-20}.
\]
This is also an upper bound on the probability that $u^*$ 
fails to find a successful time slot in the interval $[a,a+r)$.
By modeling this situation as a counter game, we are effectively
letting the adversary \emph{choose} the aggregate contention 
$\sigma[t]$ as it likes, subject to a couple of budget constraints.
The aggregate contention summed over $[a,a+r)$ cannot exceed $nU$,
and the number of successes when 
$\sigma[t] \in (1,\frac{1}{4}\log_2\log_2 n]$ cannot exceed $n-1$,
for otherwise $u^*$ must have been successful.
\end{proof}

Let us now apply \Cref{lem:upper-ind-range} 
to bound the latency
of \ContentionResolution{} protocols.

\begin{theorem}[Restatement of \Cref{thm:LocalClock-exp-whp}, Part 1 and Part 2]\label{thm:expected-logn/loglogn}
There exists a memoryless acknowledgment-based \ContentionResolution{} protocol $\calD$ 
such that the expected latency 
$L_{\calD}^{\exp}(n)=O(n\log n/\log\log n)$.
\end{theorem}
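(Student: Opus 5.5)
The plan is to design a protocol whose access probabilities cover doubling intervals of local time, and then apply \Cref{lem:upper-ind-range} on each of them.

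\textbf{The protocol.} Define
\[
p(j)=\min\left\{\frac12,\ \frac{C\log_2 j}{j\log_2\log_2 j}\right\}
\]
for $j$ above a constant, with $p(j)=1/2$ for small $j$. Here $C$ is a large constant. The protocol is memoryless and does not depend on $n$.

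\textbf{Checking the hypotheses of \Cref{lem:upper-ind-range}.} Fix $u^*$ and WLOG take $t_{u^*}=0$. Consider the doubling intervals $[a,a+r)$ with $a=r=2^i$ for $2^i\ge r_0$, where $r_0=\Theta(n\log n/\log\log n)$.
\begin{itemize}
\item \emph{Lower bound on $p$.} For every $j\in[r,2r)$,
\[
p(j)\ge \frac{C\log r}{2r\log\log r}.
\]
So we may take $c=\dfrac{C\log_2 r}{8\log_2\log_2 r}$, which is at least a large constant.
\item \emph{Window sums.} Since $p(j)\approx C\log j/(j\log\log j)$, the sum over any window of length $r$ is dominated by its first $O(1)$ terms. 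The window starting at $j=1$ is the worst case, so
\[
U=O\!\left(\frac{\log^2 r}{\log\log r}\right).
\]
This is too big to use directly. The fix is to exploit that only the \emph{aggregate} $\sum_t\sigma[t]\le nU$ is used in Case~3 of the proof. Over $[a,2a)$, a party woken at time $t_u$ contributes at most $s(2a-t_u)$. Bounding this by $U$ gives the term $8nU/(r\log\log n)$, which is $O(1)$ only once
\[
r\gtrsim \frac{n\log^2 n}{(\log\log n)^2}.
\]
That is too weak for the target bound.
\end{itemize}

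\textbf{The main obstacle: reconciling the contention budget with the target.} We need, for $r\approx r_0$, that
\[
\frac{8n\bigl(\sqrt{\log_2 n}+U\bigr)}{r\log_2\log_2 n}\le \frac12 .
\]
This forces $U=O(\log n)$. So the protocol should instead be chosen with $s(x)=O(\log x)$, i.e.\ BEB-like, $p(j)=\min\{1/2,\ C/j\}$. Then:
\begin{itemize}
\item $U=O(\log r)$;
\item $c=\Theta(C)$ is a constant;
\item with $r\ge r_0=Kn\log n/\log\log n$ for $K$ large, the exponent is at most $-c/2$.
\end{itemize}
Hence, conditioned on $u^*$ still being active at time $2^i$, it fails on $[2^i,2^{i+1})$ with probability at most $e^{-c/2}+n^{-20}$. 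Choosing $C$ large makes this at most $1/4$. One subtlety: for $r\gg n$ we have $U=O(\log r)$, which grows, but the ratio $nU/r$ shrinks, so the bound only improves for larger $i$.

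\textbf{Summing to get the expectation.} Starting from $i_0=\lceil\log_2 r_0\rceil$, the success events on successive doubling intervals satisfy the conditional bound above. The conditioning is exactly $\mathcal{E}(u^*,a)$, so the bounds chain and
\[
\Pr\bigl[\text{latency}>2^{i_0+m}\bigr]\le 4^{-m}.
\]
Therefore
\[
\E[L]\le 2^{i_0+1}+\sum_{m\ge1}2^{i_0+m+1}4^{-m}=O(r_0)=O\!\left(\frac{n\log n}{\log\log n}\right),
\]
uniformly over adaptive adversaries, since \Cref{lem:upper-ind-range} allows arbitrary choices of $\sigma[t]$ subject to its budgets.

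The step to verify most carefully is the window-sum bound $U$ for $p(j)=C/j$ and its interplay with the factor $\sqrt{\log_2 n}$. That term needs $r\gtrsim n\sqrt{\log n}/\log\log n$, which is smaller than $r_0$, so it is harmless.
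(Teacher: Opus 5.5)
Your final argument (after discarding the $\log j/(j\log\log j)$ detour) is correct and is essentially the paper's proof: a BEB-like $\Theta(1/j)$ protocol, \Cref{lem:upper-ind-range} applied on geometrically growing windows beyond $\Theta(n\log n/\log\log n)$ to get a constant conditional failure probability, and a geometric sum for the expectation. The differences are cosmetic. The paper uses a step-function $p$ that is constant on each window, with fixed $c=10/4$, while you use $p(j)=\min\{1/2,C/j\}$ with a tunable $C$; this works provided $K$ is chosen after $C$, since $U\approx C\ln r$ grows with $C$.
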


\begin{proof}
The protocol $\mathcal{D}$ is defined by the probability function $p=p_{\calD}$:
\[
p(j) = \frac{1}{2^{\lceil\log_2 \lceil 1+j/10\rceil\rceil}} = \Theta(1/j) \in [0,1/2].
\]
We need to bound the expected latency for an arbitrary party $u^*\in [n]$, which, without loss of generality, is activated at time $t_{u^*}=0$. 
Let $x$ be any integer for which
$2^{x} > 100 n\log_2 n/\log_2\log_2 n$. 
Let $a$ be the first position such that $p(a)=\frac{1}{2^{x}}$. Then $p(j)=\frac{1}{2^{x}}$ for all $j\in [a,a+10\cdot 2^{x})$. 

We apply \Cref{lem:upper-ind-range} to the range 
$[a,a+r)$, where $r=10\cdot 2^{x}$.
Since $p(j)$ is monotonically nonincreasing, we can set $U=\sum_{j=1}^{r} p(j) \le \sum_{j=1}^{r} \min(\frac{10}j,\frac 12) \le 10\log_2 r$.
Since $p(j)\ge \frac{1}{2^x}$ for $j\in [a,a+r)$,
we may set $c=\frac{10}{4}$. 
Recall that $\mathcal{E}(u^*,t)$ is the event that $u^*$ has
yet to achieve success by global time $t$.
According to \Cref{lem:upper-ind-range}, we have:

\begin{align*}
\Pr\left[\mathcal{E}(u^*,a+r) \mid \mathcal{E}(u^*,a)\right] &\le \exp\left(-c\left(1-\frac{8n(\sqrt{\log_2 n}+U)}{r\log_2\log_2 n}\right)\right)+n^{-20}\\
&\le \exp\left(-\frac{10}{4}\left(1-\frac{8n(\sqrt{\log_2 n}+10\log_2 r)}{r\log_2\log_2 n}\right)\right)+n^{-20}\\
&\le \exp\left(-\frac{10}{4}\left(1-\frac{90n\log_2 r/\log_2\log_2 n}{r}\right)\right)+n^{-20}\\
&\le \exp\left(-\frac{10}{4}\cdot \frac{9}{10}\right)+n^{-20}\le 0.2.
\end{align*}

The second last inequality is due to $\frac{90n\log_2 r/\log_2\log_2 n}{r}\le \frac{1}{10}$ for sufficiently large $n$ and 
$r > 1000 n\log_2 n/\log_2\log_2 n$.

We are now prepared to analyze the expected latency of the protocol.
Let $x_0$ be the \emph{minimum} value such that 
$2^{x_0}\ge 100n\log_2 n/\log_2\log_2 n$, and 
$l_x$ as the first time slot for which 
$p(l_x-t_{u^{*}}) = \frac{1}{2^{x}}$
Then $u^*$'s expected latency is
\begin{align*}
\Expec{L_{\calD,\calA}^{(u^*)}} 
&\le \left(\sum_{1\le x'< x_0} 10\cdot 2^{x'}\right)
+\sum_{x\ge x_0} 10\cdot 2^{x}\cdot \Prob{\mathcal{E}(u^*,l_x)}\\
&= O(2^{x_0}) + \sum_{x\ge x_0} 10\cdot 2^{x}\Prob{\mathcal{E}(u^*,l_x)}\prod_{x_0< x'\le x}\Prob{\mathcal{E}(u^*,l_{x'}) \mid  \mathcal{E}(u^*,l_{x'-1})}\\
&\le O(2^{x_0})+\sum_{x\ge x_0} 10\cdot 2^{x} (0.2)^{x-x_0}\\
&\le O(2^{x_0})\le O(n\log n/\log\log n).
\end{align*}
\end{proof}

The following theorem was first claimed in the extended abstract
of De Marco and Stachowiak~\cite{MarcoS17}, and later proved
by De Marco, Kowalski, and Stachowiak~\cite{demarco2022timeenergyefficientcontention}.
Here we provide an alternate proof that depends on \Cref{lem:upper-ind-range}'s reduction to a counter game.

\begin{theorem}[Cf.~\cite{demarco2022timeenergyefficientcontention,MarcoS17}]\label{thm:whp-latency-upper-bound}
There exists a memoryless \ContentionResolution{} protocol 
that guarantees latency $O(n\log^2 n/\log\log n)$ with high probability.
\end{theorem}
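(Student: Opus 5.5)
The plan mirrors the proof of \Cref{thm:expected-logn/loglogn}, but with a protocol whose access probability is larger by a $\log$ factor, so that each window contributes $\Theta(\log n)$ expected attempts rather than $\Theta(1)$. Concretely, I would take
\[
p(j)=\min\left\{\frac12,\ \frac{\kappa\,2^{-x}\,x}{1}\right\}\quad\text{for } j\in[10\cdot 2^{x-1},10\cdot 2^{x}),
\]
i.e.\ $p(j)=\Theta(\log j/j)$, piecewise constant on dyadic windows, with a large constant $\kappa$. Fix $u^*$ with $t_{u^*}=0$. Let $x$ be the least integer with $2^x\ge \Lambda\, n\log^2 n/\log\log n$ for a large constant $\Lambda$. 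Set $a=10\cdot 2^{x-1}$ and $r=10\cdot 2^{x-1}$, so that $[a,a+r)$ is one window of constant probability $p=\kappa x/2^x$.

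I would then invoke \Cref{lem:upper-ind-range} on $[a,a+r)$. Any interval of length $r$ has $\sum p(j)\le\sum_{j\le r}\min(1/2,O(\log j/j))=O(\log^2 r)$, so $U=O(\log^2 n)$ since $r=n\,\polylog(n)$. The condition $p(j)\ge 4c/r$ holds with $c=\Theta(\kappa x)=\Theta(\kappa\log n)$. The lemma then bounds the failure probability on $[a,a+r)$ by
\[
\exp\left(-c\left(1-\frac{8n(\sqrt{\log_2 n}+O(\log^2 n))}{r\log_2\log_2 n}\right)\right)+n^{-20}.
\]
Since $r=\Theta(\Lambda n\log^2 n/\log\log n)$, the bracket is at least $1/2$ once $\Lambda$ is large. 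The bound therefore becomes $e^{-c/2}+n^{-20}\le n^{-3}$ when $\kappa$ is large. Consequently $u^*$ succeeds by time $a+r=O(n\log^2 n/\log\log n)$ with probability $1-n^{-2}$, for any adaptive adversary, because the counter-game reduction handles adaptivity.

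The main obstacle is the interaction between the $U$ term and $r$. In \Cref{thm:expected-logn/loglogn} we had $U=O(\log r)$ and $c=\Theta(1)$. Here $U=\Theta(\log^2 r)$, and we need $c=\Theta(\log n)$ to be compatible with $p(j)\le 1/2$ and with $r$ dominating $nU/\log\log n$. These requirements are exactly what force $r\approx n\log^2 n/\log\log n$, so the constants $\kappa$ and $\Lambda$ must be chosen in the right order: $\kappa$ first, then $\Lambda$ depending on $\kappa$.

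A further point is that \Cref{lem:upper-ind-range} assumes $p(j)\le 1/2$ everywhere, which holds by the $\min$. It also requires the high-contention budget $nU$ to cover all $n$ parties over an arbitrary length-$r$ interval. This is fine because the dyadic windows have length at most $r$ before index $a+r$, and $\sum_{j\le 2r}p(j)=O(\log^2 r)$ still holds.
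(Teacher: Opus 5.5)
Your proposal is correct and is essentially the paper's proof: the same piecewise-constant protocol $p(j)=\Theta(\log j/j)$ on dyadic windows, and a single application of \Cref{lem:upper-ind-range} to one window on which $p$ is constant, with $U=O(\log^2 r)$, $c=\Theta(\log n)$, and $r=\Theta(n\log^2 n/\log\log n)$. Your constants are actually cleaner than the paper's, because you take $r$ to be exactly one window and add the tunable constant $\kappa$ (fixed before $\Lambda$); the paper's interval of length $10\cdot 2^x$ spans two windows, so its hypothesis $p(j)\ge 4c/r$ holds there only after halving $c$.
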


\begin{proof}
The protocol is defined by the memoryless distribution $\calD$, where 
\[
p(j)=\frac{\lceil\log_2 \lceil1+j/10\rceil\rceil}{2^{\lceil \log_2 \lceil1+j/10\rceil\rceil}} = \Theta(\log j/j) \in [0,1/2],
\] 
for all $j\ge 1$. Let $u^{*}$ be an arbitrary party.

Let $x$ be any integer such that $2^{x} > 100 n\log_2^2 n/\log_2\log_2 n$. Let $a$ be the first position with $p(a-t_{u^*})=\frac{x}{2^x}$, and $r=10\cdot 2^{x}$.
We will apply \Cref{lem:upper-ind-range} to the interval 
$[a,a+r)$. Since $p(j)$ is monotonically non-increasing for integer $j\ge 1$, we can set 
\[
U=\sum_{j=1}^{r} p(j)\le \sum_{x=1}^{\lceil \log_2 \lceil 1+r/10\rceil\rceil} \frac{x}{2^x}\cdot 10\cdot 2^x\le 10\log_2^2 r.
\]
To conform to the requirements of \Cref{lem:upper-ind-range}, 
we set $c=\frac{10x}{4}$ so that 
$p(j)=\frac{x}{2^x}=\frac{4c}{r}$.
Now applying \Cref{lem:upper-ind-range}, we have

\begin{align*} 
\Prob{\mathcal{E}(u^*,a+r)}
&=\Prob{\mathcal{E}(u^*,a)}\cdot \Prob{\mathcal{E}(u^*,a+r) \mid \mathcal{E}(u^*,a)}\\
&\le \exp\left(-c\left(1-\frac{8n(\sqrt{\log_2 n}+U)}{r\log_2\log_2 n}\right)\right)+n^{-20}\\
&\le \exp\left(-\frac{10x}{4}\left(1-\frac{90n\log_2^2 r/\log_2\log_2 n}{r}\right)\right)+n^{-20}\\
&\le \exp\left(\mbox{$-\frac{10 x}{4}\frac{9}{10}$}\right)+n^{-20}\\
&=\exp\left(-\mbox{$\frac{9}{4}$} \log_2 \lceil 1+r/10\rceil\right)+n^{-20}\\
&\le \exp(-2\ln n)=n^{-2}.
\end{align*}
\end{proof}

\section{Conclusion and Open Problems}\label{sect:conclusion}

The first contribution of this paper is to explore the power of the \GlobalClock{} model for solving randomized \ContentionResolution,
in which parties perceive both the global clock and their local time since wake-up.  We proved that \GlobalClock{} is strictly stronger than \LocalClock{} by exhibiting an \emph{Elias code}-based protocol that achieves
near-linear latency
\[
O(n\zeta(4\log\log n))=O(n\log\log n\log^{(3)}n\cdots 2^{\log^* n}) = n(\log\log n)^{1+o(1)}.
\]
The most pressing question is now to prove or disprove \Cref{conj:global-clock}(1), i.e., 
whether there exists a \GlobalClock{} protocol with $O(n)$ latency, 
as well as its analogue \Cref{conj:global-clock}(2) 
for the unbounded setting.

\medskip 

In the \LocalClock{} model we established sharp bounds on the latency of memoryless protocols under both the In-Expectation and With-High-Probability objectives, and also proved that optimality under both objectives 
cannot be achieved simultaneously.
We believe the memoryless assumption is not critical, and that all of our lower bounds can be extended
to arbitrary acknowledgment-based protocols.  
It is well known that proving lower bounds for acknowledgment-based protocols is more difficult~\cite{GoldbergL25,MarcoS17,demarco2022timeenergyefficientcontention}, 
despite there being no obvious advantage to 
techniques that violate memorylessness.


\newcommand{\etalchar}[1]{$^{#1}$}

\appendix

\section{Asymptotic Notation: The Definition of \texorpdfstring{$\Omega$}{Ω}}\label{sect:asymptotic-notation-controvery}

Is saying $g$ cannot be $o(f)$ the same as saying $g=\Omega(f)$?  This goes back to a minor controversy of the 1970s on the proper definition of asymptotic notation for non-negative $\mathbb{Z}^+\to\mathbb{R}^+$ functions, particularly the definition of $\Omega$.  
    Everyone agreed that
    $g=O(f)$ means $g(n)\leq cf(n)$ for some $c>0$ and all sufficiently large $n$.  
    Aho, Hopcroft, and Ullman's 1974 text~\cite{AHU74} defined $\Omega$ as:
    \[
    g = \Omega_{\mathrm{AHU}}(f) 
    \mbox{ if $g(n) \geq cf(n)$ for some $c>0$ and \emph{infinitely many $n$}.}
    \]
    In 1976 Knuth~\cite{Knuth76} published a history of asymptotic notation and dated 
    the Aho-Hopcroft-Ullman definition of 
    $\Omega$ to a 1914 paper of Hardy and Littlewood~\cite{HardyL14}.
    Without advancing a specific argument, Knuth~\cite{Knuth76} 
    stated that a definition of $\Omega$ symmetric to $O$
    would prove to be most useful for computer scientists, namely 
    \[
    g=\Omega_{\mathrm{K}}(f)
    \mbox{ if $g(n)\geq cf(n)$ for some $c>0$ and all sufficiently large $n$.}
    \]
    Although Knuth's definition $\Omega_{\mathrm{K}}$ is most common in today's textbooks~\cite{CLRS22,KT05,DPV06}, most computer scientists operationally 
    use $\Omega$ to mean either $\Omega_{\mathrm{AHU}}$ or $\Omega_{\mathrm{K}}$, depending on context.
    
    The problem, as Aho, Hopcroft, and Ullman discussed~\cite{AHU74}, 
    is that some problems are not expected to be uniformly and simultaneously
    hard for all values of $n$.
    This is certainly true in \ContentionResolution, 
    where approximating ``$n$'' is the crux of the 
    problem.  For example, De Marco, Kowalski, and Stachowiak~\cite{demarco2022timeenergyefficientcontention} 
    assert a lower bound of $\Omega(n\log n/(\log\log n)^2)$, 
    \emph{by which they mean} $\Omega_{\mathrm{AHU}}(n\log n/(\log\log n)^2)$.
    It is perfectly consistent with~\cite{demarco2022timeenergyefficientcontention}
    and our lower bounds that there is a \ContentionResolution{} scheme with latency, 
    say, $O(n\log\log\log n)$ whenever $n$ is of the form $2^{2^{2^{2^k}}}$, 
    making it essentially impossible to say something meaningful using 
    Knuth's $\Omega_{\mathrm{K}}$ notation.
    
    In our view, when describing the asymptotic complexity of problems, $\Omega_{\mathrm{AHU}}$ 
    is more attractive 
    than $\Omega_{\mathrm{K}}$. 
    It is nearly always more useful for ``$\Omega$'' to be the logical negation of ``$o$'' than the logical converse of ``$O$.''

\section{Tail Bounds}\label{sect:tail-bounds}

\begin{lemma}[Classical Azuma-Hoeffding Bound, see e.g. \cite{DubhashiPanconesi09}] \label{lem:cla-azuma}
    Let $X_1, \dots, X_n$ be independent random variables such that $X_i \in [a_i, b_i]$ almost surely. Let $\mu=\sum_{i=1}^{n} \Expec{X_i}$, then for any $\epsilon > 0$,
    \[
    \Pr\left[\left(\sum_{i=1}^n X_i\right) \ge (1+\epsilon) \mu\right],
    \Pr\left[\left(\sum_{i=1}^n X_i\right) \le (1-\epsilon) \mu\right] 
    \le \exp\left(-\frac{2\epsilon^2\mu^2}{\sum_{i=1}^n (b_i - a_i)^2}\right).
    \]
\end{lemma}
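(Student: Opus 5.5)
This is the classical Azuma--Hoeffding (Hoeffding) inequality for independent bounded summands, and I would prove it by the standard Chernoff--Cramér exponential-moment method. Write $S=\sum_{i=1}^n X_i$, so that $\Expec{S}=\mu$. The target deviation is $t=\epsilon\mu$ in either direction, and the claimed exponent is exactly $2t^2/\sum_i (b_i-a_i)^2$. So it suffices to prove the two-sided Hoeffding bound
\[
\Pr[S-\mu\ge t],\ \Pr[S-\mu\le -t]\le \exp\!\left(-\tfrac{2t^2}{\sum_i (b_i-a_i)^2}\right)
\]
for all $t>0$, and then substitute $t=\epsilon\mu$. If $\mu\le 0$ the substitution still makes sense, since the right-hand side depends only on $t^2$ and the argument never used positivity of $t$ beyond $t>0$. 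For the lower tail, the degenerate cases can be handled separately or noted to give a trivial bound.

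For the upper tail, fix $\lambda>0$ and apply Markov's inequality to $e^{\lambda(S-\mu)}$:
\[
\Pr[S-\mu\ge t]\le e^{-\lambda t}\,\Expec{e^{\lambda(S-\mu)}}=e^{-\lambda t}\prod_i \Expec{e^{\lambda(X_i-\Expec{X_i})}}.
\]
The factorization uses independence. The key ingredient is \emph{Hoeffding's lemma}: if $Y\in[a,b]$ almost surely and $\Expec{Y}=0$, then $\Expec{e^{\lambda Y}}\le e^{\lambda^2(b-a)^2/8}$.

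To prove Hoeffding's lemma I would proceed in two steps.
\begin{enumerate}
\item By convexity of $y\mapsto e^{\lambda y}$ on $[a,b]$, we have $e^{\lambda y}\le \frac{b-y}{b-a}e^{\lambda a}+\frac{y-a}{b-a}e^{\lambda b}$. Taking expectations with $\Expec{Y}=0$ gives $\Expec{e^{\lambda Y}}\le e^{\varphi(h)}$, where $h=\lambda(b-a)$, $\theta=-a/(b-a)\in[0,1]$, and $\varphi(h)=-\theta h+\ln(1-\theta+\theta e^{h})$.
\item Check $\varphi(0)=\varphi'(0)=0$, and that $\varphi''(h)=\rho(1-\rho)\le 1/4$, where $\rho=\theta e^h/(1-\theta+\theta e^h)$. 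Taylor's theorem then yields $\varphi(h)\le h^2/8$.
\end{enumerate}
This second-derivative computation is the only step with real content, and I expect it to be the main obstacle, though it is short.

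Combining, $\Pr[S-\mu\ge t]\le \exp\bigl(-\lambda t+\frac{\lambda^2}{8}\sum_i(b_i-a_i)^2\bigr)$. Optimizing at $\lambda=4t/\sum_i(b_i-a_i)^2$ gives $\exp(-2t^2/\sum_i(b_i-a_i)^2)$. The lower tail follows by applying the same argument to $-X_i\in[-b_i,-a_i]$, whose ranges have the same widths. Setting $t=\epsilon\mu$ gives both displayed inequalities of \Cref{lem:cla-azuma}.
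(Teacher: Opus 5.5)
Your argument is the standard Chernoff--Cram\'er proof of Hoeffding's inequality via Hoeffding's lemma. The paper gives no proof of its own and simply cites Dubhashi--Panconesi, where this is the argument used. For $\mu>0$ every step you describe is correct: the convexity bound, $\varphi''=\rho(1-\rho)\le 1/4$, the choice $\lambda=4t/\sum_i(b_i-a_i)^2$, and the reflection $X_i\mapsto -X_i$ for the lower tail.

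The one incorrect claim is your treatment of $\mu\le 0$. If $\mu<0$ then $t=\epsilon\mu<0$, so the hypothesis $t>0$ fails. You need that hypothesis to take $\lambda=4t/\sum_i(b_i-a_i)^2>0$ in Markov's inequality; with $t<0$ the bound $e^{-\lambda t}\cdots$ exceeds $1$ for every $\lambda>0$.

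In fact the displayed inequalities are simply false in that case. Take $n=1$, $X_1\equiv -1$, and $[a_1,b_1]=[-1,0]$, so $\mu=-1$. Then $\Pr[X_1\ge(1+\epsilon)\mu]=\Pr[-1\ge -1-\epsilon]=1>e^{-2\epsilon^2}$. Likewise $\Pr[X_1\le(1-\epsilon)\mu]=\Pr[-1\le -1+\epsilon]=1$ for every $\epsilon>0$.

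So rather than asserting that the substitution still works, you should do three things. State that the lemma implicitly assumes $\mu\ge 0$, as in its intended use with nonnegative summands. Note that $\mu=0$ is trivial, since the right-hand side then equals $1$. Then run your argument with $t=\epsilon\mu>0$. The vague remark about "degenerate cases" of the lower tail can then be dropped, since for $\mu>0$ your reflection argument covers the lower tail for every $\epsilon>0$.
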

\begin{lemma}[Chernoff Bound]\label{lem:chernoff-bound}
    Let $X_1,\ldots,X_n$ be independent random variables with range $[0,v]$. Let $\mu=\sum_{i=1}^{n} \Expec{X_i}$, then for all $\epsilon>0$, 
\begin{enumerate}
\item $
    \Prob{\left(\sum_{i=1}^{n} X_i\right)\geq (1+\epsilon)\mu}\leq \left(\frac{e^{\epsilon}}{(1+\epsilon)^{1+\epsilon}}\right)^{\mu/v}
    $
\item $
     \Prob{\left(\sum_{i=1}^{n} X_i\right)\leq (1-\epsilon)\mu}\leq \left(\frac{e^{-\epsilon}}{(1-\epsilon)^{1-\epsilon}}\right)^{\mu/v}
    $
\end{enumerate}
\end{lemma}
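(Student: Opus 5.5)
We will reduce to the classical $[0,1]$-valued Chernoff bound by rescaling, and then run the standard moment-generating-function argument. Set $Y_i = X_i/v\in[0,1]$ and $\mu' = \sum_i \E[Y_i] = \mu/v$. The event $\sum_i X_i \ge (1+\epsilon)\mu$ is the same as $\sum_i Y_i \ge (1+\epsilon)\mu'$, and similarly for the lower tail. So it suffices to prove both bounds with $v=1$ and exponent $\mu'$.

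The key step is to bound the moment generating function of a single $Y_i$. Since $y\mapsto e^{\lambda y}$ is convex on $[0,1]$, we have $e^{\lambda y}\le 1 + y(e^{\lambda}-1)$ for every real $\lambda$ and every $y\in[0,1]$. Taking expectations gives
\[
\E[e^{\lambda Y_i}] \le 1+\E[Y_i](e^\lambda-1)\le \exp\bigl(\E[Y_i](e^\lambda-1)\bigr).
\]
By independence, $\E[e^{\lambda\sum_i Y_i}]\le \exp\bigl(\mu'(e^\lambda-1)\bigr)$. This convexity step is the only place where the range $[0,v]$ is used, and it is what we expect to be the main point needing care. Everything else is bookkeeping.

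\emph{Upper tail.} Take $\lambda=\ln(1+\epsilon)>0$ and apply Markov's inequality to $e^{\lambda\sum_i Y_i}$:
\[
\Pr\Bigl[\textstyle\sum_i Y_i\ge(1+\epsilon)\mu'\Bigr]\le \exp\bigl(\mu'(e^\lambda-1)-\lambda(1+\epsilon)\mu'\bigr).
\]
With this choice of $\lambda$ the right-hand side equals $\bigl(e^{\epsilon}/(1+\epsilon)^{1+\epsilon}\bigr)^{\mu'}$, which is Part 1.

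\emph{Lower tail.} For $0<\epsilon<1$, take $\lambda=\ln(1-\epsilon)<0$ and apply Markov's inequality to $e^{\lambda\sum_i Y_i}$. The event $\sum_i Y_i\le(1-\epsilon)\mu'$ is the event $e^{\lambda\sum_i Y_i}\ge e^{\lambda(1-\epsilon)\mu'}$, so its probability is at most
\[
\exp\bigl(-\epsilon\mu' - (1-\epsilon)\ln(1-\epsilon)\,\mu'\bigr)=\bigl(e^{-\epsilon}/(1-\epsilon)^{1-\epsilon}\bigr)^{\mu'}.
\]
This is Part 2.

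The degenerate cases remain. If $\epsilon>1$, the event $\sum_i Y_i\le(1-\epsilon)\mu'$ has probability zero unless $\mu'=0$, since each $Y_i\ge 0$. If $\epsilon=1$, we read the bound via the convention $0^0=1$, which gives $e^{-\mu'}$. This matches $\Pr[\sum_i Y_i=0]=\prod_i\Pr[Y_i=0]\le\prod_i(1-\E[Y_i])\le e^{-\mu'}$, where $\Pr[Y_i=0]\le 1-\E[Y_i]$ holds because $Y_i\le 1$. In both cases the bound holds trivially. Finally, undoing the rescaling $\mu'=\mu/v$ gives the stated exponents $\mu/v$.
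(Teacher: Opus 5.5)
Your proof is correct. The paper does not prove this lemma at all; it is quoted in the appendix as a standard tail bound. Your argument is exactly the standard textbook proof: rescale to $[0,1]$, bound each moment generating function using convexity of $y\mapsto e^{\lambda y}$ and $1+x\le e^x$, then apply Markov's inequality with $\lambda=\ln(1\pm\epsilon)$. You also handle the edge cases $\epsilon\ge 1$ of Part~2 properly.
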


\begin{corollary}\label{coro:chernoff-bound-2}
Let $X_1,\ldots,X_n$ be independent random variables supported on $[0,v]$, 
and let $\mu = \sum_{i=1}^{n} \Expec{X_i}$. Then the following inequalities hold:
\begin{enumerate}
    \item $\Prob{\sum_{i=1}^{n} X_i \ge 2\mu} \le \exp\!\left(-\frac{\mu}{3v}\right)$,
    \item $\Prob{\sum_{i=1}^{n} X_i \le \mu/2} \le \exp\!\left(-\frac{\mu}{7v}\right)$.
\end{enumerate}
\end{corollary}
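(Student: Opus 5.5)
The plan is to specialize \Cref{lem:chernoff-bound} to $\epsilon=1$ for part~1 and to $\epsilon=1/2$ for part~2. In each case we then compare the resulting per-unit base against the target exponential by taking logarithms. No further probabilistic reasoning is required, since \Cref{lem:chernoff-bound} already covers independent variables with range $[0,v]$ and exponent $\mu/v$.

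For part~1, set $\epsilon=1$ in \Cref{lem:chernoff-bound}(1). This gives
\[
\Prob{\sum_i X_i\ge 2\mu}\le \left(\frac{e}{4}\right)^{\mu/v}=\exp\!\left(-(2\ln 2-1)\frac{\mu}{v}\right).
\]
Since $2\ln 2-1\approx 0.386\ge 1/3$ and $\mu/v\ge 0$, this is at most $\exp(-\mu/(3v))$.

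For part~2, set $\epsilon=1/2$ in \Cref{lem:chernoff-bound}(2). The base becomes $e^{-1/2}/(1/2)^{1/2}=\sqrt{2}\,e^{-1/2}$, whose logarithm is $-\tfrac12+\tfrac12\ln 2\approx -0.1534$. Since $0.1534\ge 1/7\approx 0.1429$, we obtain
\[
\Prob{\sum_i X_i\le \mu/2}\le \exp(-0.1534\,\mu/v)\le \exp(-\mu/(7v)).
\]

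The only step needing care is the numerical comparison of constants. I would state the two inequalities $2\ln2-1>1/3$ and $\tfrac12(1-\ln 2)>1/7$ explicitly, both using $\ln 2\approx 0.6931$. The degenerate case $\mu=0$ is trivial, because then every bound on the right-hand side equals $1$.
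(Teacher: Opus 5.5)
Your proposal is correct and matches the paper's proof: both specialize \Cref{lem:chernoff-bound} to $\epsilon=1$ and $\epsilon=1/2$. Both then check the constants $e/4\le e^{-1/3}$ and $\sqrt{2}\,e^{-1/2}\le e^{-1/7}$; you just make this numerical check explicit where the paper only asserts it.
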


\begin{proof}
\mbox{}\\
\noindent\emph{Proof of Part 1.}
Applying~\Cref{lem:chernoff-bound}~(i) with $\epsilon = 1$, we obtain
\[
\left(\frac{e^{\epsilon}}{(1+\epsilon)^{1+\epsilon}}\right)^{\mu/v}
= \left(\frac{e}{2^2}\right)^{\mu/v}
\le e^{-\mu/(3v)}.
\]

\noindent\emph{Proof of Part 2.}
Applying~\Cref{lem:chernoff-bound}~(ii) with $\epsilon = 1/2$, we obtain
\[
\left(\frac{e^{-\epsilon}}{(1-\epsilon)^{1-\epsilon}}\right)^{\mu/v}
= \left(\frac{e^{-1/2}}{(1/2)^{1/2}}\right)^{\mu/v}
\le e^{-\mu/(7v)}.
\]
\end{proof}

\section{Omitted Proofs}

\subsection{Proof of \Cref{lem:prob-of-success}}\label{sect:prob-of-success}

Let us recall the two claims of \Cref{lem:prob-of-success}.
For a global time $t$,  
$p_u = \Prob{X_{t-t_u}^{(u)}=1}$ 
is the probability that $u$ 
grabs the shared resource and 
$\sigma = \sigma[t] = \sum_{u\in A[t]} p_u$.

\begin{enumerate}
    \item If $p_u \in [0,1/2]$ for all $u\in A[t]$, then $\Prob{\left(\sum_{u\in A[t]} X_{t-t_u}^{(u)}\right)=1}\geq \sigma 4^{-\sigma}$, 
    and $u$'s probability of success is at 
    least $p_u4^{-\sigma}$.
    \item $\Prob{\left(\sum_{u\in A[t]} X_{t-t_u}^{(u)}\right)=1}\le \sigma e^{-\sigma+1}$
    \ and \ $\Prob{\left(\sum_{u\in A[t]} X_{t-t_u}^{(u)}\right)\le 1}\le e^{-\sigma}+\sigma e^{-\sigma+1}$.
\end{enumerate}

\begin{proof}[Proof of \Cref{lem:prob-of-success}]
\mbox{}\\
\noindent
\textbf{Part 1.}
We use the approximation $(1-x)\geq 4^{-x}$ for $x\in [0,1/2]$. 
The probability of success is:
\begin{align*}
\Prob{\left(\sum_{u\in A[t]} X_{t-t_u}^{(u)}\right)=1} 
    &= \sum_{u\in A[t]} p_u \prod_{v\in A[t]-\{u\}} (1-p_v)
    \ge \sigma 4^{-\sigma}.
\end{align*}
Moreover, $u$'s probability of success is $p_u\prod_{v\in A[t]-\{u\}}(1-p_v) \geq p_u4^{-\sigma}$.

\noindent
\textbf{Part 2.}
We know $1-x\le e^{-x}$ for any $x$, so 
\begin{align*}
\Prob{\left(\sum_{u\in A[t]} X_{t-t_u}^{(u)}\right)=1} &= \sum_{u\in A[t]} p_u\prod_{v\in A[t]\setminus \lbrace u\rbrace}(1-p_v) \leq \sigma e^{-\sigma+1},\\
\Prob{\left(\sum_{u\in A[t]} X_{t-t_u}^{(u)}\right)=0} &=\sum_{u\in A[t]} (1-p_u)\le e^{-\sigma}.
\end{align*}
\end{proof}
Note that when $\sigma\geq 2$, it is impossible to get a non-zero lower bound on the probability of success without imposing some non-trivial upper bound on the probabilities, such as $p_i \in [0,1/2]$.

\subsection{Proof of \Cref{lem:n0}(v)}\label{sect:smoothness-proofs}

\Cref{lem: smoothness of U} and \Cref{coro: smoothness of U} prove part (v) of \Cref{lem:n0}.

\begin{lemma}\label{lem: smoothness of U}
    Suppose $f(n)$ is nonnegative, monotone nondecreasing and $f(n+1)-f(n)\le f(n+1)/2$ when $n\ge n_0$. Fix some $0\le c<1$ and integer $n\ge n_0$. Let $n'$ be the largest integer such that $f(n')\le cf(n)$. If $f(n_0)\le c\cdot f(n)$, then $f(n')\ge cf(n)/2$.
\end{lemma}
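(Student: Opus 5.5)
The plan is a direct argument from the definition of $n'$ together with the ``doubling'' hypothesis $f(n+1)-f(n)\le f(n+1)/2$. This hypothesis is equivalent to $f(n+1)\le 2f(n)$ for all $n\ge n_0$. First I will check that $n'$ is well defined and satisfies $n'\ge n_0$. Since $f(n_0)\le c f(n)$, the integer $n_0$ belongs to the set $\{m : f(m)\le cf(n)\}$, so this set is nonempty.

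Next I will show the set is bounded above, so that a largest element exists. Because $f$ is monotone nondecreasing and $c<1$, every $m\ge n$ has $f(m)\ge f(n)$. If $f(n)>0$, then $f(n)>cf(n)$, so every element of the set is less than $n$. If $f(n)=0$, then monotonicity and nonnegativity force $f\equiv 0$ on $[n_0,n]$. In that case the desired conclusion $f(n')\ge 0=cf(n)/2$ holds trivially for any valid choice of $n'$. I will treat this degenerate case separately, noting that the paper's application has $f=U$, which is positive. So $n_0\le n'<n$ in the main case.

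Now use maximality. The integer $n'+1$ does not belong to the set, so $f(n'+1)>cf(n)$. Since $n'\ge n_0$, the doubling hypothesis applies at $n'$, giving $f(n'+1)\le 2f(n')$. Chaining these, $2f(n')\ge f(n'+1)>cf(n)$, and hence $f(n')\ge cf(n)/2$.

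The only real obstacle is bookkeeping: the degenerate $f(n)=0$ case, and making sure $n'\ge n_0$ so the doubling hypothesis is applicable at $n'$. The assumption $f(n_0)\le cf(n)$ supplies exactly the second point. The subsequent corollary for $U$ should then just verify $U(m+1)\le 2U(m)$ for large $m$, together with $U(n_0)\le cU(n)$ when $c\ge 20/n$, which follows from the growth of $U$.
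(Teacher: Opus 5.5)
Your proof is correct and is essentially the paper's argument: $f(n_0)\le cf(n)$ gives $n'\ge n_0$, maximality gives $f(n'+1)>cf(n)$, and the hypothesis applied at $n'$ (equivalently $f(n'+1)\le 2f(n')$) yields $f(n')\ge f(n'+1)/2>cf(n)/2$. Your extra check that $n'$ exists, with the degenerate case $f(n)=0$ handled separately, is bookkeeping that the paper leaves implicit.
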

\begin{proof}
    Since $f(n_0)\le cf(n)$, $n'\ge n_0$. By the definition of $n'$, we have $f(n'+1)>cf(n)$. So
    $$
    f(n')=f(n'+1)-(f(n'+1)-f(n'))\ge f(n'+1)/2> cf(n)/2.
    $$
\end{proof}

\begin{corollary}\label{coro: smoothness of U}
Let $q(n)$ be either $q(n)=1/2$ or $q(n)=n^{-2}$, 
and recall that
$U(n)=\lfloor \frac{n\ln(1/q(n))\ln n}{\ln\ln n}\rfloor$. 
For sufficiently large $n$ and a real number $c\in [20/n,1)$, let $n'$ be the largest integer such that $U(n')\le cU(n)$. Then
$$
U(n')\ge cU(n)/2.
$$
\end{corollary}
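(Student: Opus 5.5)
The plan is to apply \Cref{lem: smoothness of U} with $f=U$. We check its three hypotheses for sufficiently large arguments: $U$ is nonnegative, $U$ is monotone nondecreasing, and $U(m+1)-U(m)\le U(m+1)/2$. We also need the side condition $U(n_0)\le cU(n)$ for every $c\ge 20/n$.

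\emph{Monotonicity.} Write $g(x)=x\ln(1/q(x))\ln x/\ln\ln x$ for real $x$, so that $U=\lfloor g\rfloor$. For $q\equiv 1/2$ we have $g(x)=(\ln 2)\,x\ln x/\ln\ln x$. For $q(x)=x^{-2}$ we have $g(x)=2x\ln^2 x/\ln\ln x$. In both cases $\ln g$ has derivative $1/x+O(1/(x\ln x))>0$ for $x$ beyond an absolute constant, so $g$ is increasing there. Taking floors preserves monotonicity, so $U$ is nondecreasing for all $m\ge m_0$.

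\emph{Increment condition.} The mean value theorem gives $g(m+1)-g(m)\le \max_{[m,m+1]}g' = O(\ln^2 m)$. Therefore
\[
U(m+1)-U(m)\le g(m+1)-g(m)+1=O(\ln^2 m),
\]
while $U(m+1)\ge g(m+1)-1=\Omega(m/\ln\ln m)$. Hence the increment is at most $U(m+1)/2$ once $m\ge m_0$ for a suitable absolute constant $m_0$. Set $n_0=m_0$ in \Cref{lem: smoothness of U}.

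\emph{Side condition.} We need $U(m_0)\le cU(n)$ for all $c\ge 20/n$. Since $m_0$ is a constant, $U(m_0)$ is a constant. On the other hand $cU(n)\ge (20/n)\,U(n)\ge 20\,(\ln 2)\ln n/\ln\ln n - 20/n$, which tends to infinity. So for $n$ sufficiently large $U(m_0)\le cU(n)$, and \Cref{lem: smoothness of U} gives $U(n')\ge cU(n)/2$, where $n'$ is the largest integer with $U(n')\le cU(n)$. Since $U$ is unbounded, such a largest integer exists.

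\emph{Main obstacle.} The only delicate point is the small-argument behaviour of $U$: for small $x$, $\ln\ln x$ can be nonpositive and $U$ is not monotone there. This is why the argument uses $n_0=m_0$ and relies on the lower bound $c\ge 20/n$: the maximizer $n'$ is forced into the range $n'\ge m_0$, where the smoothness established above holds.
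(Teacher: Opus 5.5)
Your proposal is correct and follows the paper's proof. Both apply \Cref{lem: smoothness of U} with $f=U$, checking eventual monotonicity, the increment bound $U(m+1)-U(m)\le U(m+1)/2$, and the side condition $U(n_0)\le cU(n)$ (using $c\ge 20/n$); you additionally supply the routine estimates (derivative of $\ln g$, floor errors, mean value theorem) that the paper passes over as ``clearly.''
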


\begin{proof}
    Clearly $U(n+1)-U(n)=O(\frac{\ln(1/q(n))\ln n}{\ln\ln n})\le U(n+1)/2$ and $U(n)$ is monotone increasing when $n$ is large, say $n\ge n_0$. Since $c\ge 20/n$, $cU(n)=\Omega(\frac{\ln(1/q(n))\ln n}{\ln\ln n})\ge U(n_0)$ for sufficiently large $n$. The conclusion follows from \Cref{lem: smoothness of U}.
\end{proof}
\end{document}